\def\xvital{\textsf{key}}
\def\yvital{$\vee$-\textsf{key}}
\def\regular{{non-special}}
\def\nonregular{{special}}
\theoremstyle{plain}
\newtheorem{theorem}{Theorem}[section]
\newtheorem{proposition}[theorem]{Proposition}
\newtheorem{lemma}[theorem]{Lemma}
\theoremstyle{definition}
\newtheorem{definition}[theorem]{Definition}
\newtheorem{example}[theorem]{Example}
\theoremstyle{remark}
\newtheorem{remark}[theorem]{Remark}
\newcommand{\Z}{\mathbb{Z}}
\newcommand{\C}{\mathbb{C}}
\newcommand{\bW}{\mathbb{W}}
\newcommand{\bK}{\mathbb{K}}
\newcommand{\cS}{\mathcal{S}}
\newcommand{\cP}{\mathcal{P}}
\newcommand{\cW}{\mathcal{W}}
\newcommand{\cT}{\mathcal{T}}
\newcommand{\res}{\mathop{\rm res}}
\newcommand{\restr}[2]{\mathop{\big\lfloor_{{#1}\to {#2}}}}
\newcommand{\set}[1]{\llbracket {#1} \rrbracket}
\newcommand{\lcoord}{\xi}
\title
{Degenerate and irregular topological recursion}
\author[A.~Alexandrov]{A.~Alexandrov}
\address{A.~A.: Center for Geometry and Physics, Institute for Basic Science (IBS), Pohang 37673, Korea
}
\email{alex@ibs.re.kr}
\author[B.~Bychkov]{B.~Bychkov}
\address{B.~B.: Department of Mathematics, University of Haifa, Mount Carmel, 3498838, Haifa, Israel}
\email{bbychkov@hse.ru}
\author[P.~Dunin-Barkowski]{P.~Dunin-Barkowski}
\address{P.~D.-B.: Faculty of Mathematics, HSE University, Usacheva 6, 119048 Moscow, Russia; HSE--Skoltech International Laboratory of Representation Theory and Mathematical Physics, Skoltech, Bolshoy Boulevard 30 bld. 1, 121205 Moscow, Russia; and NRC “Kurchatov Institute” -- ITEP, 117218 Moscow, Russia}
\email{ptdunin@hse.ru}
\author[M.~Kazarian]{M.~Kazarian}
\address{M.~K.: Faculty of Mathematics, HSE University, Usacheva 6, 119048 Moscow, Russia; and Igor Krichever Center for Advanced Studies, Skoltech, Bolshoy Boulevard 30 bld. 1, 121205 Moscow, Russia}
\email{kazarian@mccme.ru}
\author[S.~Shadrin]{S.~Shadrin}
\address{S.~S.: Korteweg-de Vries Institute for Mathematics, University of Amsterdam, Postbus 94248, 1090GE Amsterdam, The Netherlands}
\email{S.Shadrin@uva.nl}
\begin{document}

\begin{abstract} We use the theory of $x-y$ duality to propose a new definition / construction for the correlation differentials of topological recursion; we call it \emph{generalized topological recursion}. This new definition coincides with the original topological recursion of Chekhov--Eynard--Orantin in the regular case and allows, in particular, to get meaningful answers in a variety of irregular and degenerate situations. \end{abstract}

\maketitle

\tableofcontents

\section{Introduction} 

\subsection{Various kinds of topological recursion} The theory of topological recursion \cite{CEO,EO-1st} emerges from the theory of matrix models and is originally due to Chekhov, Eynard, and Orantin (CEO). Its main construction is a recursive procedure that  produces a system of symmetric meromorphic $n$-differentials $\omega^{(g)}_n$ on $\Sigma^n$, $g\geq 0$, $n\geq 1$, $2g-2+n>0$, where $\Sigma$ is a Riemann surface called the spectral curve (by default, we have no assumptions on $\Sigma$, in particular throughout the text it does not have to be compact unless it is explicitly demanded in some statements). The initial data includes two meromorphic functions, $x$ and $y$, with quite different roles, and a meromorphic bi-differential $B$ on $\Sigma^2$ with the only pole along the diagonal with bi-residue $1$. These input data can be either further specified or generalized depending on the problem. 

The recursion itself is performed via certain local computations near the critical points of $x$; we recall the precise definition and its variations below. Here we just make a small survey of the typical local setups that one has to consider in various applications:
\begin{itemize}
	\item The critical points of $x$ (let us denote them by $q_1,\dots,q_N$), are simple; $y$ is regular at $q_i$ and $dy|_{q_i}\not=0$, $i=1,\dots,N$. This is the standard setup of the CEO recursion~\cite{EO-1st}. 
	\item In the standard setup one might relax the requirement that the critical points of $x$ are simple; they might be of arbitrary multiplicity, but the requirements for $dy$ are still the same. In this case, sometimes called {\em degenerate}, a definition was proposed by Bouchard and Eynard in~\cite{BE}. 
	\item In the standard setup one can relax the condition that $x$ and $y$ are meromorphic functions and assume only that $dx$ and $dy$ are meromorphic; this leads to the definition of \emph{logarithmic topological recursion}~\cite{ABDKS-logTR-xy}.
	\item In the standard setup one might relax the requirement that $y$ is regular at $q_i$. For instance, one might have a simple pole at $q_i$, $i=1,\dots,N$. This case, sometimes called {\em irregular}, was studied by Chekhov and Norbury~\cite{CN}.
	\item One can combine these two ideas of generalization, that is, both allow $x$ to have critical points $q_1,\dots,q_N$ of arbitrary multiplicity and $y$ have arbitrary meromorphic behavior at $q_i$, $i=1,\dots,N$. In this case, the suitable application of the Bouchard--Eynard formula might produce non-symmetric differentials, so some further conditions on local compatibility of $x$ and $y$ are required. This case was studied by Borot et al. in~\cite{BBCCN}. 
\end{itemize}
These setups are related to each other via natural limiting procedures, and the compatibility with the limits was studied in~\cite{limits}.

A particular problem that prompts us to look for a revision of the definition of topological recursion in the most general case is the fact that in the most general setup the standard formulas do not work in the sense that they do not produce symmetric differentials. In addition to these generalizations, as mentioned above, the concept of logarithmic topological recursion was proposed in~\cite{ABDKS-logTR-xy}, which captures the natural effects of $y$ having logarithmic singularities~\cite{hock2023xy}, and the goal is to have a universally working definition that would incorporate this generalization as well. 

\subsection{Towards a new definition through \texorpdfstring{$x-y$}{x-y} duality}

One of the recent developments (prompted in particular by the two-matrix models, by a huge amount of problems in enumerative geometry and  combinatorics universally resolved by topological recursion, as well as by applications in the free probability theory) is a thorough study of the behavior of the differentials $\omega^{(g)}_n$ under the so-called $x-y$ duality~\cite{EO-xy,borot2023functional,hock2022xy,hock2022simple,ABDKS1,hock2023laplace} and more general symplectic duality~\cite{BDKS2,bychkov2023symplecticdualitytopologicalrecursion,alexandrov2023topologicalrecursionsymplecticduality,ABDKS-log-sympl}. In a nutshell, the $x-y$ duality is a procedure that exchanges the roles of the functions $x$ and $y$ in the setup of topological recursion, and its effect can be captured by explicit closed formulas that also have natural interpretation in KP integrability~\cite{BDKS1,BDKS3,ABDKS3}; and symplectic duality is its natural generalization. Some earlier reincarnation of the $x-y$ duality in a special case is the theory of $p-q$ duality in string theory~\cite{Kharchev-Marshakov}.

An idea noted initially in~\cite{ABDKS1} suggests that the compatibility with the $x-y$ formulas and further control of emerging singularities can serve as a source of a new interpretation (in the regular cases) or a new definition (in degenerate and irregular cases) of topological recursion. So far this idea was tested in applications where $x$ and $y$ are allowed to have logarithmic singularities~\cite{hock2023xy,ABDKS-logTR-xy,ABDKS-log-sympl}. This led to the development of the so-called logarithmic topological recursion.

The purpose of the present paper is to explore this idea further and to revisit the definition of topological recursion using the compatibility with the $x-y$ duality basically as an axiom. We give a new definition, discuss its identification with the CEO topological recursion in the standard case and provide a comparison with the Bouchard--Eynard recursion in degenerate / irregular cases. 

\subsection{Advantages and properties of the new approach}
From the point of view of assumptions one has to impose on the initial data, the new approach is much more flexible than any other approach available so far: we do not have any assumptions on the behavior of $x$ and $y$. They might have overlapping poles and critical points, possible logarithmic singularities (that is, we only use $dx$ and $dy$ and they can be absolutely arbitrary meromorphic $1$-forms); the assumptions for them in the definition are entirely symmetric.  

The new definition is also very well compatible with the deformations of the initial data. It behaves nicely in families and provides concrete tools to trace what happens in certain degenerate limits. Furthermore, the new definition provides a greater degree of flexibility, enabling the discovery of novel natural solutions that were previously overlooked by traditional approaches. These solutions possess all the desirable properties typically associated with topological recursion.

The new definition also immediately has quite a few applications. First and foremost, it allows to uniformly extend a variety of results obtained in the previous papers of the authors under the technical assumption of being in general position. Also, it provides a proper context to a recent interpretation of the intersection theory of the so-called $\Theta^{(r,s)}$-classes obtained in~\cite{CGS}.


And last but not least: if the spectral curve has genus $0$, the differentials constructed by the new definition are always KP integrable in the sense of~\cite{ABDKS3}. This generalizes the result obtained in~\cite{alexandrov2024topologicalrecursionrationalspectral}. 

\subsection{Organization of the paper} 

Section~\ref{sec:GenTR} contains the main construction of the paper. We first recall there the precise definition of the original topological recursion in the standard local setup, and then we immediately introduce the new definition which we call the \emph{generalized topological recursion}, along with the basic statements of its properties. The rest of the paper is devoted to the analysis of this new definition. 

In Sections~\ref{sec:xyswap} and~\ref{sec:loop} we prove the two main basic properties of the new definition, namely, that it produces symmetric differentials and that it reproduces the original CEO topological recursion for a particular choice of the input data. To this end we need to develop some extra toolkit that is strongly tied to KP integrability~\cite{ABDKS1}: the formal properties of the $x-y$ swap formulas and the theory of the loop equations, both of which are introduced as abstract concepts (that is, not in direct connection to the definition of the generalized topological recursion) in the respective sections. More precisely,

\begin{itemize}
	\item In Section~\ref{sec:xyswap} we recall the basic setup of the $x-y$ duality as an abstract system of relations between two sets of multi-differentials. We explain the connection of these relations to the setup of generalized topological recursion and prove that the differentials of generalized topological recursion are symmetric in all their variables. We also prove that in the algebraic case (that is, under the assumption that the spectral curve is compact), the $x-y$ duality is compatible with the generalized topological recursion.
	\item In Section~\ref{sec:loop} we recall the approach to the loop equations for the systems of multi-differentials developed in~\cite{ABDKS1}. The loop equations allow to establish the identification of the original CEO topological recursion and the generalized topological recursion, under the standard assumptions of the original CEO topological recursion.
\end{itemize}

In Section~\ref{sec:compatibility} we discuss the compatibility (and eventual differences) of the generalized topological recursion with different versions of topological recursion mentioned above. In particular, we explain why the generalized topological recursion incorporates as a special case the so-called logarithmic topological recursion, developed in a bunch of recent papers as a natural extension of the original CEO topological recursion. We also prove that the generalized topological recursion agrees with the setup of Chekhov--Norbury and explain in which cases it agrees with the Bouchard--Eynard version of topological recursion. We also give examples and explain the origin of the disagreement of the generalized topological recursion with the Bouchard--Eynard topological in the most general case.

In Section~\ref{sec:KP} we prove the KP integrability for the differentials of the generalized topological recursion in the case of the rational spectral curve.

Finally, in Section~\ref{sec:rs-examples} we provide some explicit formulas and computations of expansion in the basic cases $x=z^r$, $y=z^s$, which partly have known enumerative meaning due to the results of~\cite{EO-1st,Norbury,chidambaram2023relationsoverlinemathcalmgnnegativerspin,CGS}, as well as $x=z^r-\log z$, $y=z^s$~\cite{LPSZ}. 

\subsection{Notation} Throughout the text we use the following notation, by now the standard one in the papers on topological recursion and its applications:
\begin{itemize}
	\item $\set{n}$ denotes the set $\{1,\dots,n\}$.
	\item $z_I$ denotes $\{z_i\}_{i\in I}$ for any subset $I\subseteq \set{n}$.
	\item $[u^d]$ denotes the operator that extracts the corresponding coefficient from the whole expression to the right of it, that is, $[u^d]\sum_{i=-\infty}^\infty a_iu^i \coloneqq a_d$.
	\item
	$\restr{u}{v}$ denotes the operator of substitution (or restriction) applied to the whole expression to the right of it, that is, $\restr{u}{v} f(u) \coloneqq f(v)$.
	\item $\cS(u)$ denotes the formal power series $u^{-1}(e^{u/2} - e^{-u/2}) = \sum_{i=0}^\infty \frac{u^{2i}}{2^{2i}(2i+1)!}$;
	\item $d\frac{1}{dx}$ denotes an operator that acts on a $1$-differential $\omega$ as $d(\frac{\omega}{dx})$.
\end{itemize}

\subsection{Acknowledgments} A.~A. was supported by the Institute for Basic Science (IBS-R003-D1). A.~A. is grateful to MPIM in Bonn for hospitality and financial support.  Research of B.~B. was supported by the ISF grant 876/20. P.D.-B. and M.K. worked on this project within the framework of the Basic Research Program at HSE University. S.~S. was supported by the Dutch Research Council grant OCENW.M.21.233.

We would like to thank the referees for the very careful reading of the paper and many useful remarks. In particular, one of the referees suggested that the interplay of the generalized topological recursion and KP integrability can help to revisit the rich system of interrelations between topological recursion and further topics in integrability, such as the theory of isomonodromic deformations.

\section{Generalized topological recursion} \label{sec:GenTR}

\subsection{Original CEO topological recursion} \label{sec:nondegenerateTR}

Let $(\Sigma,x,y,B)$ be the spectral curve data. In one of the most basic settings, there are no assumptions on $\Sigma$ and it can be, for instance, a collection of small discs, both $x$ and $y$ are assumed being meromorphic functions on $\Sigma$, all zeroes of $dx$ are simple, and $dy$ is holomorphic and nonvanishing at zeroes of~$dx$, and $B$ is  a meromorphic bi-differential on $\Sigma^2$ with the only pole along the diagonal with bi-residue $1$. In this setting, the original CEO topological recursion~\cite{CEO,EO-1st} defines the $n$-differentials $\omega^{(g)}_n$ in the unstable cases by
\begin{equation}
\omega^{(0)}_1(z)=y(z)\;dx(z),\qquad \omega^{(0)}_2(z_1,z_2)=B(z_1,z_2),
\end{equation}
and for $2g-2+n\ge0$ recursively by a two-step procedure that we describe below.

Let  $\{q_1,\dots,q_N\}$ be the set of zeros of $dx$, and let $\sigma_i$ be the deck transformation of $x$ near $q_i$, $i=1,\dots,N$. At the first step we define the germs of differentials  $\bar\omega^{(g)}_{n+1 \mathop{|} i}$,
\begin{align}
\label{eq:CEO}
\bar\omega^{(g)}_{n+1 \mathop{|} i}(z,z_{\set{n}})  &= \tfrac{1}{(y(z)-y(\sigma_i(z))dx(z)}
\biggl( \omega^{(g-1)}_{n+2,0} (z,\sigma_i(z),z_{\set{n}})
\\\notag&\qquad\qquad	+ {\displaystyle\sum_{\substack{g_1+g_2=g,~ I_1\sqcup I_2 =\set{n}\\ (g_i,|I_i|)\not= (0,0)}}} \omega^{(g_1)}_{|I_1|+1,0}(z,z_{I_1})\omega^{(g_2)}_{|I_2|+1,0}(\sigma_i(z),z_{I_2})\biggr),
\end{align}
in a vicinity of the points $q_1,\dots,q_N$. At the second step, we set
\begin{align}
\label{eq:proj}
\omega^{(g)}_{n+1}(z,z_{\set{n}})&=\sum_{i=1}^N\res\limits_{\tilde z=q_i}
\bar\omega^{(g)}_{n+1\mathop{|} i}(\tilde z,z_{\set{n}})\int\limits^{\tilde z}B(\cdot,z),
\end{align}
where the integration constants in the last formula can be chosen arbitrarily since $\bar\omega^{(g)}_{n+1\mathop{|} i}$ has trivial residue at $q_i$, $i=1,\dots,N$.

The meaning of these relations is as follows. The differential $\bar\omega^{(g)}_{n+1}\coloneqq \sum_{i=1}^N \bar\omega^{(g)}_{n+1\mathop{|} i}$, which is a combination of the germs \eqref{eq:CEO}, is considered as a ``preliminary approximation'' of $\omega^{(g)}_{n+1}$. It is defined only for $z$ in a vicinity of the union of points $q_1,\dots,q_N$ and we have
\begin{equation}
\omega^{(g)}_{n+1}(z,z_{\set{n}})=\bar \omega^{(g)}_{n+1}(z,z_{\set{n}})+\text{(holomorphic in $z$)},\quad z\to q_j,\quad j=1,\dots,N.
\end{equation}
Respectively, the integral transformation~\eqref{eq:proj} recovers $\omega^{(g)}_{n+1}$ as a $1$-form in $z$, which is global meromorphic, has no poles beyond $\{q_1,\dots,q_N\}$, the principal parts of its poles at the points~$q_j$ coincide with those of $\bar \omega^{(g)}_{n+1}$. In general, such differentials are not unique, and inherit the properties of the bi-differential $B$. For instance, if $\Sigma$ is compact, one can assume that $B$ is uniquely defined by the requirement that it has trivial $\mathfrak{A}$-periods, and then so does $\omega^{(g)}_{n+1}$. 

\bigskip

In the framework of this definition, one of the ways to think of the main construction 
in this paper is the following. We provide an alternative expression for $\bar \omega^{(g)}_{n+1}$ which differs from~\eqref{eq:CEO} by a holomorphic summand and thus produces the same differentials~$\omega^{(g)}_{n+1}$ in the above setting. 
Moreover, our substitute expression for $\bar \omega^{(g)}_{n+1}$ is purely algebraic in a sense that it does not involve the topology of the ramification defined by the function~$x$ and does not even involve any information on the orders of zeroes of~$dx$. 

This leads to a form of topological recursion that makes sense in a much more general setting: as we shall see, it will be sufficient just to assume that $dx$ and $dy$ are arbitrary meromorphic differentials on~$\Sigma$ with possibly nonzero residues and periods and with no conditions imposed on the orders and positions of their zeroes and poles.

\subsection{Initial data of generalized topological recursion}

Let $dx,dy$ be two meromorphic differentials on a smooth complex curve~$\Sigma$.

\begin{definition}\label{def:singpoint}
For a given point $q\in\Sigma$ and a local coordinate $z$ at this point consider the local expansions of $dx$ and $dy$:
\begin{align}
dx=a\,z^{r-1}(1+O(z))dz,\quad dy=b\,z^{s-1}(1+O(z))dz,\quad a,b\ne0,\ r,s\in\Z.	
\end{align}
The point $q$ is called \regular{} if either $r=s=1$ or $r+s\le0$, and \nonregular{} otherwise.
\end{definition}

\begin{definition} The \emph{initial data of generalized topological recursion} is a tuple $(\Sigma,dx,dy,B,\cP)$,
where
\begin{itemize}
\item $\Sigma$ is a smooth complex curve.
\item $B$ is a symmetric bi-differential on $\Sigma^2$ with a second order pole on the diagonal with bi-residue~$1$ and no other poles. 
\item $dx$, $dy$ are two arbitrary nonzero meromorphic differentials on~$\Sigma$.
\item $\cP=\{q_1,\dots,q_N\}$ is an arbitrarily chosen finite subset in the set of \nonregular{} points (\nonregular{} in the sense of Definition~\ref{def:singpoint}).
\end{itemize}
\end{definition}

\begin{definition}
	The points of~$\cP$ are called the \xvital{} points.
	The complementary to $\cP$ subset in the set of \nonregular{} points is denoted by~$\cP^\vee$ and its points are called 
	\yvital.
\end{definition}

\begin{definition}
The \emph{$x-y$ dual initial data} of generalized topological recursion is formed by the tuple $(\Sigma,dy,dx,B,\cP^\vee)$. It is assumed that $\cP^\vee$ is also finite. 
\end{definition}

\begin{remark} If $\Sigma$ is compact, then $B$ is defined 
	by the condition of vanishing $\mathfrak A$-periods for some chosen basis of  $(\mathfrak{A},\mathfrak{B})$-cycles on $\Sigma$.
\end{remark}

\begin{remark} Note that we assume no restriction on the orders and positions of poles and zeroes, residues and periods of $dx$ and $dy$.
\end{remark}

\begin{remark}
Note that the notions of \regular{} and \nonregular{} points are symmetric with respect to the swap of the roles of $dx$ and $dy$. In order to set the initial data of recursion and its dual we need to choose for each \nonregular{} point in an arbitrary way whether it is 
\xvital{} or \yvital. 

If $dx$ and $dy$ have no common zeroes and no simple poles, then there is a canonical choice: we can define a \nonregular{} point being 
\xvital{} or \yvital{} if it is a zero of $dx$ or $dy$, respectively. However, at a common zero of $dx$ and $dy$ there is no preferable choice. Besides,  in certain cases it might be useful to consider some zeroes of $dx$ as \yvital{} 
and some zeroes of $dy$ as \xvital{} 
\nonregular{} points.
\end{remark}

\subsection{Definition of generalized topological recursion}

For a given collection of differentials $\{\omega^{(g)}_n\}_{g\geq 0, n\geq 1}$ such that $\omega^{(0)}_1=y\,dx$ we set
\begin{equation}
\omega_n=\sum_{\substack{g\ge0\\(g,n)\ne(0,1)}} \hbar^{2g-2+n}\omega^{(g)}_{n}.
\end{equation}
Assuming that $\omega_n$, $n\geq 1$, are globally defined meromorphic $n$-differentials,  we define a new collection of differentials
$\cW_n(z,u;z_1,\dots, z_n)=\sum_{g=0}^\infty\hbar^{2g-1+n}\cW^{(g)}_n$ on $\Sigma^{n+1}$ depending on an additional formal parameter~$u$ for $n\ge0$ by the following relations:
\begin{align}\label{eq:cT1x}
\cT_n(z,u;z_{\set n})&=\sum_{k=1}^\infty\frac1{k!}\prod_{i=1}^k
\Bigl(\restr{\tilde z_{i}}{z}u\hbar\cS(u\hbar\partial_{\tilde x_i})\tfrac{1}{d\tilde x_{i}}\Bigr)
\bigl(\omega_{k+n}(\tilde z_{\set{k}},z_{\set{n}})-
\delta_{n,0}\delta_{k,2}\tfrac{d\tilde x_{1}d\tilde x_{2}}{(\tilde x_{1}-\tilde x_{2})^2}\bigr)
\\\notag&\qquad\qquad+\delta_{n,0}u\bigl(\cS(u\hbar\partial_{x})-1\bigr)y,
\\\label{eq:cW1}
\cW_n(z,u;z_{\set n})&=\frac{dx}{u\hbar}e^{\cT_0(z,u)}
\sum_{\substack{\set{n}=\sqcup_{\alpha} J_\alpha
\\J_\alpha\ne\emptyset}}
\prod_{\alpha}\cT_{|J_\alpha|}(z,u;z_{J_\alpha}).
\end{align}
Here $\tilde x_i= x(\tilde z_i)$, $x=x(z)$, $y=y(z)$.
Note that in order to shorten the notation here and below we routinely omit $\hbar$ in the list of variables, though all newly introduced objects do depend on it.

These differentials possess the following properties implied directly by the definition:
\begin{itemize}
\item $\cW^{(g)}_n$ is a polynomial combination of the differentials $dx$, $dy$, $\omega^{(g')}_{n'}$, and their derivatives (in particular, the function~$y$ enters with derivatives of positive order only, and the explicit dependence of $\tilde x_1, \tilde x_2$ in the term with $\delta_{n,0}\delta_{k,2}$ disappears as well). As a consequence, it is a global meromorphic $(n+1)$-differential on~$\Sigma^{n+1}$.
\item We have $\cW^{(0)}_0=[\hbar^{-1}]\cW_0=\frac{dx}{u}$, and for $(g,n)\ne (0,0)$ the dependence of~$\cW^{(g)}_n$ on~$u$ is polynomial.
\item For each $(g,n)\ne(0,0)$ we have $[u^0]\cW^{(g)}_n(z,u;z_{\set n})=\omega^{(g)}_{n+1}(z,z_{\set n})$.
\item Let us represent $\cW^{(g)}_n$ for $(g,n)\ne(0,0)$ as
\begin{equation}
\cW^{(g)}_n(z,u;z_{\set n})=\omega^{(g)}_{n+1}(z,z_{\set n})+\overline\cW^{(g)}_n(z,u;z_{\set n}),
\end{equation}
where $\overline\cW^{(g)}_n$ is the contribution of the terms with positive exponents of~$u$. Then the expression for $\overline\cW^{(g)}_n$ involves the differentials $\omega^{(g')}_{n'}$ with $2g'-2+n'<2g-1+n$ only.
\end{itemize}

We refer to~\cite[Section 2.3.1]{BDKS2} and to Section~\ref{sec:identification} below for the examples.

\begin{definition}[Generalized topological recursion] \label{def:TRgeneral}
The \emph{differentials of generalized topological recursion} $\omega^{(g)}_n$, $g\geq 0$, $n\geq 1$ for the initial data $(\Sigma,dx,dy,B,\cP)$ are defined in the unstable cases by
\begin{equation}
\omega^{(0)}_1(z)=y(z)\;dx(z),\qquad \omega^{(0)}_2(z_1,z_2)=B(z_1,z_2),
\end{equation}
and for $2g-2+n>0$ they are given by
\begin{equation}
\label{eq:newproj}
\omega^{(g)}_{n}(z,z_{\set{n-1}})=\sum_{q\in\cP}\res\limits_{\tilde z=q}
\bar\omega^{(g)}_{n}(\tilde z,z_{\set{n-1}})\int\limits^{\tilde z}B(\cdot,z),
\end{equation}
where
\begin{equation}\label{eq:newTR}
\bar \omega^{(g)}_{n}(z,z_{\set{n-1}})=
-\sum_{r\ge1}\bigl(-d\tfrac{1}{dy}\bigr)^r[u^r]\overline\cW^{(g)}_{n-1}(z,u;z_{\set {n-1}}).
\end{equation}
\end{definition}

\begin{remark}
The differential $\bar \omega^{(g)}_{n}$, $g\geq 0$, $n\geq 1$, $2g-2+n\ge0$, defined by~\eqref{eq:newTR} is globally defined and meromorphic. Its possible poles with respect to the argument~$z$ are at \nonregular{} points, both \xvital{} and \yvital{},
as well as at the diagonals $z=z_i$. The differential~$\omega^{(g)}_{n}$ produced by the transformation~\eqref{eq:newproj} has the same principal parts of poles at the \xvital{}  \nonregular{} points, but is holomorphic at the other poles of $\bar \omega^{(g)}_{n}$.
\end{remark}

\begin{remark}
In the algebraic case, that is, if $\Sigma$ is compact and $B$ is the standard Bergman kernel normalized on $\mathfrak{A}$-cycles, the principal parts of the poles of $\omega^{(g)}_{n}$ together with the vanishing of ~$\mathfrak A$-periods determine these differentials uniquely. 
\end{remark}

\begin{remark}
Note that the relation defining the principal parts of the poles of~$\omega^{(g)}_{n}$ can be written in an equivalent form:
\begin{equation}\label{eq:mainTRrelation}
\sum_{r\ge0}\bigl(-d\tfrac{1}{dy}\bigr)^r[u^r]\cW^{(g)}_{n-1}(z,u;z_{\set {n-1}})\quad \text{is holomorphic at $z=q$ for $q\in\cP$,}
\end{equation}
for all $(g,n)$.
\end{remark}

\begin{remark}
If one multiply $dx$, or, equivalently $dy$, by a constant $\alpha$, then from the definition it immediately follows that the differentials \eqref{eq:newproj} are multiplied by $\alpha^{2-2g-n}$. This is a well-known property of the original topological recursion that is also valid for the generalized setup. 
\end{remark}

\begin{example}
	Below we provide several examples of differentials $\bar\omega_n^{(g)}$ for small values of $(g,n).$
	\begin{align}
		\bar\omega_3^{(0)}(z,z_{\set{2}}) &= d\,\frac{B(z,z_1)B(z,z_2)}{dx\, dy},\\
		\bar\omega_4^{(0)}(z,z_{\set{3}}) &= d\biggl(-\partial_y \frac{B(z,z_1)B(z,z_2)B(z,z_3)}{dx^2\, dy} + \frac{\omega_3^{(0)}(z,z_1,z_2)B(z,z_3)}{dx\,dy} 
		\\ \notag & \qquad  + \frac{\omega_3^{(0)}(z,z_1,z_3)B(z,z_2)}{dx\,dy} + \frac{ \omega_3^{(0)}(z,z_2,z_3)B(z,z_1)}{dx\,dy}\biggr),\\
		\bar\omega_1^{(1)}(z) &= d\bigg(\frac12\frac{\widetilde{B}(z,z)}{dx\,dy}-\frac1{24}\partial_y^2\partial_xy\bigg).
	\end{align}
	Here $\widetilde{B}(z_1,z_2) = B(z_1,z_2) - \frac{dx_1dx_2}{(x_1-x_2)^2}.$
\end{example}

\begin{proposition}\label{prop:symmetry}
The differentials of generalized topological recursion $\omega^{(g)}_{n}$ (in the sense of Definition~\ref{def:TRgeneral}, that is, defined by~\eqref{eq:newproj}--\eqref{eq:newTR}) are symmetric in all $n$ arguments. 
\end{proposition}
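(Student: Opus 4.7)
First I observe that symmetry of $\omega^{(g)}_n(z,z_1,\ldots,z_{n-1})$ in the arguments $z_1,\ldots,z_{n-1}$ (excluding the distinguished first entry) is manifest from the construction: the $k$-fold product in~\eqref{eq:cT1x} and the partition sum in~\eqref{eq:cW1} are symmetric in their inputs, so $\overline\cW^{(g)}_{n-1}(z,u;z_{\set{n-1}})$ is symmetric in $z_1,\ldots,z_{n-1}$, and the projection formula~\eqref{eq:newproj} preserves this. The base cases $\omega^{(0)}_1=y\,dx$ and $\omega^{(0)}_2=B$ are trivially symmetric. What remains is to show that the distinguished first argument $z$ can be exchanged with any $z_i$. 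I would proceed by induction on $2g-2+n$ and invoke the $x-y$ duality framework developed in Section~\ref{sec:xyswap}.

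Assuming by induction that all $\omega^{(g')}_{n'}$ entering $\overline\cW^{(g)}_{n-1}$ are symmetric, the strategy is to analyze the antisymmetric combination
\begin{equation}
\Delta(z,z_1,z_I):=\omega^{(g)}_n(z,z_1,z_I)-\omega^{(g)}_n(z_1,z,z_I),\qquad I=\set{n-1}\setminus\{1\},
\end{equation}
and to show that, viewed as a $1$-differential in $z$, it is globally holomorphic with vanishing $\mathfrak{A}$-periods (or the analogous normalization in the non-compact case), hence identically zero. Its candidate pole loci in $z$ are the \xvital{} points $\cP$, the diagonal $z=z_1$, and the \yvital{} points $\cP^\vee$. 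Matching of the principal parts at $\cP$ and cancellation of the diagonal pole at $z=z_1$ follow from the symmetric structure of $\overline\cW^{(g)}_{n-1}$ in $z_1,\ldots,z_{n-1}$, the symmetry of $B$, and the explicit subtraction $\delta_{n,0}\delta_{k,2}\tfrac{d\tilde x_1 d\tilde x_2}{(\tilde x_1-\tilde x_2)^2}$ in~\eqref{eq:cT1x}; the induction hypothesis feeds the lower-order symmetric $\omega^{(g')}_{n'}$ into this analysis.

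The main obstacle, and precisely the point at which Section~\ref{sec:xyswap} must be invoked, is to rule out poles of $\Delta$ at the \yvital{} points $\cP^\vee$: the operator $-d\tfrac{1}{dy}$ in~\eqref{eq:newTR} is a priori singular at zeros of $dy$, and the cancellation of the resulting apparent poles is nontrivial. This is exactly the content of the abstract $x-y$ swap identities formulated in Section~\ref{sec:xyswap}, which express the $\cW$-construction as the output of a swap transformation symmetric in the input variables at the algebraic level, so that the very same machinery that controls the asymptotics at $\cP$ also controls those at $\cP^\vee$ (with the roles of $\cP$ and $\cP^\vee$ exchanged on the dual side). Granted these swap identities, $\Delta$ is globally regular and carries vanishing $\mathfrak{A}$-periods by construction, so it vanishes, completing the induction.
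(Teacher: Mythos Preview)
Your inductive framework is the right one, but the argument contains a genuine gap and a misidentified obstacle.

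First, the misidentification. You write that the ``main obstacle'' is to rule out poles of $\Delta$ at the \yvital{} points $\cP^\vee$, arising from the operator $-d\tfrac{1}{dy}$. But $\omega^{(g)}_n$ itself, by construction via the projection~\eqref{eq:newproj}, has poles in its first argument \emph{only} at $\cP$: the projection against $\int^{\tilde z}B(\cdot,z)$ discards all other singularities of $\bar\omega^{(g)}_n$, including those at zeros of $dy$. For the remaining arguments, the poles come from lower $\omega^{(g')}_{n'}$'s, which by induction lie only at $\cP$. So $\Delta$ simply has no poles at $\cP^\vee$; no swap identity is needed for that.

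The real gap is the step you pass over: you assert that the principal parts of $\omega^{(g)}_n(z,z_1,z_I)$ and $\omega^{(g)}_n(z_1,z,z_I)$ at $z\to q\in\cP$ agree, citing ``the symmetric structure of $\overline\cW^{(g)}_{n-1}$ in $z_1,\dots,z_{n-1}$''. That symmetry only permutes the \emph{non-distinguished} arguments; it says nothing about exchanging the distinguished variable $z$ with one of the $z_i$. The two principal parts come from very different expressions (one is the principal part of $\bar\omega^{(g)}_n(z,z_1,z_I)$ at $z\to q$, the other involves residues of $\bar\omega^{(g)}_n(\tilde z,z,z_I)$ at $\tilde z\to q'$ for all $q'\in\cP$, then analyzed as $z\to q$), and matching them is precisely the heart of the proof.

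The paper avoids this comparison entirely by a different and cleaner device. Rather than working with $\cW_{n-1}$, which has one distinguished variable, it uses the extended differential $\bW_n(z_{\set n},u_{\set n})$ of~\eqref{eq:def-bW}, which is \emph{manifestly symmetric in all $n$ pairs} $(z_i,u_i)$. Splitting $\bW^{(g)}_n=\omega^{(g)}_n+\overline\bW^{(g)}_n$, the remainder $\overline\bW^{(g)}_n$ involves only lower $\omega^{(g')}_{n'}$'s and is therefore symmetric by induction. Applying the swap operator $\prod_i\sum_r(-d_i\tfrac{1}{dy_i})^r[u_i^r]$ to $\overline\bW_n$ yields a symmetric $n$-differential $\tilde{\tilde\omega}^{(g)}_n$; by~\eqref{eq:xyrel} one has $\tilde{\tilde\omega}^{(g)}_n=\omega^{(g)}_n-(-1)^n\omega^{\vee,(g)}_n$. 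Since $\omega^{\vee,(g)}_n$ is holomorphic at $\cP$ (Lemma~\ref{lem:xyrecursion}), $\omega^{(g)}_n$ is recovered from the symmetric object $\tilde{\tilde\omega}^{(g)}_n$ by the \emph{symmetric} $n$-fold projection
\[
\omega^{(g)}_n(z_{\set n})=\Bigl(\prod_{i=1}^n\sum_{q\in\cP}\res_{\tilde z_i=q}\Bigr)\tilde{\tilde\omega}^{(g)}_n(\tilde z_{\set n})\prod_{i=1}^n\int^{\tilde z_i}B(\cdot,z_i),
\]
which makes the full symmetry manifest. The key idea you are missing is to promote all $n$ variables to distinguished ones simultaneously via $\bW_n$, rather than attempting to compare two asymmetric constructions.
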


\begin{theorem}\label{th:CEO-compatibility} We have:
\begin{itemize}
	\item If $q\in\cP$ is a simple zero of~$dx$, and $dy$ is holomorphic and nonvanishing at~$q$ then the differentials given by~\eqref{eq:CEO} and \eqref{eq:newTR} differ by a summand holomorphic in $z$ at $q$. 
	\item 
	(An immediate corollary of the first statement.) If the initial spectral curve data satisfies the standard setup of the CEO topological recursion
	 and the set of \xvital{} points is chosen to be the set of zeroes of~$dx$, then the generalized topological recursion of Definition~\ref{def:TRgeneral} produces the same differentials as the original CEO topological recursion.
\end{itemize}
\end{theorem}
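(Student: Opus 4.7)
The plan is to reduce the second statement to the first by a simple observation, and then to prove the first by a local argument near each simple zero of $dx$. Both the CEO projection~\eqref{eq:proj} and the generalized projection~\eqref{eq:newproj} apply the same operator $\sum_{q\in\cP}\res_{\tilde z=q}(\cdot)\int^{\tilde z}B(\cdot,z)$ to a preliminary differential, and this operator annihilates any summand that is holomorphic at the \xvital{} points. Hence once one shows that $\bar\omega^{(g)}_{n+1\mathop{|}i}$ from~\eqref{eq:CEO} and the corresponding generalized object from~\eqref{eq:newTR} differ near $q=q_i$ by a holomorphic term, the full $\omega^{(g)}_{n+1}$ produced by the two recursions automatically coincide.

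The proof of the first statement I would carry out by induction on the Euler characteristic $2g-2+n$, reducing it to the loop-equation framework of Section~\ref{sec:loop}. Near a simple zero $q$ of $dx$ with $dy$ regular and non-vanishing, fix a local coordinate so that the deck involution is $\sigma\colon z\mapsto -z$ modulo higher-order corrections. The CEO preliminary differential~\eqref{eq:CEO} is by construction the unique $\sigma$-anti-invariant combination with the prescribed quadratic polar part and with trivial residue at $q$, i.e.\ it is determined by satisfying the linear and quadratic loop equations at $q$ modulo a holomorphic anti-invariant correction. So the task reduces to verifying that the generalized-TR expression~\eqref{eq:newTR} also satisfies these two loop equations at $q$ and has the correct quadratic principal part, the latter depending by the inductive hypothesis only on already-identified lower-order differentials.

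The main obstacle is the local algebraic matching between the two very different types of formulas: \eqref{eq:CEO} is written via the two-sheeted geometry of $x$ near $q$, while~\eqref{eq:newTR} is expressed universally in terms of $\cS$-function combinatorics and the operator $-d\tfrac{1}{dy}$, with no reference to the deck structure. To bridge the two, I would expand the operator $\sum_{r\ge1}\bigl(-d\tfrac{1}{dy}\bigr)^r[u^r]$ acting on~\eqref{eq:cW1} in the local coordinate at $q$ and recognize the result, modulo terms holomorphic at $q$, as the Laurent expansion of a $\sigma$-anti-invariant combination of previously constructed $\omega^{(g')}_{n'}$'s multiplied by the CEO kernel $(y(z)-y(\sigma(z)))^{-1}dx(z)^{-1}$. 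The crucial analytic input is the loop-equation statement developed abstractly in Section~\ref{sec:loop}: it guarantees that the output of the generalized-TR template automatically satisfies, at any simple zero of $dx$ with $dy$ holomorphic and non-vanishing, both the linear $\sigma$-invariance modulo holomorphic terms and the quadratic identity with the expected right-hand side. Combining these loop equations with the inductive identification of lower-order differentials pins down the polar part at $q$ uniquely and completes the induction step.
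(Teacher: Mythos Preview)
Your proposal is correct and follows essentially the same route as the paper: invoke the loop equations of Theorem~\ref{Thm:loopeq} for the generalized TR differentials, use the standard fact that at a simple zero of $dx$ the linear and quadratic loop equations pin down the principal part of $\omega^{(g)}_{n+1}$ uniquely and identify it with the CEO expression~\eqref{eq:CEO}, and conclude via the equality of the projection formulas~\eqref{eq:proj} and~\eqref{eq:newproj}. The only cosmetic differences are that you make the induction on $2g-2+n$ explicit (the paper leaves it implicit), and that your aside about directly expanding $\sum_{r\ge1}(-d\tfrac{1}{dy})^r[u^r]$ in a local coordinate is superfluous once Theorem~\ref{Thm:loopeq} is invoked, which is exactly what the paper does.
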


Proposition~\ref{prop:symmetry} and Theorem~\ref{th:CEO-compatibility} are proved in Sect.~\ref{sec:xyswap} and~\ref{sec:loop}.

\section{\texorpdfstring{$x-y$}{x-y} duality}\label{sec:xyswap}

\subsection{Recollection}
Given a system of symmetric multi-differentials $\{\omega^{(g)}_n\}_{(g,n)\ne(0,1)}$ on $\Sigma$, the $x-y$ duality transformation~\cite{borot2023functional,hock2022simple,ABDKS1,ABDKS-logTR-xy} produces a new one denoted by
$\{\omega^{\vee,(g)}_n\}_{(g,n)\ne(0,1)}$ on the same curve~$\Sigma$ and defined by a closed explicit expression involving two additional differentials~$dx$ and~$dy$. We review the definition of this transformation below. 

For the correctness of this definition we assume that the differentials~$dx$ and~$dy$ are meromorphic and also we require that $\omega^{(g)}_n$'s are globally defined, meromorphic, and have no poles on the diagonals for $(g,n)\ne(0,2)$. The differential $\omega^{(0)}_2$ is also meromorphic but it has a double order pole on the diagonal such that for any local coordinate $z$ the form $\omega^{(0)}_2(z_1,z_2)-\frac{dz_1dz_2}{(z_1-z_2)^2}$ is regular on the diagonal $z_1=z_2$.

Note that we do not assume any topological recursion for the differentials $\{\omega^{(g)}_n\}_{(g,n)\ne(0,1)}$, and \emph{a priori} these multi-differentials have no relation to the differentials $dx$ and $dy$; the latter ones are given as an extra data in the $x-y$ duality transformation. 

In certain cases, it might be convenient to set additionally $\omega^{(0)}_1=y\,dx$, $\omega^{\vee,(0)}_1=x\,dy$. However, we do not assume that $x$ and~$y$ are themselves meromorphic so that the meromorphy condition for $(g,n)=(0,1)$ might break.

\subsection{Explicit formulas}
Recall that 
\begin{equation}
\omega_n=\sum_{\substack{g\ge0,\\(g,n)\ne(0,1)}}\hbar^{2g-2+n}\omega^{(g)}_n,
\qquad
\omega^\vee_n=\sum_{\substack{g\ge0,\\(g,n)\ne(0,1)}}\hbar^{2g-2+n}\omega^{\vee,(g)}_n,
\end{equation}
and define the extended $n$-differentials $\bW_n=\sum_{g=0}^\infty\hbar^{2g-2+n}\bW^{(g)}_n$~\cite{ABDKS1}, where
\begin{multline} \label{eq:def-bW}
	\bW_n = \bW_n(z_{\llbracket n \rrbracket},u_{\llbracket n\rrbracket}) \coloneqq
	\prod_{i=1}^n \frac{dx_i}{u_i\hbar} e^{u_i(\cS(u_i\hbar\partial_{x_i})-1) y_i}
	\sum_{\Gamma} \frac{1}{|\mathrm{Aut}(\Gamma)|}
	\\
	\prod_{e\in E(\Gamma)}
	\prod_{j=1}^{|e|}\restr{(\tilde u_j, \tilde x_j)}{ (u_{e(j)},x_{e(j)})} \tilde u_j\hbar  \cS(\tilde u_j \hbar  \partial_{\tilde x_j})  \frac{\tilde \omega_{|e|}(\tilde z_{\llbracket |e|\rrbracket})}  {\prod_{j=1}^{|e|} {d\tilde x_j}}.
\end{multline}
Here $x_i = x(z_i)$, $\tilde x_i = x(\tilde z_i)$, $y_i=y(z_i)$, and $\tilde \omega_n = \omega_n$ unless $n=2$ and $e(1)=e(2)$. In the latter case $\tilde \omega_2 = \omega_2 - dx_1dx_2/(x_1-x_2)^2$. The sum is taken over all connected graphs $\Gamma$ with $n$ labeled vertices and multiedges of index $\geq 1$. 
The index of a multiedge $e$ is the number of its ``legs'' and we denote it by $|e|$. For a multiedge $e$ with index $|e|$  we control its attachment to the vertices by the associated map $e\colon \llbracket |e| \rrbracket \to
\llbracket n \rrbracket
$ that we denote also by $e$, abusing notation (so $e(j)$ is the label of the vertex to which the $j$-th ``leg'' of the multiedge $e$ is attached).

Analogously, define $\bW^\vee_n=\sum_{g=0}^\infty\hbar^{2g-2+n}\bW^{\vee,(g)}_n$
as 
\begin{multline} \label{eq:def-bW-vee}
	\bW^{\vee}_n = \bW^{\vee}_n(z_{\llbracket n \rrbracket},v_{\llbracket n\rrbracket}) \coloneqq
	\prod_{i=1}^n \frac{dy_i}{v_i\hbar} e^{v_i (\cS(v_i\hbar \partial_{y_i})-1)x_i}
	\sum_{\Gamma} \frac{1}{|\mathrm{Aut}(\Gamma)|}
	\\
	\prod_{e\in E(\Gamma)}
	\prod_{j=1}^{|e|}\restr{(\tilde v_j, \tilde y_j)}{ (v_{e(j)},y_{e(j)})} \tilde v_j \hbar \cS(\tilde v_j \hbar  \partial_{\tilde y_j}) \frac{\tilde \omega^{\vee}_{|e|}(\tilde z_{\llbracket |e|\rrbracket})}  {\prod_{j=1}^{|e|} {d\tilde y_j}}.
\end{multline}
Here we use the same conventions and  $\tilde \omega^\vee_n = \omega^\vee_n$ unless $n=2$ and $e(1)=e(2)$. In the latter case $\tilde \omega^\vee_2 = \omega^\vee_2 - dy_1dy_2/(y_1-y_2)^2$.

The differentials $\bW_n$, resp.~$\bW^\vee_n$, are symmetric in $n$ pairs of variables $(z_i,u_i)$, resp.~$(z_i,v_i)$, where $z_i$ is a point of the $i$-th factor in~$\Sigma^n$ and $u_i$, resp.~$v_i$, is a formal parameter.
Note also that we have the following connection between $\bW_n$ and $\cW_n$ defined in~\eqref{eq:cW1}:
\begin{equation}
	\cW_{n}(z,u,z_{\set{n}}) = \restr{u_{\set{n}}}{0} \restr{z_{n+1}}{z} \restr{u_{n+1}}{u} \bW_{n+1}(z_{\set{n+1}},u_{\set{n+1}}).
\end{equation}


\begin{definition} For a given system of multi-differentials $\{\omega^{(g)}_n\}_{(g,n)\ne(0,1)}$ and two differentials $dx$ and $dy$, the $x-y$ dual system of multi-differentials $\{\omega^{\vee,(g)}_n\}_{(g,n)\ne(0,1)}$ is defined as 
\begin{equation}\label{eq:xyrel}
\omega^\vee_n(z_{\set n})=(-1)^n
\left(\prod_{i=1}^n\sum_{r=0}^\infty \bigl(-d_i\tfrac{1}{dy_i}\bigr)^{r}[u_i^r]\right)
\bW_n(z_{\set n},u_{\set n}).
\end{equation}
\end{definition}

\begin{remark}[\cite{ABDKS3}]
The inverse transformation is given by similar expression with the roles of $x$ and $y$ swapped
\begin{equation}
\omega_n(z_{\set n})=(-1)^n
\left(\prod_{i=1}^n\sum_{r=0}^\infty \bigl(-d_i\tfrac{1}{dx_i}\bigr)^{r}[v_i^r]\right)
\bW^\vee_n(z_{\set n},v_{\set n}).
\end{equation}	
\end{remark}

\begin{remark}[\cite{ABDKS1}] \label{rem:Inductive-x-y-swap}
The summation over graphs can be replaced by a two-index induction which will be very efficient in the proof of loop equation (see Theorem \ref{Thm:loopeq}). Set $\omega_{m,0}=\omega_m$, $m\geq 1$, and define inductively

\begin{align}\label{eq:cTmn}
	\cT_{m,n}(z,u;z_{\set{m}};z_{\set{m+n}\setminus \set{m}}) & \coloneqq
	\sum_{k=1}^\infty\frac1{k!}\prod_{i=1}^k
	\Bigl(\restr{\tilde z_{i}}{z}u\hbar\cS(u\hbar\partial_{\tilde x_i})\tfrac{1}{d\tilde x_{i}}\Bigr)
	\\ \notag & \qquad 
		\bigl(\omega_{k+m,n}(\tilde z_{\set{k}},z_{\set{m}};z_{\set{n+m}\setminus \set{m}})-
	\delta_{m,0}\delta_{n,0}\delta_{k,2}\tfrac{d\tilde x_{1}d\tilde x_{2}}{(\tilde x_{1}-\tilde x_{2})^2}\bigr)
	\\\notag&\quad
		+\delta_{m,0}\delta_{n,0}\,u\bigl(\cS(u\hbar\partial_x)-1\bigr)y,
	\\ 	\label{eq:cWmn}
	\cW_{m,n}(z,u;z_{\set{m}};z_{\set{m+n}\setminus \set{m}}) & \coloneqq \frac{dx}{u\hbar}e^{\cT_0(z,u)} \sum_{\substack{\set{m+n}=\sqcup_\alpha K_\alpha \\
			K_\alpha\ne\varnothing\\ I_\alpha=K_\alpha\cap \set{m} \\ J_\alpha=K_\alpha\setminus I_\alpha}} 
	\prod_{\alpha} \cT_{|I_\alpha|,|J_\alpha|}(z,u; z_{I_\alpha};z_{J_\alpha}),
\\\label{eq:cWmn-to-omega}
\omega_{m,n+1}(z_{\set m};z,z_{\set {m+n}\setminus \set{m}})&\coloneqq -\sum_{r=0}^\infty \bigl(-d\tfrac{1}{dy}\bigr)^{r}[u^r]\cW_{m,n}(z,u;z_{\set{m}};z_{\set{m+n}\setminus \set{m}}),
\end{align}
so that $\cT_{m,0}=\cT_m$ given in~\eqref{eq:cT1x} and $\cW_{m,0}=\cW_m$ given in~\eqref{eq:cW1}, $m\geq 1$. Then, $\omega^\vee_n=\omega_{0,n}$, $n\geq 1$.

The inverse transformation can be represented by a similar induction: 
\begin{align}\label{eq:cTmn-vee}
	\cT^\vee_{m,n}(z,v;z_{\set{m}};z_{\set{m+n}\setminus \set{m}}) & \coloneqq
	\sum_{k=1}^\infty\frac1{k!}\prod_{i=1}^k
	\Bigl(\restr{\tilde z_{i}}{z}v\hbar\cS(v\hbar\partial_{\tilde y_i})\tfrac{1}{d\tilde y_{i}}\Bigr)
	\\ \notag & \qquad 
	\bigl(\omega_{m,k+n}(z_{\set{m}};\tilde z_{\set{k}},z_{\set{n+m}\setminus \set{m}})-
	\delta_{m,0}\delta_{n,0}\delta_{k,2}\tfrac{d\tilde y_{1}d\tilde y_{2}}{(\tilde y_{1}-\tilde y_{2})^2}\bigr)
	\\\notag&\quad
	+\delta_{m,0}\delta_{n,0}\,v\bigl(\cS(v\hbar\partial_y)-1\bigr)x,
	\\ 	\label{eq:cWmn-vee}
	\cW^\vee_{m,n}(z,u;z_{\set{m}};z_{\set{m+n}\setminus \set{m}}) & \coloneqq \frac{dy}{v\hbar}e^{\cT_0(z,v)} \sum_{\substack{\set{m+n}=\sqcup_\alpha K_\alpha \\
			K_\alpha\ne\varnothing\\ I_\alpha=K_\alpha\cap \set{m} \\ J_\alpha=K_\alpha\setminus I_\alpha}} 
	\prod_{\alpha} \cT_{|I_\alpha|,|J_\alpha|}(z,v; z_{I_\alpha};z_{J_\alpha}),
	\\\label{eq:cWmn-vee-to-omega}
	\omega_{m+1,n}(z,z_{\set m};z_{\set {m+n}\setminus \set{m}})&\coloneqq -\sum_{r=0}^\infty \bigl(-d\tfrac{1}{dx}\bigr)^{r}[v^r]\cW^\vee_{m,n}(z,u;z_{\set{m}};z_{\set{m+n}\setminus \set{m}}).
\end{align}
These formulas allow to inductively obtain $\{\omega_m=\omega_{m,0}\}_{m\geq 1}$ starting from $\{\omega_n^\vee = \omega_{0,n}\}_{n\geq 1}$. 
\end{remark}

\subsection{Formal properties of the \texorpdfstring{$x-y$}{x-y} duality relation} 

\begin{lemma}\label{lem:xyregular}
Let $q\in\Sigma$ be a \regular{} point in a sense of Definition~\ref{def:singpoint}. Then it is regular for all differentials~$\omega^{\vee,(g)}_n$, $(g,n)\ne(0,1)$, if and only if it is regular for all differentials~$\omega^{(g)}_n$, $(g,n)\ne(0,1)$.
\end{lemma}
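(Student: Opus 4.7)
My strategy is to reduce the equivalence to a single implication using the formal $x-y$ symmetry of the duality, and then to verify it by an order-analysis of the formulas~\eqref{eq:def-bW} and~\eqref{eq:xyrel} at the given non-special point~$q$. By the Remark immediately after the definition of the $x-y$ dual system (attributed to~\cite{ABDKS3}), the map $\{\omega^{(g)}_n\}\mapsto\{\omega^{\vee,(g)}_n\}$ is invertible, with inverse of the same structural form obtained by swapping $x\leftrightarrow y$ and $u_i\leftrightarrow v_i$. Since the notion of \regular{} point in Definition~\ref{def:singpoint} is manifestly symmetric in~$x$ and~$y$, it suffices to prove one direction: assuming all $\omega^{(g)}_n$, $(g,n)\ne(0,1)$, are regular at~$q$ (with the standard $dx_1dx_2/(x_1-x_2)^2$ diagonal subtraction for $(g,n)=(0,2)$), I would show that all $\omega^{\vee,(g)}_n$ are regular at~$q$ (with the analogous $dy_1dy_2/(y_1-y_2)^2$ diagonal subtraction).

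Next I would split according to the two cases of Definition~\ref{def:singpoint}. In the first case, $r=s=1$, I pick a local coordinate~$z$ at~$q$; then both $x-x(q)$ and $y-y(q)$ are themselves local coordinates near~$q$, so $\partial_x$, $1/dx$, $1/dy$, and $d$ all preserve germs holomorphic at~$q$. Every ingredient of~\eqref{eq:def-bW} is then a formal series in $u_i,\hbar$ with coefficients holomorphic in each~$z_i$ at~$q$: the graph sum is built from the hypothetically regular $\omega^{(g)}_n$'s and regularity-preserving operators, the prefactor $e^{u_i(\cS(u_i\hbar\partial_{x_i})-1)y_i}$ is a formal exponential whose coefficients involve only $\partial_x^ky$ (regular), and the regularisation $\tilde\omega_2 = \omega_2 - dx_1dx_2/(x_1-x_2)^2$ cancels the diagonal pole. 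The leading factor $dx_i/(u_i\hbar)$ contributes only the $u_i^{-1}$ pole that is discarded by the projection $\sum_{r\ge 0}(-d_i/dy_i)^r[u_i^r]$, itself holomorphy-preserving, so~\eqref{eq:xyrel} yields regular $\omega^{\vee,(g)}_n$. In the second case, $r+s\le 0$, the operators $\partial_x$, $1/dy$, and $d\,\frac{1}{dy}$ can be individually singular at~$q$, so the argument becomes an order-bookkeeping. In the local coordinate~$z$ with $dx=az^{r-1}(1+O(z))dz$ and $dy=bz^{s-1}(1+O(z))dz$, I would use that $\partial_x^ky$ has local $z$-order at least $s-kr$, that $1/dy$ shifts the $z$-order of a one-form by $1-s$, and that $d\,\frac{1}{dy}$ shifts it by $-s$. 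Plugging these into the two-index inductive formulation of Remark~\ref{rem:Inductive-x-y-swap} and inducting on $2g-2+n$, I would verify at each step that the building blocks $\cT_{m,n}$ and $\cW_{m,n}$ together with~\eqref{eq:cWmn-to-omega} preserve regularity at~$q$; the key inequality is that the positive $z$-order contributions from $(1/dy)^r$ in the projection dominate the (possibly negative) contributions from $\partial_x^k$ in the exponential prefactor and edge factors of~\eqref{eq:cTmn} --- a balance controlled precisely by $r+s\le 0$.

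The hardest part will be the order-matching in the case $r+s\le 0$: individual operators in the formulas produce poles at~$q$, and only the sharp threshold $r+s\le 0$ makes the cancellations uniform at every order of~$\hbar$, every graph in the sum, and every coefficient extracted by $[u^r]$. Controlling the exponential prefactor $e^{u(\cS(u\hbar\partial_x)-1)y}$, which involves arbitrarily high orders of $\partial_x^ky$, will demand the most care; I expect the cleanest packaging is a single order-inequality, proved by induction on $2g-2+n$, that is simultaneously sharp in all the variables and reduces, at the end, to the hypothesis $r+s\le 0$.
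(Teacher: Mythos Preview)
Your plan is correct and follows the same underlying idea as the paper's proof: reduce to one direction by the $x\leftrightarrow y$ symmetry of the duality, treat $r=s=1$ by the observation that all operators preserve holomorphy, and handle $r+s\le 0$ by an order count. The paper, however, packages the $r+s\le 0$ case much more efficiently than your proposed two-index induction: it introduces a single filtration on Laurent series in $(z_i,u_i)$ by declaring $\deg z_i=\deg dz_i=1$ and $\deg u_i=-s$, then checks in one pass that every factor in the graph-sum expression~\eqref{eq:def-bW} for $\bW^{(g)}_n$ has nonnegative filtration degree and that the projection~\eqref{eq:xyrel} preserves this (the operator $(-d\,\tfrac{1}{dy})[u]$ has filtration-degree shift $-s+s=0$, and the needed inequality $r+s\le 0$ enters exactly once, in bounding the prefactor and the edge-operator degrees). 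Since $\omega^{\vee,(g)}_n$ carries no $u$-variables, nonnegative filtration degree is the same as holomorphy. This filtration is precisely the device that encodes your intuition that ``the $(1/dy)$'s compensate the $\partial_x$'s'' --- each $\partial_x$ always arrives paired with a factor of $u$, so your balance is automatic once $u$ is weighted by $-s$ --- and it obviates the induction on $2g-2+n$ and the separate tracking of the $\cT_{m,n}$, $\cW_{m,n}$ that you anticipate as the hardest part.
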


\begin{proof}
Let $dx\sim z^{r-1}dz$, $dy\sim z^{s-1}dz$ at the considered point. In the case $r=s=1$ all derivatives entering \eqref{eq:def-bW} and \eqref{eq:xyrel} (or \eqref{eq:cTmn}--\eqref{eq:cWmn-to-omega})
obviously preserve the space of holomorphic functions, so we only need to consider the case $r+s\le0$. We introduce filtration in the space of Laurent expansions in~$z_i,u_i$ by setting $\deg(z_i)=\deg(dz_i)=1$, $\deg(u_i)=-s$. Then $\deg x_i=r$ and the very form of the expression for $\bW^{(g)}_n$, $(g,n)\not=(0,1)$, implies that all the terms of its Laurent expansion in~$z_i,u_i$ are of non-negative degree with respect to the filtration (though it might contain monomials with negative exponents of $z_i$). The transformation~\eqref{eq:xyrel} expressing $\omega^{\vee,(g)}_n$ in terms of $\bW^{(g)}_n$ also preserves this filtration, due to the condition $r+s\leq 0$. Thus, the degree of $\omega^{\vee,(g)}_n$ with respect to the filtration is non-negative. Since it does not involve dependence on~$u_i$, we conclude that it is holomorphic.

The ``only if'' part of Lemma follows from similar arguments applied to the inverse transformation of $x-y$ duality.
\end{proof}

Let us stress that up to this point in our discussion of the $x-y$ duality we assumed no topological recursion for the differentials $\omega^{(g)}_n$. Recall, however, Def.~\ref{def:TRgeneral}. The following lemma serves as a main motivation for this definition.

\begin{lemma}\label{lem:xyrecursion}
Requirement~\eqref{eq:mainTRrelation} defining the principal part of the poles of $\omega^{(g)}_n$ at $q\in\cP$ is equivalent to the requirement that the $x-y$ dual differentials $\omega^{\vee}_n$ are holomorphic at~$q$ (with respect to any of their arguments).
\end{lemma}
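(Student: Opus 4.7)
The plan is to exploit the iterative description of the $x-y$ swap from Remark~\ref{rem:Inductive-x-y-swap}, which expresses $\omega^\vee_n = \omega_{0,n}$ as the result of $n$ single-variable swaps applied successively to the original collection $\omega_m = \omega_{m,0}$ via~\eqref{eq:cWmn-to-omega}. Specializing~\eqref{eq:cWmn-to-omega} to $j = 0$ gives
\begin{equation*}
\omega_{m,1}(z_{\set m}; z) = -\sum_{r \geq 0}\bigl(-d\tfrac{1}{dy}\bigr)^{r}[u^r]\,\cW_m(z, u; z_{\set m}),
\end{equation*}
so, reading at fixed bidegree $(g, m+1)$, condition~\eqref{eq:mainTRrelation} is precisely the assertion that the first-swap differential $\omega^{(g)}_{m, 1}$ is holomorphic at every $q \in \cP$ in its new dual variable $z$.

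For the forward implication I would prove by induction on $j$ that for every $m, g \geq 0$ and every $j \geq 1$, the intermediate differential $\omega^{(g)}_{m,j}$ is holomorphic at each $q \in \cP$ in each of its $j$ dual variables. The case $j = 1$ is the observation above. For the inductive step, I rely on the fact that formulas~\eqref{eq:cTmn}--\eqref{eq:cWmn} define $\cW_{m,j}$ by the same combinatorial expression as the ordinary $\cW_{m+j}$, with inputs $\omega_{l, 0}$ replaced by $\omega_{l, j}$, while the operator $A \coloneqq \sum_{r \geq 0}(-d/dy)^{r}[u^r]$ depends only on the distinguished $(z, u)$-pair common to both setups. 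The pole-cancellation at $z = q$ asserted by~\eqref{eq:mainTRrelation} for $A\cW_{m+j}$ then transports to $A\cW_{m,j}$ provided the inputs $\omega_{l,j}$ behave appropriately at $q$: holomorphy at $q$ in their dual variables is the inductive hypothesis, and the pole structure at $q$ in the original variables, which feeds the analysis at $z = q$ via the restrictions $\tilde z_i \to z$, is inherited unchanged through the iterative construction. This yields holomorphy of $\omega_{m, j+1}$ at $q$ in the freshly-dualized variable; symmetry of the swap with respect to the variable ordering extends the conclusion to all $j+1$ dual variables. The backward implication proceeds by the parallel argument with the inverse swap~\eqref{eq:cTmn-vee}--\eqref{eq:cWmn-vee-to-omega}, recovering the $\omega_{m, 0}$'s from $\omega^\vee_n$ and reading off~\eqref{eq:mainTRrelation} at the last step.

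The hard part will be the inductive step of the forward implication, in which one must verify that the pole-cancellation mechanism underlying~\eqref{eq:mainTRrelation} genuinely propagates from the level-$0$ case to all higher dualization levels. This reduces to two bookkeeping checks: the operator $A$ and the expressions~\eqref{eq:cTmn}--\eqref{eq:cWmn} depend on the dualization level only through the inputs $\omega_{l,n}$; and the principal parts of $\omega_{l,j}$ at $q$ in its original variables match those that the argument of~\eqref{eq:mainTRrelation} would demand of a plain $\omega$-differential, so that no new obstruction arises at higher iteration depths.
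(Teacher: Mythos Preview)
Your approach is essentially the paper's: identify \eqref{eq:mainTRrelation} with holomorphy of $\omega^{(g)}_{m,1}$ at $q$ in its dual slot via $\cW_m=\cW_{m,0}$ and \eqref{eq:cWmn-to-omega}, then propagate along the two-index induction of Remark~\ref{rem:Inductive-x-y-swap} to reach $\omega^{\vee,(g)}_n=\omega^{(g)}_{0,n}$; the backward implication runs the inverse swap \eqref{eq:cTmn-vee}--\eqref{eq:cWmn-vee-to-omega}. The paper is equally terse on the propagation step, simply invoking ``the combinatorics of Equations \eqref{eq:cTmn}--\eqref{eq:cWmn-to-omega}''.

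One correction worth noting: your description of $\cW_{m,j}$ as ``the same combinatorial expression as $\cW_{m+j}$ with inputs $\omega_{l,0}$ replaced by $\omega_{l,j}$'' is not accurate. In \eqref{eq:cWmn} the set-partition of $\set{m+j}$ produces blocks $K_\alpha$ whose dual parts $J_\alpha$ have sizes ranging over all of $\{0,\dots,j\}$, and each $\cT_{|I_\alpha|,|J_\alpha|}$ feeds in $\omega_{\,\cdot\,,|J_\alpha|}$. Thus $\cW_{m,j}$ genuinely mixes $\omega_{l,j'}$ for every $j'\le j$ (in particular the undeformed $\omega_l=\omega_{l,0}$ via $\cT_0$ and via blocks with $J_\alpha=\varnothing$). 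This does not break the induction---your hypothesis already covers all $j'\le j$---but it means there is no direct ``transport of the pole-cancellation from $A\cW_{m+j}$ to $A\cW_{m,j}$'' by substitution; the second bookkeeping check you flag is the actual content of the step rather than a formality.
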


\begin{proof}
As we mentioned in Rem.~\ref{rem:Inductive-x-y-swap}, $\cW_n=\cW_{n,0}$. Thus, from \eqref{eq:cWmn-to-omega} we see that the differential given by~\eqref{eq:newTR} coincides with 
\begin{equation}
\bar{\omega}^{(g)}_{n+1}(z,z_{\set{n}})=\omega^{(g)}_{n+1}(z,z_{\set{n}})+\omega^{(g)}_{n,1}(z_{\set{n}};z).
\end{equation}
Therefore, Eq.~\eqref{eq:mainTRrelation} is equivalent to the requirement that $\omega^{(g)}_{n,1}(z;z_{\set{n}})$ is holomorphic at $z=q$ for $q\in\cP$ for all $(g,n)$. By the combinatorics of Equations~\eqref{eq:cTmn}--\eqref{eq:cWmn-to-omega}, we see that the latter requirement is equivalent to the one that  $\omega^{(g)}_{0,n}=\omega^{\vee,(g)}_n$ are holomorphic as well at $q\in\cP$ in all its arguments for all $(g,n)\neq (0,1)$.
\end{proof}

\subsection{\texorpdfstring{$x-y$}{x-y} duality for the generalized topological recursion} 

Now we do assume that the differentials $\omega^{(g)}_n$ are obtained by generalized topological recursion and proceed to the proof of Proposition~\ref{prop:symmetry}.

\begin{proof}[Proof of Proposition~\ref{prop:symmetry}]
We proceed by induction on the negative Euler characteristic $2g-2+n$. 
Let us represent the extended differential $\bW^{(g)}_n$ in the form
\begin{equation}
\bW^{(g)}_n(z_{\set n},u_{\set n})=\omega^{(g)}_n(z_{\set n})+\overline\bW^{(g)}_n(z_{\set n},u_{\set n}),
\end{equation}
where $\overline\bW^{(g)}_n(z_{\set n},u_{\set n})$ is an expression that involves all $u$-variables with the strictly positive sum of their exponents. All differentials $\omega^{(g')}_{n'}$ entering
the expression for $\overline\bW^{(g)}_n(z_{\set n},u_{\set n})$ are assumed being computed in the previous steps of recursion and known to be symmetric in all arguments. Denote
\begin{equation}\label{eq:ttomega}
\tilde{\tilde \omega}^{(g)}_n=-
[\hbar^{2g-2+n}]\left(\prod_{i=1}^n\sum_{r=0}^\infty \bigl(-d_i\tfrac{1}{dy_i}\bigr)^{r}[u_i^r]\right)
\overline\bW_n(z_{\set n},u_{\set n}).
\end{equation}
On one hand, by induction hypothesis, this form is symmetric in all its arguments. On the other hand, we have, by~\eqref{eq:xyrel},
\begin{equation}
\tilde{\tilde \omega}^{(g)}_n=\omega^{(g)}_n-(-1)^n\omega^{\vee,(g)}_n,
\end{equation}
and the form $\omega^{(g)}_n$ can be recovered from $\tilde{\tilde \omega}^{(g)}_n$ as
\begin{equation}
\omega^{(g)}_n(z_{\set n})=\left(\prod_{i=1}^n\sum_{q\in\cP}\res\limits_{\tilde z_i=q}\right)
\tilde{\tilde \omega}^{(g)}_n(\tilde z_{\set n})
\prod_{i=1}^n\int\limits^{\tilde z_i}B(\cdot,z_i).
\end{equation}
This shows that $\omega^{(g)}_n$ is symmetric as well.
\end{proof}

Globalizing the local considerations above, we obtain the main result of this section.

\begin{theorem}\label{th:xyswap} 
	
	Assume that $\Sigma$ is algebraic (that is, compact) and $B$ is the canonical Bergman kernel fixed by the condition of vanishing of $\mathfrak A$-periods. Then the differentials that are $x-y$ dual to the generalized topological recursion differentials with the initial data $(\Sigma,dx,dy,B,\cP)$, solve the generalized topological recursion for the dual initial data $(\Sigma,dy,dx,B,\cP^\vee)$.
\end{theorem}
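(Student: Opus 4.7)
My plan is to reduce the theorem to the uniqueness of the generalized topological recursion output in the algebraic case: the differentials $\omega^{(g)}_n$, $2g-2+n>0$, are uniquely determined by symmetry, the prescribed initial data, holomorphy outside $\cP$ and the diagonals, the principal parts at $\cP$ prescribed by~\eqref{eq:mainTRrelation}, and vanishing $\mathfrak{A}$-periods in every argument. Thus the task is to verify each of these properties for the $x-y$ dual differentials $\omega^\vee$ with respect to the dual data $(\Sigma,dy,dx,B,\cP^\vee)$; uniqueness then identifies $\omega^\vee$ with the generalized TR output for the dual data.

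Most of the required properties come from globalizing the local Lemmas of this section. Lemma~\ref{lem:xyregular} shows that $\omega^\vee$ is holomorphic at every \regular{} point. Since $\omega$ satisfies the TR relation at each $q\in\cP$, Lemma~\ref{lem:xyrecursion} forces $\omega^\vee$ to be holomorphic at $\cP$; and applied in the dual direction via the involutivity $(\omega^\vee)^\vee=\omega$, the same lemma shows that the dual TR relation for $\omega^\vee$ at $q\in\cP^\vee$ is equivalent to the holomorphy of $\omega$ at $q$, which is automatic because $\cP\cap\cP^\vee=\varnothing$. Symmetry of $\omega^\vee$ is covered by (the dual form of) Proposition~\ref{prop:symmetry}, or equivalently by the manifest symmetry of the graph-sum expression~\eqref{eq:def-bW} in its $n$ labeled arguments together with the symmetry of $\omega$. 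The unstable initial data is $\omega^{\vee,(0)}_1=x\,dy$ by convention, and $\omega^{\vee,(0)}_2=B$ follows from a direct $\hbar^0$-order computation of the single-edge-graph contribution to $\bW_2$.

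The main obstacle is verifying the vanishing of $\mathfrak{A}$-periods of $\omega^{\vee,(g)}_n$ in each argument, for $2g-2+n>0$. I would proceed by induction on $2g-2+n$, exploiting the identity derived in the proof of Proposition~\ref{prop:symmetry},
\begin{equation}
(-1)^n\,\omega^{\vee,(g)}_n = \omega^{(g)}_n - \tilde{\tilde\omega}^{(g)}_n,
\end{equation}
in which $\tilde{\tilde\omega}^{(g)}_n$ depends only on $\omega^{(g')}_{n'}$ with $2g'-2+n'<2g-2+n$. Since $\omega^{(g)}_n$ has vanishing $\mathfrak{A}$-periods by the generalized TR construction (projection via the normalized $B$), the problem reduces to establishing the same for $\tilde{\tilde\omega}^{(g)}_n$. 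Writing $\sum_{r\ge 0}\bigl(-d_i\tfrac{1}{dy_i}\bigr)^r[u_i^r]=[u_i^0]+d_i\circ(\cdots)$ and expanding over subsets $S\subseteq\set{n}$, the contributions with $k\in S$ are $z_k$-exact and carry no $\mathfrak{A}$-period in $z_k$, while the remaining contributions reduce, through the graph combinatorics of $\bW$ together with the inductive hypothesis applied to both the lower-order $\omega^{(g')}_{n'}$ and their $x-y$ duals, to expressions with vanishing $\mathfrak{A}$-periods. Executing this combinatorial reduction carefully is the technical heart of the argument; once done, uniqueness immediately yields the theorem.
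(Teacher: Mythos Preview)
Your plan matches the paper's proof: use Lemmas~\ref{lem:xyregular} and~\ref{lem:xyrecursion} (the latter in both directions via $\omega=\omega^{\vee\vee}$) to see that $\omega^{\vee,(g)}_n$ can have poles only at $\cP^\vee$ and satisfies the dual recursion relation there, then check $\mathfrak A$-periods and invoke uniqueness in the algebraic case.

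The only divergence is the $\mathfrak A$-period step, which you flag as ``the technical heart'' and propose to treat by an induction that even invokes the hypothesis on the lower-order \emph{duals}. The paper's version is a single sentence: $\tilde{\tilde\omega}^{(g)}_n$ from~\eqref{eq:ttomega} is a total differential, so its $\mathfrak A$-periods vanish and those of $\omega^{\vee,(g)}_n$ coincide with those of $\omega^{(g)}_n$. Unpacked variable-by-variable (say in $z_k$): the $r_k\ge1$ part of the $k$-th operator is manifestly $d_k$-exact; for the $[u_k^0]$ part, setting $u_k=0$ in~\eqref{eq:def-bW} forces vertex~$k$ to carry exactly one undifferentiated leg, so the entire $z_k$-dependence of $[u_k^0]\bW_n$ is through a single factor $\omega^{(g')}_m(z_k,\cdots)$, whose $\mathfrak A$-periods already vanish by the projection~\eqref{eq:newproj} (or by the normalization of $B$ for $(g',m)=(0,2)$). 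You only ever need the $\mathfrak A$-period vanishing of the \emph{original} $\omega$'s, which is built into the construction---no induction on the duals and no delicate graph combinatorics is required.
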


\begin{proof}
By Lemma~\ref{lem:xyregular}, the only poles that $\omega^{\vee,(g)}_n$ can posses are \nonregular{} points which are not \xvital, that is, the \yvital{} \nonregular{} points. By the relations of the inverse duality transformation, the principal parts of the poles are defined by the same relations as for the generalized topological recursion with the $x-y$-dual spectral curve data $(\Sigma,dy,dx,B,\cP^\vee)$, since the differentials $\omega^{(g)}_n=\omega^{\vee\vee,(g)}_n$ are holomorphic at $q\in\cP^\vee$. Finally, since $\tilde{\tilde \omega}^{(g)}_n$ defined by~\eqref{eq:ttomega} are total differentials, vanishing of $\mathfrak A$-periods of $\omega^{\vee,(g)}_n$ follows from the same property of $\omega^{(g)}_n$. Thus, the differentials $\omega^{\vee,(g)}_n$ satisfy all relations of generalized topological recursion.
\end{proof}

\begin{remark} In the setup of the original CEO topological recursion an analog of this theorem also holds; it was conjectured in~\cite[Conjecture 3.14]{borot2023functional} and proved in~\cite[Theorem 1.14]{ABDKS1}. Note that by Theorem~\ref{th:CEO-compatibility}, the latter result is a special case of Theorem~\ref{th:xyswap}. 
\end{remark}

\section{Loop equations and compatibility with original topological recursion}\label{sec:loop}

\subsection{Towards the comparison of topological recursions}
The computation of the differentials of the generalized topological recursion is realized in two steps: 
\begin{itemize}
	\item At the first step we determine the principal parts of the poles of $\omega^{(g)}_n$ at the (\xvital) \nonregular{} points.
	\item At the second step we recover the differential $\omega^{(g)}_n$ from the principal parts of its poles.
\end{itemize}
 The loop equations introduced below provide yet another equivalent way to realize the first step, that is, to identify the principal part of the pole of $\omega^{(g)}_n$ at a \nonregular{} point. Loop equations considered in this section are applicable if $dy$ is holomorphic and non-vanishing and $dx$ has a (possibly multiple) zero at the corresponding \nonregular{} point. 
 
 In the case when the corresponding zero of $dx$ is simple, the loop equations are proved also to be equivalent to the original CEO topological recursion relations~\eqref{eq:CEO} (see also~\cite{BEO-loop,BS-blobbed}). Therefore, as a byproduct, we obtain that in the case of standard nondegenerate initial data as in Sect.~\ref{sec:nondegenerateTR}, the generalized topological recursion given by Def.~\ref{def:TRgeneral} coincides with the original CEO topological recursion reviewed in Sect.~\ref{sec:nondegenerateTR}. 

\subsection{Loop equations}
Let $dx$ be a meromorphic differential on~$\Sigma$ and $q\in\Sigma$ be a point such that $dx$ is holomorphic and has a zero of order $r-1$ at $q$ for some $r\ge2$. Denote by $x$ the local primitive of $dx$ with the integration constant such that $x$ vanishes at~$q$ (it is defined in some neighborhood of~$q$). Then we can choose a local coordinate~$\zeta$ on $\Sigma$ centered at $q$ such that $x(\zeta)=\zeta^r$.

\begin{definition}
We denote by $\Xi_q$ the space of germs of meromorphic $1$-forms on $\Sigma$ at the point $q$ whose Laurent expansions contain no monomials of the form $\frac{d\zeta}{\zeta^{kr+1}}$, $k\ge0$.
\end{definition}

The following lemma is an easy exercise.
\begin{lemma}\label{lem:Xi}
The space $\Xi_q$ admits the following three equivalent invariant descriptions:
\begin{itemize}
\item $\Xi_q$ is spanned by meromorphic differentials of the kind $\bigl(d\frac{1}{dx}\bigr)^k\alpha$ where $k\ge 0$ and $\alpha$ is holomorphic;
\item $\eta\in\Xi_q$ iff for any holomorphic function (a regular power series) $f$ we have $\res_{z=q}f(x(z))\,\eta(z)=0$;
\item $\eta\in\Xi_q$ iff $x_*\eta$ is holomorphic.
\end{itemize}
\end{lemma}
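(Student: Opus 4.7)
The plan is to work entirely in the local coordinate $\zeta$ provided by the setup, in which $x(\zeta)=\zeta^r$ and $dx=r\zeta^{r-1}d\zeta$. The defining condition on $\Xi_q$ then says that in the Laurent expansion $\eta=\sum_{n}a_n\zeta^n d\zeta$ the coefficients $a_{-kr-1}$ vanish for all $k\ge 0$. Notice that the forbidden monomials are precisely (up to the scalar $r$) the forms $x^{-k-1}dx$ for $k\ge 0$. All three equivalences then reduce to elementary manipulations of Laurent series in $\zeta$, which I would verify one at a time.

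First I would treat the equivalence with the third description, since it is the cleanest. A direct computation using the $r$ branches $\omega^j\zeta$ ($\omega=e^{2\pi i/r}$) of the cover $x\colon\zeta\mapsto\zeta^r$ yields
\begin{equation*}
x_*(\zeta^n\,d\zeta)\;=\;\sum_{j=0}^{r-1}(\omega^j\zeta)^n\,d(\omega^j\zeta)\;=\;\begin{cases}x^{m-1}dx,& n=rm-1,\\ 0,&\text{otherwise.}\end{cases}
\end{equation*}
Hence $x_*\eta$ is holomorphic at $x=0$ iff no term $\zeta^{rm-1}d\zeta$ with $m\le 0$ appears, i.e.\ iff $\eta\in\Xi_q$. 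For the second description I would expand a holomorphic germ as $f(x)=\sum_{m\ge 0}f_m x^m$, so that $\res_{z=q}f(x(z))\eta(z)=\sum_{m\ge 0}f_m a_{-rm-1}$; the vanishing for all such $f$ is equivalent to $a_{-rm-1}=0$ for all $m\ge 0$, i.e.\ to $\eta\in\Xi_q$.

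For the first description I would compute the action of $d\frac{1}{dx}$ on monomials:
\begin{equation*}
\Bigl(d\tfrac{1}{dx}\Bigr)\!\bigl(\zeta^n d\zeta\bigr)\;=\;d\Bigl(\tfrac{1}{r}\zeta^{n-r+1}\Bigr)\;=\;\tfrac{n-r+1}{r}\,\zeta^{n-r}d\zeta.
\end{equation*}
This operator shifts the exponent by $-r$ and annihilates exactly the monomials $\zeta^{jr-1}d\zeta$. Therefore, starting from the holomorphic subspace $\mathrm{span}\{\zeta^m d\zeta:m\ge 0\}$ and applying $(d\tfrac{1}{dx})^k$, $k\ge 0$, we reach precisely those $\zeta^n d\zeta$ whose exponent $n$ is \emph{not} congruent to $-1$ modulo $r$, plus (trivially) all holomorphic monomials. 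This is exactly the set of monomials in $\Xi_q$, so the span coincides with $\Xi_q$. I expect no substantive obstacle: the only point requiring care is the sum-over-sheets computation of $x_*$, where one must check the orthogonality $\sum_j\omega^{j(n+1)}=r\,\delta_{n+1\equiv 0\bmod r}$ to see that only the exponents $n=rm-1$ survive.
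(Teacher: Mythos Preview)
Your argument is correct and is exactly the kind of direct Laurent-series verification the paper has in mind; the paper itself omits the proof entirely, calling the lemma ``an easy exercise.'' One small imprecision: a single application of $d\tfrac{1}{dx}$ annihilates only $\zeta^{r-1}d\zeta$ (the case $j=1$), not all $\zeta^{jr-1}d\zeta$; the point you need (and implicitly use) is that the iterate $(d\tfrac{1}{dx})^k$ applied to $\zeta^m d\zeta$ with $m\ge 0$ and $m\not\equiv -1\pmod r$ has nonzero coefficient, which is what lets you hit every non-forbidden negative monomial.
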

Recall that the push-forward or trace homomorphism $x_*$ takes a given $1$-form~$\eta$ on $\Sigma$ to the $1$-form on the target of~$x$ regarded as a degree $r$ ramified covering obtained by summation of values of $\eta$ at all $r$ preimages of a given point.

\begin{theorem}
	\label{Thm:loopeq}
Let $\{\omega^{(g)}_n\}$ be the differentials of the generalized topological recursion (in the sense of Def.~\ref{def:TRgeneral}), and $q\in\cP$ be a \xvital{} \nonregular{} point such that both $dx$ and $dy$ are holomorphic at this point and $dy\ne0$. Then for any $k\ge0$ the following relation holds true:
\begin{equation} \label{eq:(k-1)-loop}
[u^k]e^{u y}\cW^{(g)}_n(z,u)\in\Xi_q,
\end{equation}
where the extended differentials $\cW^{(g)}_n(z,u)=\cW^{(g)}_n(z,u;z_{\set{n}})$ are defined by \eqref{eq:cT1x}--\eqref{eq:cW1} and $y$ is any (local) primitive of $dy$. This relation holds identically in $z_1,\dots,z_n$.
\end{theorem}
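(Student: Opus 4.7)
The strategy is to translate the loop equation on the $\cW_n$-side into a holomorphicity statement on the $x-y$ dual side, which is directly controlled by the generalized TR hypothesis through Lemma~\ref{lem:xyrecursion}.

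By the third characterization in Lemma~\ref{lem:Xi}, the membership $[u^k]e^{uy}\cW^{(g)}_n(z,u;z_{\set n})\in\Xi_q$ is equivalent to holomorphicity of its $x$-push-forward at $x(q)=0$. Under the hypotheses of the theorem, $y$ is holomorphic near $q$ with $dy(q)\ne 0$; moreover, since $q\in\cP$, Lemma~\ref{lem:xyrecursion} implies that all the $x-y$ dual correlators $\omega^{\vee,(g')}_{n'}$, $(g',n')\ne(0,1)$, are holomorphic at $q$ in each argument running there.

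The first technical step is the algebraic identity
\[
e^{uy}\,e^{u(\cS(u\hbar\partial_x)-1)y} \,=\, e^{u\,\cS(u\hbar\partial_x)\,y},
\]
which, applied to \eqref{eq:cW1}, yields
\[
e^{uy}\,\cW_n(z,u;z_{\set n})
\,=\, \frac{dx}{u\hbar}\, e^{u\,\cS(u\hbar\partial_x)\,y}\, e^{\hat\cT_0(z,u)}
\sum_{\substack{\set n = \sqcup_\alpha J_\alpha\\J_\alpha\ne\varnothing}}\prod_\alpha \cT_{|J_\alpha|}(z,u;z_{J_\alpha}),
\]
where $\hat\cT_0 \coloneqq \cT_0 - u(\cS(u\hbar\partial_x)-1)y$ retains only the graph-sum contributions involving $\omega^{(g')}_{k}$, $k\ge 1$. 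Using $u\cS(u\hbar\partial_x)y = \hbar^{-1}\bigl(y(x+u\hbar/2) - y(x-u\hbar/2)\bigr)$ (interpreted formally, since $x$ is not a local coordinate at $q$), the new prefactor encodes precisely the symmetric finite-difference shifts of $y$ along $x$ that are compatible with the push-forward structure of the $r$-fold cover $x\colon\Sigma\to\C$ near $q$.

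The second step is to apply the abstract $x-y$ swap combinatorics from Section~\ref{sec:xyswap} to rewrite the right-hand side as a graph sum whose vertex weights involve the \emph{dual} correlators $\omega^{\vee,(g')}_{n'}$ together with holomorphic derivatives of $y$. This is the formal $x-y$ swap identity, established inductively (on the negative Euler characteristic $2g-2+n$) via the two-index differentials $\cW_{m,n}$ and $\cW^\vee_{m,n}$ of Remark~\ref{rem:Inductive-x-y-swap}; it is the same combinatorial mechanism that underlies the proof of Theorem~\ref{th:xyswap}. Once this rewriting is in place, each coefficient $[u^k]$ becomes a finite polynomial combination of objects holomorphic at $q$ (the derivatives of $y$ and the dual correlators), so its push-forward $x_*$ is holomorphic at $x(q)=0$, which by Lemma~\ref{lem:Xi} proves the claim. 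The main obstacle lies here: one must carefully track how the operators $(-d\tfrac{1}{dy})^r[u^r]$ appearing in the $x-y$ swap interact with the shift operators $\cS(u\hbar\partial_x)$ at a point where $dx$ vanishes to order $r-1\ge 1$, so that $x$ itself is not a coordinate. The needed bookkeeping is the same as in Section~\ref{sec:xyswap}, but with the roles of the two sides reversed, $y$ now playing the role of a genuine local coordinate.
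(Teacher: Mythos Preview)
Your overall strategy---reduce to holomorphicity of the dual correlators via Lemma~\ref{lem:xyrecursion} and then read off membership in~$\Xi_q$---is the same as the paper's. However, your execution has a genuine gap at the crucial step.

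The paper's argument rests on a single precise identity (from~\cite{ABDKS1}), namely
\[
e^{uy}\,\cW_{n}(z,u)\;=\;-\sum_{k\ge 0}\bigl(-d\tfrac{1}{dx}\bigr)^{k}\,e^{uy}\,[v^{k}]\,\cW^{\vee}_{n,0}(z,v)\;+\;\delta_{n,0}\tfrac{e^{uy}dx}{\hbar u},
\]
which expresses the left-hand side directly as a sum of terms of the form $\bigl(d\tfrac{1}{dx}\bigr)^{k}\alpha$ with $\alpha$ holomorphic at~$q$ (since $\cW^{\vee}_{n,0}$ is built from the $\omega^{\vee}$'s, $dx$, and $\partial_y$-derivatives, all holomorphic there). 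This matches the \emph{first} characterization of~$\Xi_q$ in Lemma~\ref{lem:Xi} on the nose. The operators $\bigl(-d\tfrac{1}{dx}\bigr)^{k}$ are not incidental: they are precisely what introduces the poles at~$q$ (because $dx$ vanishes there to order $r-1$) while simultaneously forcing the result into~$\Xi_q$.

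Your write-up never produces this identity. Instead you assert that after the rewriting ``each coefficient $[u^k]$ becomes a finite polynomial combination of objects holomorphic at~$q$.'' That claim is false as stated: taking $k=0$ gives $[u^0]e^{uy}\cW^{(g)}_n=\omega^{(g)}_{n+1}(z,z_{\set n})$, which has a pole at~$q$, not a holomorphic germ. What is true is that it equals $\sum_{k}\bigl(-d\tfrac{1}{dx}\bigr)^{k}(\text{holomorphic})$; the poles come from the $\tfrac{1}{dx}$'s. Your ``obstacle'' paragraph also points in the wrong direction: you track the operators $(-d\tfrac{1}{dy})^{r}[u^{r}]$, which belong to the forward swap $\omega\mapsto\omega^{\vee}$, whereas the relevant identity is the \emph{inverse} transform and involves $(-d\tfrac{1}{dx})^{k}[v^{k}]$. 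Once you have the correct identity, either characterization of~$\Xi_q$ (the first, as in the paper, or the third via $x_*$, as you propose) finishes the proof immediately; without it, the argument does not close.
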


\begin{proof}
The extended differentials satisfy the following identity proved in~\cite{ABDKS1} in the context of general $x-y$ duality:
\begin{equation}\label{eq:looprel}
e^{u y}\cW_n(z,u)=e^{u y}\cW_{n,0}(z,u)=-\sum_{k=0}^\infty\bigl(-d\tfrac{1}{dx}\bigr)^ke^{u y}[v^k]\cW^\vee_{n,0}(z,v)+
\delta_{n,0}\frac{e^{u y}dx}{\hbar u},
\end{equation}
where the dual mixed extended differentials $\cW^\vee_{m,n}$ are defined by~\eqref{eq:cWmn-vee}. By construction, it involves dual differentials~$\omega^\vee_{n}$,~$dx$, and the derivatives of those with respect to~$y$. Therefore, by Lemma \ref{lem:xyrecursion}, it is holomorphic at $z=q$. We see that the loop equations are implied by the very form of~\eqref{eq:looprel}, by Lemma~\ref{lem:Xi}.
\end{proof}

\begin{definition} \label{eq:def-loop}
	Relation~\eqref{eq:(k-1)-loop} is called the \emph{$(k+1)$-loop equation}. In particular, for $k=0$, resp. $k=1$, it is also called the linear, resp. quadratic, loop equation.
\end{definition}

\begin{remark} We stress the difference between the loop equations as defined and discussed in this section and the so-called \emph{abstract loop equations} as in~\cite[Definition 2.14]{limits}. In general these two systems of equations are non-equivalent. 
\end{remark}

\subsection{Identification of the CEO and the generalized topological recursions}

\label{sec:identification}

\begin{proof}[Proof of Theorem~\ref{th:CEO-compatibility}]
Let us write expressions for $[u^k]e^{u y}\cW^{(g)}_n$ for $k=0$ and~$1$ corresponding to the linear and quadratic loop equations explicitly:
\begin{align}
[u^0]e^{u y}\cW^{(g)}_n&=\omega^{(g)}_{n+1}(z,z_{\set{n}}),
\\ [u^1]e^{u y}\cW^{(g)}_n&=
\frac{1}{2dx(z)}\Bigl(\omega^{(g-1)}_{n+2}(z,z,z_{\set{n}})+
\sum_{\substack{g_1+g_2=g\\J_1\sqcup J_2=\set{n}}}
\omega_{|J_1|+1}^{(g_1)}(z,z_{J_1})\omega_{|J_2|+1}^{(g_2)}(z,z_{J_2})\Bigr)
\\\notag &=y(z)\,\omega^{(g)}_{n+1}(z,z_{\set{n}})+\frac{1}{2dx(z)}\Bigl(\omega^{(g-1)}_{n+2}(z,z,z_{\set{n}})+
\\\notag &\qquad\qquad
\sum_{\substack{g_1+g_2=g,~J_1\sqcup J_2=\set{n}\\(g_i,|J_i|)\ne(0,0)}}
\omega_{|J_1|+1}^{(g_1)}(z,z_{J_1})\omega_{|J_2|+1}^{(g_2)}(z,z_{J_2})\Bigr).
\end{align}

If the point $q\in \Sigma$ is a simple zero of $dx$, that is, $r=2$, then the linear and quadratic loop equations determine the principal part of the pole of $\omega^{(g)}_{n+1}$ uniquely, and it is given explicitly by~\eqref{eq:CEO}, see e.g.~\cite{ABDKS1}. We see that in the setting of Sect.~\ref{sec:nondegenerateTR}, both versions of topological recursion produce differentials with equal principal parts of the poles at \nonregular{} points. Note that the differentials are uniquely reconstructed from their principal parts at these points via exactly the same formulas (cf. Equations~\eqref{eq:proj} and~\eqref{eq:newproj}). Therefore, the differentials themselves are equal.
\end{proof}

If $k>1$, then $[u^k]e^{u y}\cW^{(g)}_n$ has more complicated expression. In particular, it involves not only the differentials $\omega^{(g')}_{n'}$ themselves but also their derivatives with respect to~$x$. However, one can notice that it has a general form
\begin{equation}
[u^k]e^{u y}\cW^{(g)}_n=\frac{(y(z))^k}{k!}\omega^{(g)}_{n+1}(z,z_{\set{n}})+\dots,
\end{equation}
where the dots denote the terms involving the differentials $\omega^{(g')}_{n'}$ with $2g'-2+n'<2g-1+n$, that is, the terms that can be treated as already computed in the previous steps of recursion. This implies, for example, that for any $r\ge2$ the first $r$ loop equations (for $k=0,1,\dots,r-1$) determine uniquely the principal part of the pole of $\omega^{(g)}_{n+1}$, and thus are sufficient for the recursion, see details in~\cite{ABDKS1}. The higher loop equations (for $k\ge r$) are also satisfied but they are formal corollaries of the first~$r$ loop equations.


\section{Compatibility with other versions of topological recursion}

\label{sec:compatibility}

In addition to the original CEO topological recursion,  in the introduction we mentioned a variety of its extensions, where the setup is more general than the standard one and the formulas are adjusted to capture higher order zeros of $dx$ and singularities of $dy$. The goal of this section is to give a comparison of the generalized topological recursion and these extensions of the original CEO. 

\subsection{Compatibility with LogTR} \label{sec:LogTR}

Assume that given meromorphic differentials $dx$ and $dy$ on $\Sigma$ have no common zeroes. Then \nonregular{} points in the sense of Definition~\ref{def:singpoint} include not only zeroes of~$dx$ and~$dy$, but also their simple poles. This is the situation that was treated in~\cite{ABDKS-logTR-xy,ABDKS-log-sympl} via an extension of the original CEO topological recursion called the log topological recursion. Since the latter one was essentially derived from the idea of compatibility with the $x-y$ duality, we have the following statement:

\begin{theorem}
Assume that all zeroes of~$dx$ are simple and $dy$ is holomorphic and non-vanishing at the zeroes of~$dx$. Let the set $\cP$ of \xvital{} \nonregular{} points consist of the zeroes of $dx$ and all poles of $dy$ (they are simple and distinct from the poles of~$dx$). Then the generalized topological recursion differentials with this spectral curve data coincide with the differentials of the log topological recursion of \cite{ABDKS-logTR-xy}.
\end{theorem}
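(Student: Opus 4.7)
The plan is to match the two recursions point-by-point on $\cP$. Both the generalized topological recursion of Definition~\ref{def:TRgeneral} and the log topological recursion of~\cite{ABDKS-logTR-xy} produce globally meromorphic $n$-differentials whose poles in the distinguished variable lie only on $\cP$, and both recover the full differential from its principal parts via the same residue-against-$B$ transform~\eqref{eq:newproj} (in the log case this is also the standard projection since $B$ is the Bergman kernel). It therefore suffices to prove that the principal parts of $\omega_n^{(g)}$ at each $q\in\cP$ coincide for the two recursions. By the hypotheses of the theorem, $\cP$ splits into two disjoint parts: the simple zeroes of $dx$, at which $dy$ is holomorphic and nonvanishing, and the simple poles of $dy$, at which $dx$ is holomorphic and nonvanishing (since poles of $dx$ and $dy$ do not collide).

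First I would handle the simple zeroes of $dx$. At such a point $q$ the local data falls within the standard CEO setup of Section~\ref{sec:nondegenerateTR}, so Theorem~\ref{th:CEO-compatibility} directly identifies the principal part of $\omega_n^{(g)}$ produced by generalized TR with that given by the original CEO formula~\eqref{eq:CEO}. Since LogTR by construction reduces to CEO at the usual simple zeroes of $dx$, the principal parts agree at these points of $\cP$.

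Next I would treat the simple poles of $dy$. Here I would appeal to Lemma~\ref{lem:xyrecursion}: the defining condition~\eqref{eq:mainTRrelation} for the principal part of $\omega_n^{(g)}$ at $q$ in generalized TR is equivalent to the requirement that the $x-y$ dual differentials $\omega_n^{\vee,(g)}$ be holomorphic at $q$ in each of their arguments. Since $q$ is a simple pole of $dy$ and $dx$ is regular nonvanishing there, in the dual picture $q$ plays the role of a simple pole of the new ``$dx$''; and LogTR was introduced in~\cite{ABDKS-logTR-xy,ABDKS-log-sympl} precisely as the extension of CEO dictated by compatibility with $x-y$ duality at simple poles of $dy$. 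Unpacking its defining local condition at such a point reproduces exactly the holomorphy of the $x-y$ dual. Hence the two recursions impose identical principal parts at every $q\in\cP$, and therefore produce equal differentials globally.

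The main obstacle I expect is the bookkeeping in this last step: translating between the explicit residue-style formulas used to define LogTR in~\cite{ABDKS-logTR-xy} and the dual-holomorphy characterization supplied by Lemma~\ref{lem:xyrecursion}. This is not a new computation but a careful dictionary, which should be read off from the derivation of LogTR in loc.\ cit., where the $x-y$ compatibility was the very guiding principle. Once this dictionary is in place, the global reconstruction from principal parts — identical on both sides — concludes the proof.
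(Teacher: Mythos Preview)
Your proposal is correct and follows essentially the same route as the paper: split $\cP$ into simple zeroes of $dx$ (handled by Theorem~\ref{th:CEO-compatibility}) and simple poles of $dy$ (handled via Lemma~\ref{lem:xyrecursion} and the $x-y$ duality characterization of LogTR), then conclude by equality of principal parts and the common projection~\eqref{eq:newproj}. The only refinement the paper adds is that your ``dictionary'' step is made concrete by quoting the explicit principal-part formula at a simple pole of $dy$ (an $x-y$ dual of \cite[Lemma~5.5]{ABDKS-logTR-xy}), namely $\omega^{(g)}_{n+1}$ is holomorphic at $q$ for $n>0$ and equals $[\hbar^{2g}]\bigl(\cS(\alpha\hbar\partial_x)^{-1}y\bigr)dx$ plus holomorphic for $n=0$, which is precisely the LogTR correction term by definition.
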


\begin{proof}
This theorem is essentially a reformulation of the main result in \cite{ABDKS-logTR-xy}. Its technical formulation is contained in the following computation:

\begin{lemma}[{\cite{ABDKS-logTR-xy}, an $x-y$ dual of Lemma 5.5}] 
Let $q\in\Sigma$ be a simple pole of $dy$ with the residue $\alpha^{-1}$ and $dx$ be holomorphic and non-vanishing at~$q$. Let $\{\omega^{(g)}_n\}$ be a system of differentials such that their $x-y$ dual differentials $\omega^{\vee,(g)}_n$ are all holomorphic at~$q$ (with respect to each argument). Then the original differentials have the following expansion at the point~$z=q$:
\begin{equation}\label{eq:log-contribution}
\omega^{(g)}_{n+1}(z,z_{\set n})=
\begin{cases}
(\text{holomorphic in }z),&n>0,\\
[\hbar^{2g}]\left(\dfrac{1}{\cS(\alpha\hbar\partial_x)}y(z)\right)dx(z)+(\text{holomorphic in $z$}),&n=0.
\end{cases}
\end{equation}
\end{lemma} 

By Lemma~\ref{lem:xyrecursion}, the contribution to the pole of the generalized topological recursion differentials (in the sense of Def.~\ref{def:TRgeneral}) at a point $q\in \Sigma$ as in Lemma is given by the right hand side of~\eqref{eq:log-contribution}. On the other hand, the same contribution of this pole is taken for LogTR differentials, by definition. The contributions of the poles at zeroes of~$dx$ for both recursions also agree by Theorem~\ref{th:CEO-compatibility}. Thereby, the differentials of both recursions have equal principal parts of the poles. Therefore, the differentials themselves are equal as well.
\end{proof}

\subsection{Generalized topological recursion in families} In order to proceed with the comparison with more general versions of topological recursion, we need an extra tool. Assume that the components of the initial data $(\Sigma,dx,dy,B,\cP)$ of the generalized topological recursion depend on an additional parameter. In this section, we discuss conditions ensuring that the corresponding family of differentials analytically depend on this additional parameter. It might happen that some number of \nonregular{} points collapse together for the limiting value of parameter; in this case the differentials of the generalized topological recursion have a limit and this limit coincides with the differentials of the limiting initial data. Roughly speaking, the condition for the compatibility with the limit is that locally the collapsing \nonregular{} points should be either all treated as \xvital{} or all of them are \yvital. The limiting point should be either \regular{} in the sense of Definition~\ref{def:singpoint} or it should be of the same kind as the colliding \nonregular{} points, \xvital{} or \yvital, respectively.

For example, assume that both $dx$ and $dy$ have a pole at the same point for the limiting parameter value. Such a point is regarded as \regular{} and does not contribute to the topological recursion. These poles can split under variation of the parameter, and the deformed differentials $dx$, $dy$ may have both poles of smaller order and zeroes nearby. For the compatibility with the limits we should treat all ``newborn'' \nonregular{} points either all as \xvital, or all as \yvital, even though they might include both simple zeroes of~$dx$ and simple zeroes of~$dy$.

In order to formulate the above considerations in a more formal way, we consider the following setting. Let $U\subset\Sigma$ be an open domain with a smooth boundary such that the given differentials $dx$, $dy$ are holomorphic and non-vanishing along $\partial U$. If $\Sigma$ is not compact, we assume that the closure of~$U$ is compact. For example, we can take for $U$ the union of small discs centered at those \nonregular{} points which are chosen as \xvital{} for the pair of differentials $(dx,dy)$.

\begin{theorem} \label{th:analytic}
	Assume that the given spectral curve data varies in an analytic family such that $dx$ and $dy$ are regular and nonvanishing along $\partial U$ for all parameter values. Let us take for the \xvital{} and \yvital{} points those \nonregular{} points that belong to~$U$ and to its complement, respectively, for all parameter values. In this setting, the family of TR differentials depend on parameters analytically.
\end{theorem}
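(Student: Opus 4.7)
The plan is to proceed by induction on the negative Euler characteristic $2g-2+n$ and, at each inductive step, to rewrite the local residue projection~\eqref{eq:newproj} as a global contour integral along the fixed contour $\partial U$, to which the theorems of complex analysis in families apply directly.

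For the base cases $(g,n)\in\{(0,1),(0,2)\}$ the statement is immediate, since $\omega^{(0)}_1=y\,dx$ and $\omega^{(0)}_2=B$ depend analytically on the parameter by the standing hypotheses on the family. For the inductive step, suppose every $\omega^{(g')}_{n'}$ with $2g'-2+n'<2g-2+n$ is analytic in the parameter as a meromorphic form on $\Sigma^{n'}$. The formulas~\eqref{eq:cT1x}--\eqref{eq:cW1} together with~\eqref{eq:newTR} express $\bar\omega^{(g)}_n$ as a polynomial in $dx$, $dy$, the previously computed differentials $\omega^{(g')}_{n'}$, and their derivatives. Consequently, $\bar\omega^{(g)}_n$ itself depends analytically on the parameter, as a meromorphic form on $\Sigma^n$.

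Next, without loss of generality, I would shrink $U$ to a sufficiently small tubular neighborhood of the set of \xvital{} points with each connected component simply connected; this preserves the boundary hypotheses of the theorem (since $dx$, $dy$ are holomorphic and nonvanishing on $\partial U$) and makes the primitive $\int^{\tilde z}B(\cdot,z)$ single-valued on $U$ whenever the free variables $z,z_1,\dots,z_{n-1}$ are chosen outside $\bar U$. By the boundary hypothesis, no \nonregular{} point crosses $\partial U$, so $\cP$ coincides with the set of \nonregular{} points in $U$ for every parameter value, while all \yvital{} \nonregular{} points and all diagonal poles $\tilde z=z_i$ lie in $\Sigma\setminus\bar U$. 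The residue theorem then yields
\begin{equation*}
\omega^{(g)}_n(z,z_{\set{n-1}})=\frac{1}{2\pi i}\oint_{\tilde z\in\partial U}\bar\omega^{(g)}_n(\tilde z,z_{\set{n-1}})\int^{\tilde z}B(\cdot,z),
\end{equation*}
valid for $z,z_1,\dots,z_{n-1}\in\Sigma\setminus\bar U$. Since the contour $\partial U$ can be chosen independent of the parameter, and the integrand is holomorphic in $\tilde z$ on a neighborhood of $\partial U$ and analytic in the parameter there, the right-hand side is analytic in the parameter. This establishes analyticity of $\omega^{(g)}_n$ on $(\Sigma\setminus\bar U)^n$; meromorphic continuation in the free variables extends the conclusion to all of $\Sigma^n$.

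The main obstacle I anticipate is the uniform choice of $U$ across the family: one must select a single domain (or at least a parameter-locally constant one) that simultaneously contains all \xvital{} points, excludes all \yvital{} ones and all the diagonal loci one cares about, and has $dx$, $dy$ holomorphic and nonvanishing on $\partial U$ over the entire parameter range — precisely the hypothesis of the theorem. A minor technical point is to ensure that the primitive of $B$ entering the contour integral is single-valued, which is why one takes $U$ a disjoint union of discs; any residual ambiguity is an additive constant that drops out of the residue calculation. Once $U$ is fixed, the remaining ingredients — the algebraic structure of $\bar\omega^{(g)}_n$ supplied by~\eqref{eq:newTR} and the analytic dependence of a fixed-contour integral on parameters — are routine.
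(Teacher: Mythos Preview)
Your argument is correct and follows the same overall strategy as the paper: replace the sum of residues in~\eqref{eq:newproj} by a single contour integral along the fixed boundary $\partial U$, observe that the integrand is built polynomially from previously known analytic data, and conclude by the analytic dependence of a fixed-contour integral on parameters.

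The one substantive difference is how the potential multivaluedness of the primitive $\int^{\tilde z}B(\cdot,z)$ is handled when $U$ is not a union of discs. You shrink $U$ (locally in the parameter) to a disjoint union of simply-connected discs around the current \xvital{} points, which is legitimate since analyticity is local in the parameter and the \nonregular{} points cannot cross $\partial U$. The paper instead integrates by parts, using that $\bar\omega^{(g)}_{n}$ is itself exact with the explicit global primitive
\[
\int^{z}\bar\omega^{(g)}_{n}(\cdot,z_{\set{n-1}})=\sum_{r\ge1}(-\partial_y)^{r-1}[u^r]\frac{\overline\cW^{(g)}_{n-1}(z,u;z_{\set{n-1}})}{dy},
\]
to obtain
\[
\omega^{(g)}_{n}(z,z_{\set{n-1}})=-\frac{1}{2\pi\sqrt{-1}}\int_{\tilde z\in\partial U}B(\tilde z,z)\int^{\tilde z}\bar\omega^{(g)}_{n}(\cdot,z_{\set{n-1}}),
\]
which is valid for the given $U$ as it stands, without any shrinking. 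Your approach is more elementary; the paper's is a bit tidier because it works directly with the domain supplied by the hypothesis and avoids the local-in-parameter reduction.
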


\begin{proof}
	In the setting of the theorem, the residues entering recursion relation~\eqref{eq:newproj} can be replaced by contour integrals, and we have
	\begin{equation}
		\begin{aligned}
			\omega^{(g)}_{n+1}(z,z_{\set n})&=\frac{1}{2\pi \sqrt{-1}}\int\limits_{\tilde z\in \partial U}\Bigl(\tilde\omega^{(g)}_{n+1}(\tilde z,z_{\set n})
			\int^{\tilde z}\!\!B(\cdot,z)\Bigr)
			\\&=
			-\frac{1}{2\pi \sqrt{-1}}\int\limits_{\tilde z\in \partial U}\Bigl(
			B(\tilde z,z)
			\int^{\tilde z}\!\!\tilde\omega^{(g)}_{n+1}(\cdot,z_{\set n})\Bigr)
		\end{aligned}
	\end{equation}
	for $z,z_1,\dots,z_n\in\Sigma\setminus \bar U$, where $\tilde \omega^{(g)}_{n+1}$ is given by~\eqref{eq:newTR} and $\bar U=U\cup \partial U$. An expression on the second line is valid for any $U$ since~\eqref{eq:newTR} implies that $\tilde \omega^{(g)}_{n+1}$ is a total differential: its primitive can be expressed as
	\begin{equation}
		\int^{z}\!\!\tilde\omega^{(g)}_{n+1}(\cdot,z_{\set n})=
		\sum_{r\ge1}\bigl(-\partial_y\bigr)^{r-1}[u^r]\frac{\overline\cW^{(g)}_n(z,u;z_{\set n})}{dy}.
	\end{equation}
	These explicit integral formulas imply that the generalized topological recursion differentials $\omega^{(g)}_n$ are holomorphic on $(\Sigma\setminus \bar U)^n$ and depend analytically on parameters, since all differentials involved are regular on the integration contour. We conclude that the limit of the generalized topological recursion differentials coincides with the generalized topological recursion differentials of the limiting spectral curve data on an open domain $(\Sigma\setminus \bar U)^n\subset \Sigma^n$, and hence, this is true for the whole $\Sigma^n$.
\end{proof}

\begin{example}\label{ex:rsfirst}
	Consider the rational spectral curve with
	\begin{equation}
		x=z^r+\tau\,z,\qquad y=z^s,\qquad r\ge2,\quad -r<s<0\text{ or }s=1.
	\end{equation}
	For $\tau\ne 0$ the CEO topological recursion can be applied. Its differentials have poles at the critical points of the function~$x$. They are defined by the equation $r z^{r-1}+\tau =0$. These differentials have a limit as $\tau\to 0$ and the limiting differentials are the differentials of the generalized topological recursion with the limiting spectral curve $(x,y)=(z^r,z^s)$ with the unique \xvital{} point $z=0$.
	
	The set of \yvital{} \nonregular{} points is empty for all~$\tau$ and the dual differentials are trivial $\omega^{\vee,(g)}_n=0$ for $2g-2+n>0$ identically in~$\tau$. The generalized topological recursion differentials corresponding to the spectral curve $(z^r,z^s)$ are discussed in more details in Sect.~\ref{sec:rs-examples}.
\end{example}

\subsection{Compatibility with the Chekhov--Norbury setup}

Chekhov and Norbury applied the original CEO topological recursion in a bit more general setup~\cite{CN}. Namely, they still allowed $x$ to have only non-degenerate critical points, but at each critical point of $x$ they allowed that either $y$ has a simple pole or $y$ is regular and $dy$ is non-vanishing. Note that this means that $ydx$ is still regular at each critical point of $x$. The formulas of the recursion are then exactly the same as the ones for the original CEO topological recursion~\eqref{eq:CEO}. However, the argument as in the proof of Theorem~\ref{th:CEO-compatibility} is not directly applicable, since in this case the \emph{abstract loop equations}  in the sense of~\cite[Definition 2.14]{limits} are no longer compatible with the loop equation in Section~\ref{sec:loop}. 
However, we still have 

\begin{theorem} \label{thm:CN}
	In the Chekhov--Norbury setup the differentials of the original topological recursion coincide with the differentials of the generalized topological recursion, where the set $\cP$ is chosen to be the set of the critical points of $x$. 
\end{theorem}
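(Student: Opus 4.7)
The strategy is to prove the theorem by induction on $2g-2+n$, reducing to the comparison of the principal parts of $\omega^{(g)}_n$ at each critical point of $x$ and invoking a deformation argument at those critical points of Chekhov--Norbury type (i.e., where $y$ has a simple pole). Both recursions produce globally defined meromorphic differentials with vanishing $\mathfrak A$-periods and poles only at critical points of $x$ (in the generalized setting these are precisely the \xvital{} \nonregular{} points under the chosen $\cP$), so it suffices to check agreement of principal parts at each critical point. At critical points where $dy$ is regular and nonvanishing, this is immediate from the first item of Theorem~\ref{th:CEO-compatibility}. The remaining task is to treat a critical point $q$ of $x$ at which $y$ has a simple pole of residue $a\ne 0$; in a local coordinate $z$ with $x-x(q)=z^2$ we have $y=a/z+(\text{regular})$.

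To handle such a $q$, I would deform $y$ to $y_t = y + a(\eta_t - \eta)$, where $\eta$ is a meromorphic function on $\Sigma$ with a simple pole of residue $1$ at $q$ (and arbitrary auxiliary poles) and $\eta_t$ is an analytic family of such functions with the pole shifted to a nearby point $q_t\neq q$, $q_0=q$; such families exist on any compact $\Sigma$ by Riemann--Roch and in general by a local germ argument. For $t\neq 0$, $y_t$ is regular at $q$ with $dy_t(q)\neq 0$, placing us in the standard CEO setup at $q$, while the new simple pole of $y_t$ at $q_t$ satisfies $r=1$, $s=-2$, hence $r+s\leq 0$: by Definition~\ref{def:singpoint} it is a regular point and does not enter $\cP$. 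Thus the set of \xvital{} \nonregular{} points is constant in $t$, and Theorem~\ref{th:CEO-compatibility} gives CEO TR $=$ generalized TR for the deformed data whenever $t\neq 0$.

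It then remains to pass to the limit $t\to 0$ on both sides. On the generalized side, Theorem~\ref{th:analytic} applied to a small disc $U$ around $q$ containing $q_t$ for $|t|$ small shows that the generalized TR differentials depend analytically on $t$ and their $t=0$ limit equals the generalized TR for the original Chekhov--Norbury data. On the CEO side, the residue at $q$ in~\eqref{eq:proj} for $t\neq 0$ can be rewritten as an integral over the fixed contour $\partial U$: the lower-order differentials $\omega^{(g')}_{n'}$ with $2g'-2+n'>0$ depend analytically on $t$ (by the inductive hypothesis) and have no poles at $q_t$ (they have poles only at critical points of $x$), so the integrand is holomorphic on $\partial U$ and depends analytically on $t$ there; as $t\to 0$ its uniform limit on $\partial U$ is the Chekhov--Norbury integrand, whence the contour integrals converge accordingly. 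Matching limits, the CN TR equals the generalized TR at $t=0$, completing the induction; the extension to several Chekhov--Norbury-type critical points is immediate by deforming at each of them independently.

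The main technical point is precisely this uniform-convergence step: pointwise at $z=q$, the kernel $1/((y_t(z)-y_t(\sigma z))\,dx)$ is discontinuous in $t$ (a double pole for $t\neq 0$, regular at $t=0$), but on the contour $|z-q|=\rho$ with $\rho$ larger than the distance from $q$ to $q_t$ every ingredient varies analytically in $t$, so the residue, recast as a contour integral on that contour, is continuous in $t$.
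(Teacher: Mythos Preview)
Your approach is essentially the paper's: deform $y$ so that its pole moves off the critical point of $x$, observe that for the deformed data the standard CEO setup applies (so CEO $=$ generalized by Theorem~\ref{th:CEO-compatibility}), and then pass to the limit using Theorem~\ref{th:analytic} on the generalized side. The only substantive difference is that you supply a self-contained contour-integral argument for continuity of the CEO/CN side in the limit $t\to 0$, whereas the paper simply invokes \cite[Theorem~5.8]{limits} for that step. Your direct argument is sound: for $t\ne 0$ the CEO integrand has its sole pole in $U$ at $q$ (the kernel \emph{vanishes} at $q_t$ and $\sigma(q_t)$, and the lower-order $\omega^{(g')}_{n'}$ have poles only at critical points of $x$), so the residue equals the $\partial U$-integral, and on $\partial U$ everything varies analytically in $t$. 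One small slip: at the shifted simple pole $q_t$ one has $dy_t\sim \zeta^{-2}\,d\zeta$, hence $s=-1$ (not $s=-2$) in the sense of Definition~\ref{def:singpoint}; the conclusion $r+s=0\le 0$ is unaffected. Also, to invoke Theorem~\ref{th:CEO-compatibility} for the deformed data you need the standard setup at \emph{every} critical point, so the several-pole case should be handled by deforming at all Chekhov--Norbury-type critical points simultaneously (as you say in your last sentence), not one at a time.
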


\begin{proof} This theorem follows from Theorem~\ref{th:analytic} above and~\cite[Theorem 5.8]{limits}. 
	
We can choose a $U\subset \Sigma$ to be a union of open disks around $\cP$ that contains no other \nonregular{} points, and deform the pair $(x_0,y_0)\coloneqq (x,y)$ to $(x_\epsilon,y_\epsilon)$, $\epsilon$ is a parameter of deformation, such that 
\begin{itemize}
	\item $x_\epsilon = x$;
	\item  in each connected component of $U$ either $y_\epsilon$ or $y_\epsilon^{-1}$ remains to be a local coordinate;
	\item the poles of $y_\epsilon$ are disjoint from $\cP$ for $\epsilon \not=0$ 
\end{itemize}	
(it is possible to consider such a deformation locally on a small neighborhood of $U\cup \partial U$). 

In this setup, the generalized topological recursion differentials coincide with the original CEO topological recursion differentials computed for $\epsilon\not=0$ by Theorem~\ref{th:CEO-compatibility}. On the other hand, both systems of differentials (identified for $\epsilon\not=0$) in the limit $\epsilon\to 0$ coincide with the generalized topological recursion differentials by Theorem~\ref{th:analytic} (to apply this theorem we note that the poles of $y_\epsilon$ in $U$ are~\regular{} in our assumptions) and with the Chekhov--Norbury topological recursion differentials by a very special case of~\cite[Theorem 5.8]{limits}. This implies the statement of the theorem. 
\end{proof}

\subsection{Comparison with the Bouchard--Eynard recursion}

A version of topological recursion due to Bouchard and Eynard (BE) is designed to work for the multiple critical points of $x$~\cite{BE}. It provides a different expression for the differential $\tilde\omega^{(g)}_{n+1}(z,z_{\set n})$ defining the principal part of the pole of constructed differential $\omega^{(g)}_{n+1}(z,z_{\set n})$. We would not recall the definition in this paper, but rather refer to the original paper of Bouchard--Eynard~\cite{BE} as well as to~\cite{BBCCN} and~\cite{limits}, where a further generalization of this version of topological recursion and its behavior in the families is discussed in detail. To be definite about the definition that we use we refer to~\cite[Definition 2.11]{limits}. 

It is important for this definition that~$x$ and~$y$ are univalued in a neighborhood of the considered point. In the most general setup one can allow $y$ to have poles at the critical points of $x$, but then some extra conditions are required to have the resulting multi-differentials symmetric in all variable~\cite{BBCCN}. In terms of~\cite{limits} these are the extra requirements of the so-called \emph{local admissibility}, see~\cite[Definition 2.5]{limits}. However, no extra conditions are needed if at each critical point of $x$ either $y$ has a simple pole or $y$ is regular and $dy$ is non-vanishing. 

\subsubsection{Compatibility in special cases}

\begin{theorem} If at each critical point of $x$ either $y$ has a simple pole or $y$ is regular and $dy$ is non-vanishing, then the differentials of the BE topological recursion coincide with the differentials of the generalized topological recursion, where the set $\cP$ is chosen to be the set of the critical points of $x$.
\end{theorem}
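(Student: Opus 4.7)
The plan is to imitate the proof of Theorem~\ref{thm:CN}, but now deforming $x$ rather than $y$: split the multiple critical points of $x$ into simple ones and then invoke the already-established identification of the generalized topological recursion with the Chekhov--Norbury setup in the simple case.

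First, I would choose an open domain $U\subset\Sigma$ which is a union of small disks, one centered at each critical point of~$x$ of multiplicity $\ge 2$, chosen small enough so that $dx$ and $dy$ are regular and non-vanishing on $\partial U$ and $U$ contains no other \nonregular{} points. Next, I would construct an analytic deformation $x\leadsto x_\epsilon$ supported inside $U$, while keeping $y_\epsilon\coloneqq y$, such that for small $\epsilon\ne 0$ all critical points of $x_\epsilon$ inside $U$ are simple. This deformation should be chosen so that at each new simple critical point of $x_\epsilon$ the function $y$ is either regular with $dy\ne 0$ or has a simple pole. If the original critical point $q$ is a regular point of $y$ with $dy(q)\ne 0$, a generic small perturbation of $x$ automatically yields $r-1$ split critical points at which $y$ is still regular and $dy$ is non-vanishing. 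If instead $y$ has a simple pole at $q$, one chooses the perturbation so that the split critical points avoid the pole of $y$; at those split points $y$ is then regular, while the nearby pole of $y$ is a \regular{} point in the sense of Definition~\ref{def:singpoint} (there $r=1$, $s=-1$, so $r+s=0$), hence contributes no new $\xvital$ point.

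For $\epsilon\ne 0$ small, the Bouchard--Eynard recursion reduces to the original CEO topological recursion since all critical points of $x_\epsilon$ are simple, and the Chekhov--Norbury hypotheses hold at each of them. Theorem~\ref{thm:CN} then gives that the generalized topological recursion differentials with $\cP_\epsilon$ equal to the set of critical points of $x_\epsilon$ coincide with the CEO differentials, and hence with the Bouchard--Eynard differentials for all $\epsilon\ne 0$.

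Finally, I would take the limit $\epsilon\to 0$. Theorem~\ref{th:analytic} guarantees that the generalized topological recursion differentials converge to those of the limiting spectral curve data $(\Sigma,dx,dy,B,\cP)$; the analogous convergence statement for the Bouchard--Eynard differentials under a splitting deformation of~$x$ is the content of~\cite[Theorem 5.8]{limits}. The equality of the two families for $\epsilon\ne 0$ then implies the equality at $\epsilon=0$, which is the statement of the theorem. The main technical obstacle is the case where $y$ has a simple pole at a multiple critical point of~$x$: the values of $y$ at the split critical points blow up as $\epsilon\to 0$, so the convergence of neither recursion is obvious from the recursive formulas themselves. One must genuinely rely on the fact that the hypotheses of both limit theorems (Theorem~\ref{th:analytic} and its BE counterpart in~\cite{limits}) only concern the behavior of $dx_\epsilon$ and $dy_\epsilon$ on~$\partial U$, which is fixed by our construction.
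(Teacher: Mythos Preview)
Your proposal is correct and follows essentially the same route as the paper: deform $x$ to split the higher-order critical points into simple ones, identify the two recursions for $\epsilon\ne 0$ via Theorem~\ref{thm:CN}, and pass to the limit using Theorem~\ref{th:analytic} on the generalized side and \cite[Theorem~5.8]{limits} on the Bouchard--Eynard side. One small point: to invoke Theorem~\ref{th:analytic} as stated, $U$ should contain disks around \emph{all} critical points of~$x$ (the deformation may still be supported only near the higher-order ones), since that theorem requires the entire \xvital{} set $\cP_\epsilon$ to lie inside~$U$.
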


\begin{proof}  Consider a variation of initial data of the recursion parameterized by a small parameter $\epsilon$ such that for $\epsilon\not=0$ each multiple critical point $q$ of $x$ splits into a number of simple critical points $q_1,\dots,q_r$ in a neighborhood of $q$, where $y$ is unramified, that is, either $y$ or $y^{-1}$ is a local coordinate. 
	
	In this situation, it is proved 
in~\cite{limits} that the BE topological recursion is compatible with the limit $\epsilon\to 0$ (it is a special case of Theorem 5.8 in op.~cit.). By Theorem~\ref{th:analytic} the same property holds for the generalized topological recursion. Moreover, we have specified the condition such that for $\epsilon\not=0$ we are in the setup of Theorem~\ref{thm:CN}, that is, for $\epsilon \ne 0$ the two versions of topological recursion do coincide. 

Combining these observations, we conclude that the BE topological recursion agrees with the generalized topological recursion in this setting.
\end{proof}

\subsubsection{Non-compatibility in general}

However, the two topological recursions do not agree for more complicated degeneracies of the spectral curve data. The BE topological recursion does not behave properly if $y$ is ramified: it might happen that $y$ attains equal values on two distinct preimages of~$x$, and this might produce undesired singularities. 

The generalized topological recursion of this paper is free of this disadvantage (in particular, let us stress that there are no requirements of \emph{local admissibility} needed for the BE version of topological recursion) and it is compatible with limits in a much more general setting than the BE one.

Nevertheless, as the following two examples show, the BE topological recursion in certain cases can give an interesting answer, different from the one given by generalized topological recursion, so sometimes the BE topological recursion in a degenerate case may be preferable.

\begin{example}
	Consider the rational spectral curve
	\begin{equation}\label{eq:rscurve}
		x=z^r,\qquad y=z^s,\qquad r\ge2,\quad s<0\text{ or }s=1,
	\end{equation}
	of Example~\ref{ex:rsfirst}. One can deform this spectral curve in such a way that its \nonregular{} point $z=0$ splits into a number of nondegenerate critical points of $x$. For instance, one can take a deformation of the form $x=x(z)$, $y=z^s$, where $x(z)$ is a polynomial of degree $r$ or a rational function having a pole of order $r$ at infinity. If $x(z)$ is generic, then the CEO topological recursion for the deformed curve can be applied. Its differentials can be given by an explicit formula, see~\eqref{eq:dual-to-trivial} below. These differentials have a limit as $x$ degenerates to $z^r$, the limiting differentials are the differentials of the generalized topological recursion for the spectral curve~\eqref{eq:rscurve} with the \nonregular{} point $z=0$ chosen as an \xvital{} one, and these differentials are given by the same Eq.~\eqref{eq:dual-to-trivial} for the limiting spectral curve. In particular, for $s<-r$ the limiting differentials are trivial.
	
	Consider, however, a slightly different situation. Assume that the exponents $r,s$ satisfy the relation
	\begin{equation} \label{eq:r-s-comparison}
		r=m(r+s)\pm1
	\end{equation}
	for some positive integer~$m$. In that case, consider the deformation of the spectral curve~\eqref{eq:rscurve} of the following special form
	\begin{equation}\label{eq:rscurve-deformed}
		x(z)=z^{\pm1}\phi(\Theta(z)),\qquad y=\frac{\Theta(z)}{x(z)},
	\end{equation}
	where $\phi(\theta)$ and $\Theta(z)$ are polynomials of degrees $m$ and $r+s$, respectively, with the unit top coefficients. 
	
	If these polynomials are generic, then $x(z)$ has $m\,(r+s)$ simple critical points, and the CEO topological recursion can be applied and is equivalent to the generalized topological recursion if the set $\cP$ is chosen to be the set of the critical points of $x$. If $\phi$ and $\Theta$ degenerate to monomials, then the spectral curve~\eqref{eq:rscurve-deformed} degenerates to~\eqref{eq:rscurve}. It proves out, however, that the topological recursion differentials for the spectral curve~\eqref{eq:rscurve-deformed} \emph{do not converge} to the differentials of the generalized topological recursion for the spectral curve~\eqref{eq:rscurve}. The reason is that not only the critical points of~$x$ but also other \nonregular{} points collapse together at the point $z=0$ of the limiting curve, namely, the ones which are the critical points of the function $y$.
	Since these \nonregular{} points are not treated as \xvital{} for the generalized topological recursion for generic $\phi(\theta)$ and $\Theta(z)$, Theorem~\ref{th:analytic} of compatibility with the limits cannot be applied.
	
	However, the CEO topological recursion differentials of the family~\eqref{eq:rscurve-deformed} do have a definite limit as $\phi$ and $\Theta$ degenerate to monomials, and the limiting differentials are govern by the BE recursion.
	
	\begin{proposition} 
		If $\phi$ and $\Theta$ are generic (so that all critical points of x are simple and  x has no double zeroes), then the CEO differentials of the spectral curve~\eqref{eq:rscurve-deformed} are given by the following explicit formula:
		\begin{align}\label{eq:OS}
			\omega_n & =
			\prod_{i=1}^n\biggl(
			\sum_{j,k\ge0}\bigl(-d_i\tfrac{x_i}{dx_i}\bigr)^j
			\Bigl(\restr{\theta}{\Theta_i}[v^j]e^{-v\log\phi(\theta)}\partial_\theta^k e^{v\frac{\cS(v\hbar\partial_\theta)}{\cS(\hbar\partial_\theta)}\log\phi(\theta)}\Bigr)[u_i^k]
			\biggr)
			\\ \notag & \qquad
			\prod_{i=1}^n\Bigl(e^{u_i(\cS(u_i\hbar\,z_i\partial_{z_i})-1)\Theta_i}dz_i\Bigr) \cdot
			(-1)^{n-1}\!\!\sum_{\sigma\in\{{\rm cycl}(n)\}}\prod_{i=1}^n\frac{1}{e^{\frac{\pm u_{i}\hbar}{2}}z_i-e^{\frac{\mp u_{\sigma(i)}\hbar}{2}}
				z_{\sigma(i)}}
			\\ \notag & \quad-
			\delta_{n,1}\hbar^{-1}\sum_{j=1}^\infty \bigl(-d_1\tfrac{x_1}{dx_1}\bigr)^{j-1} [v^j]
			\left(\restr{\theta}{\Theta_1}
			e^{v\bigl(\frac{\cS(v\hbar\partial_{\theta})}{\cS(\hbar\partial_\theta)}-1\bigr)\log\phi(\theta)}
			d\theta
			\right),
		\end{align}
		where ${\rm cycl}(n)\subset S(n)$ is the set on $n$-cycles in the permutation group and
		 the sign $\pm$ is the same as in~\eqref{eq:rscurve-deformed}. If $\phi$ and $\Theta$ are not necessarily generic, then the same formula describes the differentials of the BE topological recursion. In particular, these differentials have a limit as $\phi$ and $\Theta$ degenerate to monomials; the limiting differentials are given by the same formula~\eqref{eq:OS} and coincide with the differentials of BE topological recursion for the limiting spectral curve~\eqref{eq:rscurve}.
	\end{proposition}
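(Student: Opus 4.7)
The strategy is to recognize the spectral curve $(x,y)$ with $x=z^{\pm 1}\phi(\Theta(z))$ and $y=\Theta(z)/x(z)$ as a member of the Orlov--Scherbin / hypergeometric family of spectral curves, for which closed $n$-point formulas have been developed in~\cite{BDKS1,BDKS3,ABDKS3,alexandrov2024topologicalrecursionrationalspectral} via KP integrability and the formal $x-y$ duality framework of Section~\ref{sec:xyswap}. The defining structural feature is that $xy=\Theta(z)$ is polynomial of degree $r+s$, with $\log x=\pm\log z+\log\phi(\theta)$ for $\theta=\Theta(z)$. For generic $\phi,\Theta$ all critical points of $x$ are simple and $dy$ is holomorphic and nonvanishing there, so the CEO recursion applies and agrees with the generalized topological recursion by Theorem~\ref{th:CEO-compatibility}. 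The plan is to obtain the explicit formula~\eqref{eq:OS} in the generic case from the general closed formula for the Orlov--Scherbin family, and then extend it to the non-generic case and the monomial limit by analyticity in the coefficients of $\phi,\Theta$ and the compatibility of BE recursion with limits.

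First I would specialize the general Orlov--Scherbin $n$-point formula to the spectral curve~\eqref{eq:rscurve-deformed}, obtaining an intermediate expression for $\omega_n$ as a sum over cyclic permutations with wavefunctions in the $z$-variable and derivation operators in the $y$-variable. Then, via the substitution $\theta=\Theta(z)$ together with the identity $y=\theta/x$ and the chain-rule decomposition $\log x=\pm\log z+\log\phi(\theta)$, I would convert this expression into~\eqref{eq:OS}. The cyclic sum $\sum_{\sigma\in\mathrm{cycl}(n)}\prod_i\tfrac{1}{e^{\pm u_i\hbar/2}z_i-e^{\mp u_{\sigma(i)}\hbar/2}z_{\sigma(i)}}$ encodes the Bergman kernel in the $z$-coordinate and arises from the graph expansion~\eqref{eq:def-bW-vee} restricted to connected graphs of $\omega^{(0)}_2$-type edges after the $x-y$ swap. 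The wavefunction factor $\prod_i\bigl(e^{u_i(\cS(u_i\hbar z_i\partial_{z_i})-1)\Theta_i}dz_i\bigr)$ comes from the exponential prefactor in~\eqref{eq:def-bW-vee} after the change of variable to $\theta$. The ratio $\cS(v\hbar\partial_\theta)/\cS(\hbar\partial_\theta)$ arises from a quasi-conjugation identity for the $\cS$-operator associated with the substitution $y=\theta/x$, an identity analogous to those underlying the proofs in~\cite{BDKS1,BDKS3}; and the outer operator $\sum_{j,k}\bigl(-d_i\tfrac{x_i}{dx_i}\bigr)^j[u_i^k]$ arises from the outer $x$-derivation in the Orlov--Scherbin formula rewritten using $x=z^{\pm 1}\phi(\theta)$. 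The $\delta_{n,1}$ correction matches the exceptional $n=1$ term in the standard Orlov--Scherbin formula.

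For the non-generic case and the monomial limit, the right-hand side of~\eqref{eq:OS} depends polynomially on the coefficients of $\phi$ and $\Theta$, and hence defines a globally analytic family. Since the BE topological recursion agrees with the CEO recursion in the generic setup (combining Theorem~\ref{th:CEO-compatibility} with the admissibility considerations of Section~\ref{sec:compatibility}) and is compatible with spectral-curve deformations by~\cite[Theorem 5.8]{limits}, the identity~\eqref{eq:OS} extends analytically to all $\phi,\Theta$ with the right-hand side interpreted as the BE differentials. The monomial limit is then handled by the same compatibility theorem from~\cite{limits}: the BE differentials of~\eqref{eq:rscurve-deformed} converge to those of the limiting curve~\eqref{eq:rscurve}, and the analytic expression~\eqref{eq:OS} has a definite limit given by the same formula with $\phi$ and $\Theta$ replaced by monomials. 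The main obstacle will be the combinatorial step in the second paragraph: the careful identification of the specialized Orlov--Scherbin formula written in the $z$-coordinate with the formula~\eqref{eq:OS} written partly in the $\theta$-coordinate, and in particular the verification of the quasi-conjugation identity encoding the ratio $\cS(v\hbar\partial_\theta)/\cS(\hbar\partial_\theta)$, which is most naturally established by expanding both sides in $v$ and matching coefficients using the defining power series of $\cS$.
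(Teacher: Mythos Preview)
Your overall strategy matches the paper's: obtain the explicit formula from the Orlov--Scherbin / symplectic-duality framework in the generic case, then extend to non-generic $\phi,\Theta$ and to the monomial limit by analyticity together with the compatibility of BE recursion with deformations from~\cite{limits}. The paper, however, does not re-derive the formula: it simply cites \cite{ABDKS-log-sympl} (Eq.~(51), Theorem~6.5 there), so your second paragraph --- the $\theta$-substitution, the $\cS$-ratio conjugation, the matching of the cyclic kernel --- is extra work that the paper bypasses. That part of your sketch is plausible but is a reconstruction of a published computation rather than a new argument.

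There is one real gap in your treatment of the non-generic and limiting cases. You invoke \cite[Theorem~5.8]{limits} without checking its hypotheses for the family~\eqref{eq:rscurve-deformed}. The paper makes exactly this check: the relevant condition is \emph{local admissibility}, and the key property to verify is that $\omega^{(0)}_1=y\,dx$ separates the preimages in each fiber of $x$ near the degenerate critical point (the check is parallel to \cite[Proposition~6.10]{limits}). This separation holds precisely because of the special form $y=\Theta(z)/x(z)$ of~\eqref{eq:rscurve-deformed}, and it is here --- and only here --- that the arithmetic constraint $r=m(r+s)\pm 1$ of~\eqref{eq:r-s-comparison} is used. Without that constraint the limiting curve~\eqref{eq:rscurve} is not locally admissible, BE recursion need not produce symmetric differentials, and the limit statement fails. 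Your proposal mentions ``admissibility considerations'' only in passing and never identifies or verifies this fiber-separation property, so the passage from generic $\phi,\Theta$ to the degenerate and monomial cases is not yet justified.
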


	\begin{proof}
		The first assertion follows from the theory of symplectic duality in topological recursion, see~\cite{ABDKS-log-sympl}, Eq.~(51), Theorem 6.5, etc. For the other assertions we notice that the family~\eqref{eq:rscurve-deformed} satisfies the conditions sufficient for compatibility of BE recursion with the deformations proved in~\cite[Theorem 5.8]{limits}. The latter check essentially coincides with the one performed in~\cite[Proposition 6.10]{limits}; in particular, the key property that $\omega^{(0)}_1$ separates the fibers follows directly from~\eqref{eq:rscurve-deformed}, which features and explains the role of the property~\eqref{eq:r-s-comparison}. The latter property is precisely the \emph{local admissibility} condition for the BE topological recursion in this case.
	\end{proof}
	
	
	To conclude, we can associate two different topological recursions with the spectral curve~\eqref{eq:rscurve}. Both of them can be realized as the limit of the CEO topological recursion for suitable deformations of the spectral curve data. For one of these deformations only critical points of~$dx$ collapse together, and this corresponds to the generalized topological recursion of the present paper. In another deformation all critical points of both $x$ and~$y$ collapse together at a single limiting \nonregular{} point, and this corresponds to the BE topological recursion. Both recursions lead to the same differentials in the case $s=\pm1$. But for $s<-1$ the corresponding recursions lead to different results. Notice that the BE recursion is well defined only if the divisibility condition $r\equiv \pm1\pmod {r+s}$ implied by~\eqref{eq:rscurve-deformed} is satisfied: otherwise the differentials defined by the BE recursive formula are not symmetric, see~\cite{BBCCN}.
\end{example}

\begin{example}[{B}ousquet-{M}\'{e}lou--{S}chaeffer numbers]
	Let $\phi(z)$ be a polynomial of degree $m>1$ with $\phi(0)=1$. We call $\phi$ \emph{generic} if the polynomials $\phi(z)$ and $G(z)=\phi(z)-z\,\phi'(z)$ each have $m$ distinct simple roots. Consider the following three rational spectral curves
	\begin{align}
		\label{eq:BMScurve1} x&=\log z-\log\phi(z),&y&=z;
		\\\label{eq:BMScurve2} x&=\frac{z}{\phi(z)},&y&=\phi(z);
		\\\label{eq:BMScurve3} x&=\frac{\phi(z)}{z},&y&=-\frac{z^2}{\phi(z)}.
	\end{align}
	If $\phi$ is generic, then all three curves have the same set of zeroes of $dx$ (which are roots of $G(z)$), the same deck transformations at these points, and the same initial $(0,1)$ differentials
	\begin{equation}
		\omega^{(0)}_1=y\,dx=z\,d\log\Bigl(\frac{z}{\phi(z)}\Bigr)
		=\phi(z)\,d\frac{z}{\phi(z)}
		=-\frac{z^2}{\phi(z)}\,d\frac{\phi(z)}{z}=\frac{G(z)}{\phi(z)}dz.
	\end{equation}
	It follows that the original CEO topological recursions for all three spectral curves are the same. We call the limits of the TR differentials for these spectral curves as $\phi(z)\to (1+z)^m$ the limiting {B}ousquet-{M}\'{e}lou--{S}chaeffer differentials, since their expansions are known to give the so-called  {B}ousquet-{M}\'{e}lou--{S}chaeffer numbers~\cite{BDS-BMSnumbers}. This limit does exist, and we provide a closed formula for the limiting differentials in~\cite{ABDKS-logTR-xy}. Let us discuss what kind of topological recursion the limiting differentials do obey.
	
	With the limit, one of the roots of $G$ tends to $z=\frac1{m-1}$, and $m-1$ other roots collapse together at $z=-1$. The point $z=\frac1{m-1}$ of the limiting spectral curves is a nondegenerate critical point of~$x$, and the contribution of this critical point to the topological recursion is a standard one, for all three versions of the spectral curve data. Let us study in more details what happens at the point $z=-1$.
	
	The limiting {B}ousquet-{M}\'{e}lou--{S}chaeffer differentials have a pole at $z=-1$. However, its principal part disagrees with the contribution suggested by the generalized topological recursion associated with either of the spectral curves~\eqref{eq:BMScurve1}--\eqref{eq:BMScurve3} in the case $\phi(z)=(1+z)^m$. Moreover, for the spectral curves~\eqref{eq:BMScurve2} and~\eqref{eq:BMScurve3} the point $z=-1$ is even not \nonregular{} in the sense of Definition~\ref{def:singpoint} and does not contribute to the generalized topological recursion at all. It means that the generalized topological recursion is not compatible with the limit in all three cases. The reason is that along with the critical points of~$x$, we have other kinds of \nonregular{} points collapsing at the same point $z=-1$. For the spectral curve~\eqref{eq:BMScurve1} these are logarithmic singularities of~$x$, and for the spectral curves~\eqref{eq:BMScurve2}--\eqref{eq:BMScurve3} these are the critical points of~$y$. For the compatibility with the limit, these additional \nonregular{} points should be included to the recursion. But since in the definition of the generalized topological recursion they are not included, the compatibility with the limit breaks.
	
	As for the BE topological recursion, it is not applicable for the spectral curve~\eqref{eq:BMScurve1}, since the $x$ function is not univalued in this parametrization. On the other hand, the BE topological recursion for the spectral curves~\eqref{eq:BMScurve2} and~\eqref{eq:BMScurve3} in the case $\phi(z)=(1+z)^m$ are equivalent to one another, and the differentials produced by this recursion are exactly the desired limiting differentials giving the {B}ousquet-{M}\'{e}lou--{S}chaeffer numbers, see~\cite{BDS-BMSnumbers}, with further comments in~\cite{limits} and~\cite{BDKS2}.	
\end{example}


\section{KP integrability} 
\label{sec:KP}

The goal of this section is to extend the main result of~\cite{alexandrov2024topologicalrecursionrationalspectral} to the case of generalized topological recursion. To this end, we have first to recall the definition of KP integrability as a property of a system of differentials proposed in~\cite{ABDKS3}.

\subsection{Definition of KP integrability for a system of differentials}

Let $\Sigma$ be a smooth Riemann surface and let $\{\omega^{(g)}_n\}$ be system of $n$-differentials on $\Sigma^n$ defined for all $g\ge0$, $n\ge1$. Assume that $\omega^{(g)}_n$'s are symmetric and meromorphic with no poles on the diagonals for $(g,n)\ne(0,2)$, and $\omega^{(0)}_2$ is also symmetric and meromorphic but it has a second order pole on the diagonal with bi-residue~$1$. For the moment there are no further assumptions. 

\begin{definition}
	A point $o\in\Sigma$ is called \emph{regular} for the system of differentials $\{\omega^{(g)}_n\}$ if in some (and hence in every) local coordinate~$\lcoord$ near $o$ we have that $\omega^{(g)}_n-\delta_{(g,n),(0,2)}\frac{d\lcoord_1d\lcoord_2}{(\lcoord_1-\lcoord_2)^2}$ is regular at $(o,\dots,o)\in\Sigma^n$ for all $(g,n)\in\Z_{\geq 0}\times\Z_{>0}$, $(g,n)\not=(0,1)$.
\end{definition}

Let $o\in\Sigma$ be a regular point and $\lcoord$ be an arbitrary local coordinate at $o$. Consider the expansions
\begin{equation}\label{eq:omega-expansion}
	\sum_{\substack{g\geq 0\\ 2g-2+n\geq 0}}\hbar^{2g-2+n}\omega^{(g)}_n=\delta_{n,2}\tfrac{d\lcoord_1d\lcoord_2}{(\lcoord_1-\lcoord_2)^2}+\sum_{g=0}^\infty \hbar^{2g-2+n}
	\sum_{k_1,\dots,k_n=1}^{\infty} f^{(g)}_{k_1,\dots,k_n}\prod_{i=1}^n k_i \lcoord_i^{k_i-1} d\lcoord_i.
\end{equation}
Let
\begin{equation}\label{eq:F-expansion}
	Z_{o,\lcoord}\coloneqq \exp \sum_{\substack{g\geq 0, n\geq 1 \\ 2g-2+n \geq 0}}\frac{\hbar^{2g-2+n}}{n!} 
	\sum_{k_1,\dots,k_n=1}^{\infty} f^{(g)}_{k_1,\dots,k_n}\prod_{i=1}^n {p}_{k_1}\dots {p}_{k_n},
\end{equation}
where ${p}_i$, $i=1,2,\dots$, are formal variables.

\begin{definition} \label{def:KPdiff} A system of differentials $\{\omega^{(g)}_n\}$ satisfying the assumptions as above is called \emph{KP integrable} if the associated partition function $Z_{o,\lcoord}$ is a KP tau function for at least some choice of a regular point $o\in\Sigma$ and a local coordinate $\lcoord$.
\end{definition}

This definition is justified by \cite[Theorem 2.3]{ABDKS3} which states that the thus defined KP integrability is an internal property of the system of differentials and does not depend on the choice of  a regular point $o$ and a coordinate $\lcoord$ at $o$. It follows that the KP integrability can be regarded as an internal property of a system of differentials.

This KP integrability property can equivalently be reformulated via the so-called determinantal formulas. 
Introduce the \emph{Baker-Akhiezer kernel} $\bK$ for a given system of differentials by the following formula
\begin{equation} \label{eq:BA-kernel}
	\begin{aligned}
		\omega^{(0),{\rm sing}}_2(z_1,z_2)&=\frac{dz_1dz_2}{(z_1-z_2)^2},
		\\\bK(z_1,z_2)&=
		\tfrac{\sqrt{dz_1dz_2}}{z_1-z_2}
		\exp \left({\sum\limits_{(g,n)\ne(0,1)}\frac{\hbar^{2g-2+n}}{n!}
			\int\limits_{z_2}^{z_1}\dots\int\limits_{z_2}^{z_1}(\omega^{(g)}_n-\delta_{(g,n),(0,2)}\omega^{(0),{\rm sing}}_2)}\right),
	\end{aligned}
\end{equation}
where $z$ is an arbitrary local coordinate or just a meromorphic function. By definition, $\bK$ is a half-differential on~$\Sigma^2$, and a reason to introduce it this way is that it is actually independent of a choice of~$z$. It is well-defined in a vicinity of the diagonal, but in certain cases it extends globally to $\Sigma^2$. Then we have: 

\begin{theorem}[\cite{ABDKS3}]
	A system of differentials $\{\omega^{(g)}_n\}$ is KP integrable if and only if the following determinantal identities hold true
	\begin{equation}\label{eq:detc}
		\begin{aligned}
			\omega_1(z_1)&=\lim\limits_{z_2\to z_1}\Bigl(\bK(z_1,z_2)-\sqrt{\omega^{(0),{\rm sing}}_2(z_1,z_2)}\Bigr),
			\\\omega_n(z_{\set n})&=(-1)^{n-1}\sum_{\sigma\in{\rm cycl}(n)}\prod_{i=1}^n\bK(z_i,z_{\sigma(i)}),\quad n\ge2,
		\end{aligned}
	\end{equation}
	where the summation goes over the set of $n$-cycles in the permutation group $S(n)$.
\end{theorem}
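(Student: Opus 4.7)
The plan is to establish the equivalence through the boson--fermion correspondence: the generating function $Z_{o,\lcoord}$ defined in~\eqref{eq:F-expansion} is a KP tau function if and only if it admits a free-fermion realization, and this in turn is equivalent to the fact that its $n$-point correlators satisfy Wick's theorem. Thus one reduces the equivalence ``$Z_{o,\lcoord}$ is a tau function $\Leftrightarrow$ the $\omega_n$ are given by~\eqref{eq:detc}'' to the classical statement that fermionic $n$-point functions on the Sato Grassmannian are determinants of the two-point fermionic correlator.

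For the direction \emph{KP integrability $\Rightarrow$ determinantal formulas}, I would start from a regular point $o$ with local coordinate $\lcoord$ and use bosonization to identify the generating series of the $\omega_n$'s (viewed via~\eqref{eq:omega-expansion}--\eqref{eq:F-expansion}) with an insertion of the fermionic bilinear $\psi(z_1)\psi^*(z_2)$ into the tau function. The resulting ratio of tau functions bosonizes exactly to the expression $\bK(z_1,z_2)$ of~\eqref{eq:BA-kernel}: the prefactor $\sqrt{dz_1 dz_2}/(z_1-z_2)$ is the free-fermion propagator and matches the removal of $\omega_2^{(0),\mathrm{sing}}$, while the exponential of iterated integrals accounts precisely for the connected correlators. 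Wick's theorem then expresses the $n$-point fermionic correlator as a determinant of two-point functions, and extracting the bosonic connected part (which is what the $\omega_n$ compute) selects exactly the cyclic permutations, producing the sum $\sum_{\sigma\in\mathrm{cycl}(n)}\prod_i \bK(z_i,z_{\sigma(i)})$ with the overall sign $(-1)^{n-1}$.

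For the converse, I would start with the determinantal identities~\eqref{eq:detc} and use them to build a point in the Sato Grassmannian out of $\bK$, exploiting its Cauchy-type singularity on the diagonal and its factorization properties implied by~\eqref{eq:detc}. Equivalently, one can check that the associated Baker--Akhiezer function satisfies the KP bilinear identity, which is the standard characterization of tau functions; the matching of bosonic expansions then identifies the reconstructed tau function with $Z_{o,\lcoord}$. This part reuses the argument of the forward direction in reverse, since the two statements are essentially two sides of the same boson--fermion dictionary.

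The hard part will be the careful coordinate and unstable-sector bookkeeping: one has to verify that the half-differential normalization $\sqrt{dz_1 dz_2}/(z_1-z_2)$ in~\eqref{eq:BA-kernel} is intrinsic, so that the determinantal identities (unlike $Z_{o,\lcoord}$ itself) do not depend on a choice of $\lcoord$---this is what allows Definition~\ref{def:KPdiff} to be an intrinsic property, in agreement with \cite[Theorem~2.3]{ABDKS3}. In parallel one must treat correctly the unstable contributions, namely the reconstruction of $\omega_1$ from the diagonal limit in the first line of~\eqref{eq:detc} and the absorption of the singular part of $\omega_2^{(0)}$ into the prefactor of $\bK$, as these are the only places where the exponent in~\eqref{eq:BA-kernel} does not literally make sense and must be defined through the regularization implicit in~\eqref{eq:omega-expansion}.
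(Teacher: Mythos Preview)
The paper does not prove this theorem: it is stated with the citation \cite{ABDKS3} and no proof is given in the present paper. There is therefore nothing in the paper to compare your proposal against.

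That said, your outline is the standard and correct route to this result, and it is essentially the argument carried out in \cite{ABDKS3}: KP integrability of $Z_{o,\lcoord}$ is equivalent to a free-fermion realization, the fermionic two-point function bosonizes to the Baker--Akhiezer kernel~\eqref{eq:BA-kernel}, and Wick's theorem for the fermionic $n$-point function, after passing to connected correlators, yields precisely the cyclic sum in~\eqref{eq:detc}. Your remarks about the invariance of the half-differential normalization and the handling of the unstable $(0,1)$ and $(0,2)$ contributions are on point and are indeed the places where care is needed in a full proof.
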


\subsection{KP integrability for generalized topological recursion}

Consider the initial data $(\Sigma,dx,dy,B,\cP)$ for the generalized topological recursion such that
\begin{itemize}
	\item $\Sigma=\C P^1$;
	\item $B(z_1,z_2)=\frac{dz_1dz_2}{(z_1-z_2)^2}$, where $z$ is a global affine coordinate on $\C P^1$.
\end{itemize}

The generalized topological recursion associates to this input data a system of differentials $\{\omega^{(g)}_n\}$.  The following property, known for the original CEO topological recursion in the case of the rational spectral curve, holds in the setup of generalized topological recursion as well:

\begin{theorem} \label{thm:KP-integrability} The system of differentials  $\{\omega^{(g)}_n\}$ obtained by the generalized topological recursion  for the input data as above is KP
	integrable.
\end{theorem}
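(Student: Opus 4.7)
The plan is to reduce KP integrability of the generalized topological recursion differentials to the already established case of the standard CEO topological recursion on a rational spectral curve, proved in \cite{alexandrov2024topologicalrecursionrationalspectral}. The reduction proceeds by a deformation argument that combines the analyticity-in-parameters result (Theorem~\ref{th:analytic}) with the identification of the generalized and CEO recursions in the regular case (Theorem~\ref{th:CEO-compatibility}).

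First, I would construct a one-parameter analytic family of initial data $(\C P^1, dx_\epsilon, dy_\epsilon, B, \cP_\epsilon)$ with $(dx_0,dy_0,\cP_0)=(dx,dy,\cP)$, such that for $\epsilon\ne 0$ sufficiently small all zeros of $dx_\epsilon$ are simple and $dy_\epsilon$ is holomorphic and nonvanishing at each of them, with $\cP_\epsilon$ chosen to be exactly this set of simple zeros. On $\C P^1$ such a deformation is available because a meromorphic differential is rational and the parameters of rational functions are plentiful: near each $q\in\cP$ one modifies the coefficients of the local Laurent expansions of $dx$ and $dy$ to split the zero of $dx$ into simple ones and to make $dy_\epsilon$ holomorphic and nonvanishing there. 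One chooses an open union $U$ of small disks around $\cP$ so that $\cP_\epsilon\subset U$ and $dx_\epsilon, dy_\epsilon$ are holomorphic and nonvanishing along $\partial U$ uniformly in $\epsilon$ near $0$, while all \nonregular{} points outside $U$ remain \yvital{} for every $\epsilon$.

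For $\epsilon\ne 0$ the generalized TR differentials $\omega^{(g),\epsilon}_n$ coincide with the standard CEO TR differentials on $(\C P^1, dx_\epsilon, dy_\epsilon, B)$ by Theorem~\ref{th:CEO-compatibility}, and the latter are KP integrable by \cite{alexandrov2024topologicalrecursionrationalspectral}. Fix a point $o\in\C P^1\setminus\bar U$, regular for $\omega^{(g),\epsilon}_n$ for all small $\epsilon$ including $\epsilon=0$, and a local coordinate $\lcoord$ at $o$. By Theorem~\ref{th:analytic} the differentials are jointly analytic in $(\epsilon,z_1,\dots,z_n)$ on a neighborhood of $\{0\}\times(\C P^1\setminus\bar U)^n$, so the Taylor coefficients $f^{(g),\epsilon}_{k_1,\dots,k_n}$ extracted via~\eqref{eq:omega-expansion} depend analytically on $\epsilon$. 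The condition that the associated partition function $Z^\epsilon_{o,\lcoord}$ given by~\eqref{eq:F-expansion} is a KP tau function is a countable family of Hirota bilinear equations, equivalently a family of algebraic identities on the coefficients $f^{(g),\epsilon}_{k_1,\dots,k_n}$; since they hold for all $\epsilon\ne 0$, they hold at $\epsilon=0$ by continuity, giving KP integrability of $\omega^{(g)}_n$ in the sense of Definition~\ref{def:KPdiff}.

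The main obstacle is the construction of the deformation with the correct separation between \xvital{} and \yvital{} \nonregular{} points. When $q\in\cP$ is a common zero or pole of $dx$ and $dy$, or when $dy$ is singular or vanishes at $q$, one cannot simply perturb $dx$ alone: one must simultaneously adjust $dy_\epsilon$ locally so that it becomes holomorphic and nonvanishing exactly at the simple zeros of $dx_\epsilon$ appearing inside the disk around $q$, all while keeping the positions of the \yvital{} points outside $U$ unchanged up to analytic perturbation. Existence of such deformations reduces to a parameter count among rational differentials on $\C P^1$, which is straightforward; the care lies in verifying the hypotheses of Theorem~\ref{th:analytic} uniformly over the family.
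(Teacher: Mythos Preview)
Your overall strategy---deform the initial data so that for $\epsilon\ne 0$ one lands in the standard CEO setup, invoke \cite{alexandrov2024topologicalrecursionrationalspectral}, and pass to the limit using Theorem~\ref{th:analytic}---is exactly the paper's approach. The paper also notes that KP integrability is a closed condition (via the determinantal formulas), just as you argue via Hirota identities.

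Where your proposal has a genuine gap is in the deformation itself. You assume each $q\in\cP$ is a zero of $dx$ to be ``split'', but $\cP$ is an arbitrary subset of \nonregular{} points, and $dx$ need not vanish there. More seriously, your plan to also perturb $dy_\epsilon$ so that it becomes holomorphic and nonvanishing at the new simple zeros of $dx_\epsilon$ runs into the following problem: if $dy$ has a zero or a pole at $q$, moving it away creates a new \nonregular{} point near $q$, hence inside $U$; by the hypotheses of Theorem~\ref{th:analytic} that point must be taken as \xvital, yet it is not a zero of $dx_\epsilon$, so you are no longer in the CEO setup. The ``parameter count'' you invoke does not resolve this: any global perturbation of a rational $dy$ moves its divisor somewhere, and for small $\epsilon$ that somewhere is inside $U$.

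The paper's resolution is a single clean trick that avoids touching $dy$ at all: set
\[
dx(\epsilon)\coloneqq dx + \epsilon \sum_{q\in \cP} \frac{dz}{(z-q)^{A+1}},
\]
with $A$ strictly larger than the order of every zero of $dy$ on $\cP$. For $\epsilon\ne 0$, each $q\in\cP$ becomes a pole of $dx(\epsilon)$ of order $A+1$, so in the notation of Definition~\ref{def:singpoint} one has $r=-A$ and $s-1<A$, whence $r+s\le 0$ and $q$ is \regular{}. The only \nonregular{} points in $U$ are then the newly created simple zeros of $dx(\epsilon)$, at which $dy$ (unchanged) is automatically holomorphic and nonvanishing. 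This places the deformed data squarely in the CEO setup and makes the appeal to Theorem~\ref{th:CEO-compatibility} legitimate.
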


\begin{proof} In order to prove this statement, we reduce it to the analogous statement for the original CEO topological recursion proven in~\cite{alexandrov2024topologicalrecursionrationalspectral}. Indeed, the property of a system of differentials to be KP integrable can be described via determinantal formulas  and thus is a closed condition. Hence, if we can realize a system of differentials  $\{\omega^{(g)}_n\}$ as a limit of a system of differentials $\lim_{\epsilon\to 0}\{\omega^{(g)}_n (\epsilon) \}$ that analytically depends on a small parameter $\epsilon$ and is KP integrable for a general $\epsilon\not=0$, we can conclude that  $\{\omega^{(g)}_n\}$ is KP integrable as well. Here we exclude the $(g,n)=(0,1)$ differential from the consideration as it doesn't affect the KP integrability property. 
	
	Now, consider the initial data as above and defined a one-parameter deformation with respect to a small parameter $\epsilon$ such that $dy$ is not deformed and the deformation of $dx$ is chosen as
	\begin{align}
		dx(\epsilon)\coloneqq dx + \epsilon \sum_{q\in \cP} \frac{dz}{(z-q)^{A+1}},
	\end{align}
	where $A\in \Z_{>0}$ is strictly bigger than the order of zeroes of $dy$ at the points in $\cP$ (the poles of $dy$ are then counted as zeroes of non-positive order). By Theorem~\ref{th:analytic}, for small enough $\epsilon$, the generalized topological recursion applied to $(\C P^1, dx(\epsilon), dy, B(z_1,z_2)=\frac{dz_1dz_2}{(z_1-z_2)^2}, \cP(\epsilon))$, where $\cP(\epsilon)$ is the set of all \nonregular{} points located in the union $U\coloneqq \cup_{q\in \cP} U_q$ of some fixed system of small disks around the points $q\in \cP$, returns a system of differentials $\{\omega^{(g)}_n (\epsilon) \}$ such that indeed $\lim_{\epsilon\to 0}\{\omega^{(g)}_n (\epsilon) \}=\{\omega^{(g)}_n\}$.
	
	Note that for a generic small value of $\epsilon$ the set of \nonregular{} points in $U$ is  the set of simple zeros of $dx(\epsilon)|_{U}$. In particular, if the order of zero of $dx$ at some point $q\in \cP$ is equal to $B$, then $dx(\epsilon)$ has a pole of order $(A+1)$ at $q$ and there are further $(A+B+1)$ simple zeros of $dx(\epsilon)$ close to $q$. These $(A+B+1)$ simple zeros of $dx(\epsilon)$ all belong to $\cP(\epsilon)$. 
	
	The above discussion implies that for $\epsilon\not=0$ the points in $\cP$ are \regular{}, since $A$ is chosen to be big enough. Indeed, in terms of the parameters $r$ and $s$ used in Definition~\ref{def:singpoint} at each point of $q\in\cP$ we have for $\epsilon\not=0$ that $r=-A$ and $s-1<A$, hence $r+s\leq 0$. Thus all points in $\cP(\epsilon)$ for $\epsilon \not=0$ are simple zeroes of $dx(\epsilon)$, where $dy$ is regular and non-vanishing. 
	By Theorem~\ref{th:CEO-compatibility} the differentials of the generalized topological recursion in this case coincide with the differentials of the original CEO topological recursion for the input data given by $\C P^1$, $B$, and the functions $x$ and $y$ chosen as some primitives for $dx|_U$ and $dy|_U$ (and thus defined only locally on $U$). 
	
	\begin{remark}
		This is a relaxed setup for the original CEO topological recursion, cf.~\cite[Section 4]{alexandrov2024topologicalrecursionrationalspectral}. One can think of topological recursion applied to the curve $U\subset \C P^1$ with the bi-differential $\frac{dz_1dz_2}{(z_1-z_2)^2}|_U$, but then we consider the obtained differentials $\omega^{(g)}_n(\epsilon)$ for $2g-2+n\geq 0$ as the ones defined on $(\C P^1)^n$ since the projection formula with the globally defined $B$ extends them beyond $U$. 	
	\end{remark}

	We see that the setup in which we produce the system of differentials $\{\omega^{(g)}_n (\epsilon) \}$ for a generic non-zero value of $\epsilon$ satisfies the requirements of~\cite[Theorem 4.1]{alexandrov2024topologicalrecursionrationalspectral}, in which the KP integrability for the system of differentials of the original CEO topological recursion is proven in \emph{op.~cit.} Thus we constructed a family of differentials $\{\omega^{(g)}_n (\epsilon) \}$ which is analytic in $\epsilon$ at $\epsilon=0$, KP integrable for a generic non-zero value of $\epsilon$, and $\lim_{\epsilon\to 0}\{\omega^{(g)}_n (\epsilon) \}=\{\omega^{(g)}_n\}$. This completes the proof of the theorem.
\end{proof}


\section{Examples}\label{sec:rs-examples}

\subsection{The case of trivial dual TR}
We start this section with the following observation. Assume that $\Sigma=\C P^1$ with an affine coordinate~$z$. Assume also that \emph{all} \nonregular{} points of the initial data of topological recursion are set to be \xvital. Then the $x-y$ dual topological recursion is trivial. This leads to an explicit closed formula for the TR differentials that can be represented in two different equivalent forms. For the first one we denote
\begin{equation}
\begin{aligned}
\hat z(z,v)&=e^{\frac {v \hbar}{2}\partial_{y}}z=z+\tfrac{v}{2 y'}\hbar-\tfrac{v^2 y''}{8 (y')^3}\hbar^2-\tfrac{v^3 (y' y^{(3)}-3 (y'')^2)}{48 (y')^5}\hbar^3+\dots,
\\\hat z_i^\pm&=\hat z(z_i,\pm v_i).
\end{aligned}
\end{equation}
With this notation, the formula reads
\begin{equation}\label{eq:dual-to-trivial}
\begin{aligned}
\bW^{\vee}_n(z_{\set n},v_{\set n})&=
\prod_{i=1}^n\Bigl( e^{v_i(\cS(v_i\hbar\partial_{y_i})-1)x_i}\sqrt{\tfrac{d\hat z^+_i}{d z_i}\tfrac{d\hat z^-_i}{dz_i}}\,dz_i\Bigr)
~(-1)^{n-1}\!\!\sum_{\sigma \in {\rm cycl(n)}}\prod_{i=1}^n\frac1{\hat z_i^+-\hat z_{\sigma(i)}^-},
\\\omega_n(z_{\set n})&=(-1)^n\Bigl(\prod_{i=1}^n\sum_{k=0}^\infty\bigl(-d_i\tfrac{1}{dx_i}\bigr)^k[v_i^k]\Bigr)
\;\bW^{\vee}_n(z_{\set n},v_{\set n}),
\end{aligned}
\end{equation}
where $x'_i=x'(z_i)$, $y'_i=y'(z_i)$.
For example, for small $(g,n)$ we get
\begin{align}
\omega^{(0)}_3(z_1,z_2,z_3)&=d_1d_2d_3\sum_{i=1}^3\frac{1}{x'_iy'_i}\prod_{j\ne i}\frac{1}{z_i-z_j},
\\\omega^{(1)}_1(z)&=\frac1{24}d\left(\frac{3 (y'')^2}{x' (y')^3}-\frac{2 y^{(3)}}{x' (y')^2}
  +\partial_{x}  \Bigl(\frac{y''}{(y')^2}-\frac{x''}{x' y'}\Bigr)\right).
\end{align}

The second closed formula for the differentials~$\omega^{(g)}_n$ is the determinantal one 
expressing KP integrability of these differentials~\cite{ABDKS3}. To this end, we have proved that the Baker–Akhiezer kernel~\eqref{eq:BA-kernel} can be computed directly in this case and represented as the
following formal double Gaussian integral:
\begin{theorem}[\cite{ABDKS3}] We have:
\begin{multline}\label{BAK}
	\frac{\bK(z_1,z_2)}{\sqrt{dz_1dz_2}}=\frac{\sqrt{-x'_1x'_2}}{2\pi\hbar}\iint\frac{\sqrt{y'(\xi_1)y'(\xi_2)}}{\xi_1-\xi_2}
	d\xi_1d\xi_2\times\\
	e^{\hbar^{-1}((y(\xi_1)-y_1)x_1-\int_{z_1}^{\xi_1}x\,dy)}
	e^{-\hbar^{-1}((y(\xi_2)-y_2)x_2-\int_{z_2}^{\xi_2}x\,dy)}.
\end{multline}	
\end{theorem}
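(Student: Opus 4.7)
My plan is to establish \eqref{BAK} by treating the double integral as a formal asymptotic series in $\hbar$ around an identified saddle point, then matching the expansion to the determinantal representation of $\bK$ guaranteed by the KP integrability statement of Theorem~\ref{thm:KP-integrability} and to the explicit $x-y$ dual formula \eqref{eq:dual-to-trivial}.

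The first step is saddle-point analysis of the exponents. Writing
$\Phi(\xi,z)=(y(\xi)-y(z))x(z)-\int_z^\xi x\,dy$,
one computes $\partial_\xi \Phi = y'(\xi)(x(z)-x(\xi))$, which vanishes at $\xi=z$ with second derivative $-x'(z)y'(z)$. Thus near $\xi_i=z_i$ the integral is governed by Gaussians of width $\sqrt{\hbar/(x'_i y'_i)}$, and the normalization $\sqrt{-x'_1 x'_2}\sqrt{y'(\xi_1)y'(\xi_2)}/(2\pi\hbar)$ is exactly the factor that reduces the leading saddle-point evaluation to $(z_1-z_2)^{-1}$, reproducing the required leading behavior $\sqrt{\omega^{(0),\mathrm{sing}}_2(z_1,z_2)}$ of $\bK/\sqrt{dz_1 dz_2}$ on the diagonal.

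The second step is to interpret the single-variable Gaussian integrals as the formal implementation of the two ingredients of the $x-y$ swap in the trivial-dual regime: the shift operators $e^{\pm v\hbar\partial_y/2}$ entering $\hat z^\pm$, and the operators $\sum_k(-d/dx)^k[v^k]$ that convert $\bW^\vee_n$ to $\omega_n$. Concretely, for each variable $\xi_i$ I would expand the exponent around $\xi_i = \hat z(z_i,v_i)$ rather than $z_i$, introducing an auxiliary parameter $v_i$; the Gaussian moments in $\xi_i-\hat z(z_i,v_i)$ assemble into the prefactor $e^{v_i(\cS(v_i\hbar\partial_{y_i})-1)x_i}\sqrt{d\hat z^+_i/dz_i\cdot d\hat z^-_i/dz_i}$ of \eqref{eq:dual-to-trivial}, while the substitution that removes $v_i$ is precisely the action of $\sum_k(-d/dx_i)^k[v_i^k]$.

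The third step is the cyclic pairing. Inserting the candidate kernel into the determinantal identity \eqref{eq:detc} for $\omega_n$, the $n$-fold product $\prod_i \bK(z_i,z_{\sigma(i)})$ becomes a $2n$-fold formal integral whose only coupling between different $i$ is through the cyclic product $\prod_i (\xi^+_i-\xi^-_{\sigma(i)})^{-1}$. Expanding this cyclic coupling around the saddles $\xi^\pm_i=\hat z(z_i,\pm v_i)$ and using the Gaussian contractions of step two, the expression collapses to $\prod_i(\hat z^+_i-\hat z^-_{\sigma(i)})^{-1}$ multiplied by the shift prefactors, which is exactly the cycle contribution in \eqref{eq:dual-to-trivial}. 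Since Theorem~\ref{thm:KP-integrability} ensures that a determinantal kernel exists and is uniquely determined by the $\omega_n$'s, equality of all $\omega_n$'s forces the proposed double-integral kernel to coincide with $\bK$.

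The main obstacle is the combinatorial matching in the third step: keeping track of how Wick contractions against the Gaussian two-point functions $\langle(\xi_i-z_i)(\xi_j-z_j)\rangle = -\hbar\,\delta_{ij}/(x'_i y'_i)$ reproduce the simultaneous action of the shift operators $e^{\pm v_i\hbar\partial_{y_i}/2}$ hidden inside $\hat z^\pm_i$ and the differential-in-$x$ operators extracting the $[v_i^k]$ coefficients. I expect to handle this by recognising the formal double integral as a one-point sector of a two-matrix-model partition function for which the saddle-point expansion is equivalent, via the $x-y$ swap, to the trivial-dual specialization of \eqref{eq:dual-to-trivial}; this should make the identification essentially automatic once the Wick bookkeeping is set up invariantly with respect to the choice of coordinate.
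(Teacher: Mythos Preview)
The paper does not prove this theorem; it is quoted from \cite{ABDKS3} with the remark ``For an explanation how this formula is treated and used see~\cite{ABDKS3}.'' So there is no in-paper argument to compare against, and your task was really to supply an independent proof.

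Your outline contains the right heuristics (saddle point at $\xi_i=z_i$, Gaussian width $\sqrt{\hbar/(x_i'y_i')}$, connection to the trivial-dual formula~\eqref{eq:dual-to-trivial}), but it is not a proof as written. Two concrete gaps:

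\emph{(i) The uniqueness step is not justified.} In step~3 you argue that if the candidate double-integral kernel, inserted into~\eqref{eq:detc}, reproduces all $\omega_n$, then it must equal $\bK$. But $\bK$ in~\eqref{eq:BA-kernel} is \emph{defined} as a specific exponential of integrals of the $\omega^{(g)}_n$; it is not characterised abstractly as ``the kernel whose cyclic products give $\omega_n$''. Any gauge transformation $\bK(z_1,z_2)\mapsto f(z_1)\bK(z_1,z_2)f(z_2)^{-1}$ preserves the cyclic products for $n\ge2$, so matching the $\omega_n$ does not pin down the kernel without an additional argument (the $n=1$ limit constrains but does not obviously determine the off-diagonal behaviour to all orders in $\hbar$). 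You would need to prove such a uniqueness lemma, or else work directly with the definition~\eqref{eq:BA-kernel}.

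\emph{(ii) The core combinatorics is only promised.} You yourself flag the ``main obstacle'': showing that the Gaussian Wick contractions simultaneously implement the shifts $e^{\pm v_i\hbar\partial_{y_i}/2}$ and the operators $\sum_k(-d_i/dx_i)^k[v_i^k]$. The sentence ``I expect to handle this by recognising the formal double integral as a one-point sector of a two-matrix-model partition function'' is a hope, not an argument. This identification is exactly the content of the theorem, and the actual proof in \cite{ABDKS3} carries it out via the explicit half-differential/KP formalism developed there rather than by an appeal to matrix-model folklore. Until that matching is written down term by term (or reduced to a cited identity), the proposal remains a sketch.
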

For an explanation 
how this formula is treated and used see~\cite{ABDKS3}. Here we just notice that this formula represents $\bK$ as a power series in~$\hbar$, and the coefficient of each monomial in~$\hbar$ is a polynomial combination of $1/(z_1-z_2)$, $1/x'_i$, $1/y'_i$, $i=1,2$, and the derivatives $x^{(k)}_i$, $y^{(k)}_i$, $i=1,2$ of different orders $k\ge2$:
\begin{multline}
\frac{\bK(z_1,z_2)}{\sqrt{dz_1dz_2}}=\frac{1}{z_1-z_2}+\biggl(
\frac{\tfrac{1}{x'_1 y'_1}-\tfrac{1}{x'_2 y'_2}}{(z_1-z_2)^3}
+\frac{\tfrac{x''_1}{(x'_1)^2 y'_1}+\tfrac{y''_1}{x'_1(y'_1)^2}+\tfrac{x''_2}{(x'_2)^2 y'_2}+\tfrac{y''_2}{x'_2(y'_2)^2}}{2(z_1-z_2)^2}
\\+\frac{1}{24(z_1-z_2)}\bigl(-\tfrac{3 x_1^{(3)}}{(x_1^{\prime })^2 y_1^{\prime }}-\tfrac{3 y_1^{(3)}}{x_1^{\prime } (y_1^{\prime })^2}+\tfrac{5 (x_1^{\prime\prime })^2}{(x_1^{\prime })^3 y_1^{\prime }}+\tfrac{5 x_1^{\prime\prime } y_1^{\prime\prime }}{(x_1^{\prime })^2 (y_1^{\prime })^2}+\tfrac{5 (y_1^{\prime\prime })^2}{x_1^{\prime } (y_1^{\prime })^3}
\\+\tfrac{3 y_2^{(3)}}{x_2^{\prime } (y_2^{\prime })^2}+\tfrac{3 x_2^{(3)}}{(x_2^{\prime })^2 y_2^{\prime }}-\tfrac{5 (x_2^{\prime\prime })^2}{(x_2^{\prime })^3 y_2^{\prime }}-\tfrac{5 x_2^{\prime\prime } y_2^{\prime\prime }}{(x_2^{\prime })^2 (y_2^{\prime })^2}-\tfrac{5 (y_2^{\prime\prime })^2}{x_2^{\prime } (y_2^{\prime })^3}\bigr)\biggr)\hbar+\dots
\end{multline}

We list below some cases when these computations lead to a nontrivial answer even for CEO topological recursion (cf. \cite{ABDKS-logTR-xy}, Sect.3).

\subsection{The case \texorpdfstring{$y=z^s$}{y=z s}}
Consider the following spectral curve
\begin{equation}\label{eq:y=z^s}
x=x(z),\qquad y=z^s,
\end{equation}
where $x(z)$ is rational. For definiteness, we assume that $s<0$ (otherwise we can switch the affine coordinate $z$ to $1/z$). The \nonregular{} points in a sense of Definition~\ref{def:singpoint} for this spectral curve data are all finite critical points of~$x$ and also possibly infinity. In fact, the point $z=\infty$ is \regular{} if $x$ has a pole of order $r\ge-s$ at this point. We conclude:

\begin{proposition}
Let $s<0$ and $x(z)$ be a polynomial of degree $r\ge-s$ or a rational function with a pole of order $r\ge-s$ at $z=\infty$. Assume also that all critical points of $x(z)$ are simple and distinct from zero. Then the differentials of CEO topological recursion for the spectral curve~\eqref{eq:y=z^s} are given explicitly by~\eqref{eq:dual-to-trivial}.

Moreover, assume that $x(z)$ depends on an additional parameter~$\tau$ such that $x\bigm|_{\tau=0}=z^r$, e.g. $x(z)=z^r+\tau z$. Then the TR differentials of this family of the spectral curve data have a limit as $\tau\to 0$ and the limiting differentials are differentials of the (generalized) topological recursion with the spectral curve $x=z^r$, $y=z^s$, and are given by the same formula~\eqref{eq:dual-to-trivial} for the limiting function $x=z^r$.
\end{proposition}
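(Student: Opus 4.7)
The plan is to reduce the proof to two already-established ingredients in the paper, namely Theorems~\ref{th:CEO-compatibility}, \ref{th:xyswap}, and~\ref{th:analytic}, together with the closed formula that arises whenever the $x-y$ dual generalized topological recursion is trivial.

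First I would classify the \nonregular{} points of the spectral curve $(x(z),z^s)$. At $z=0$ the function $x$ is regular with non-zero derivative (since by assumption no critical point of $x$ is at $0$), so in the sense of Definition~\ref{def:singpoint} the local exponents are $r_{\mathrm{loc}}=1$ and $s_{\mathrm{loc}}=s$, hence $r_{\mathrm{loc}}+s_{\mathrm{loc}}=1+s\le 0$ because $s\le -1$, and $z=0$ is \regular. At $z=\infty$, using $w=1/z$, the local exponents of $dx$ and $dy$ are $-r$ and $-s$ respectively, whose sum is $-(r+s)\le 0$ by the hypothesis $r\ge -s$, so $z=\infty$ is also \regular. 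Each critical point of $x$ is a simple zero of $dx$ at which $dy=sz^{s-1}\,dz$ is regular and nonvanishing (because such points are distinct from~$0$). This is precisely the standard CEO setup, and Theorem~\ref{th:CEO-compatibility} identifies CEO with generalized topological recursion for the set $\cP$ equal to the critical points of~$x$.

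Since all \nonregular{} points are \xvital, the dual set $\cP^\vee$ is empty. By Theorem~\ref{th:xyswap} the $x-y$ dual differentials satisfy the generalized topological recursion with an empty set of principal poles; on the rational curve $\C P^1$ with the standard $B$ this forces $\omega^{\vee,(g)}_n\equiv 0$ for all $(g,n)$ with $2g-2+n>0$. Substituting this trivial dual data into the definition~\eqref{eq:def-bW-vee} of $\bW^\vee_n$ collapses the graph expansion to contributions from graphs with no edges or with edges carrying only $\tilde\omega^\vee_2$, which yields the determinantal/Brezin--Hikami sum over $n$-cycles on the right-hand side of~\eqref{eq:dual-to-trivial}; this combinatorial identity is already established in~\cite{ABDKS3, BDKS2, ABDKS-logTR-xy}. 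Inverting the $x-y$ swap then produces the claimed closed formula for the $\omega_n$.

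For the second assertion I would apply the compatibility-with-limits Theorem~\ref{th:analytic}. Choose for $U$ a fixed open disk around $z=0$ of some radius $\rho>0$. For $\tau$ in a small neighborhood of~$0$ the $r-1$ finite critical points of $x(z)=z^r+\tau z$, which are the roots of $rz^{r-1}+\tau$, lie on the circle of radius $(|\tau|/r)^{1/(r-1)}$ and therefore sit strictly inside $U$. On $\partial U$ both $dx=(rz^{r-1}+\tau)dz$ and $dy=sz^{s-1}\,dz$ are meromorphic, regular, and nonvanishing uniformly in $\tau$. The \nonregular{} points inside $U$ are exactly the critical points of $x$ for $\tau\ne 0$, and they collapse to the unique limiting \nonregular{} point $z=0$ of $(z^r,z^s)$. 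Theorem~\ref{th:analytic} then yields analyticity in $\tau$, so the TR differentials converge to the generalized topological recursion differentials of $(z^r,z^s)$ with $\cP=\{0\}$. Since the right-hand side of~\eqref{eq:dual-to-trivial} is manifestly analytic in the coefficients of $x(z)$, its limit as $\tau\to 0$ is the same formula with $x=z^r$, completing the argument.

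The main obstacle is purely bookkeeping: making sure that all \nonregular{} points of the deformed curve remain \xvital{} and stay inside $U$ uniformly in~$\tau$, so that Theorem~\ref{th:analytic} genuinely applies. The combinatorial identity that produces~\eqref{eq:dual-to-trivial} from the vanishing of the dual differentials is already encapsulated in the cited works, so no new computation is required there.
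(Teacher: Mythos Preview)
Your proposal is correct and follows essentially the same line as the paper: the proposition is stated there as an immediate consequence of the preceding discussion that all \nonregular{} points are exactly the (simple) critical points of~$x$, whence the dual recursion is trivial and formula~\eqref{eq:dual-to-trivial} applies, while the limiting statement is the content of Example~\ref{ex:rsfirst} via Theorem~\ref{th:analytic}. The only small omission in your case analysis is that for rational~$x$ you should also note that finite poles of~$x$ (including a possible pole at $z=0$) are \regular{} as well, but this is a one-line check and does not affect the argument.
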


Let us consider now a slightly more general situation. Assume that the function $x(z)$ of the spectral curve~\eqref{eq:y=z^s} admits logarithmic singularities. Namely, we only require that the differential $dx$ is rational. In this case along with the zeroes of~$dx$ we have extra \nonregular{} points of the spectral curve data, namely, the ones that are the simple poles of~$dx$. Since they do not contribute to the original CEO topological recursion, we consider these points as \yvital{} in the setup of generalized topological recursion. The $x-y$ dual differentials for this initial data of recursion are not trivial any more but they can easily be computed explicitly, see~\cite{ABDKS-logTR-xy} and~\cite{hock2023xy}. Namely, we have $\omega^{\vee,(g)}_n=0$ for $n\ge2$, $(g,n)\ne(0,2)$, and for $n=1$ we have
\begin{equation}
\omega^{\vee}_1=\hbar^{-1}(\hat x-x)\,dy,
\end{equation}
where the function $\hat x=\hat x(z,\hbar)$ is of the form $\hat x=x+O(\hbar^2)$ and is defined as follows. Let $\{a_1,\dots,a_m\}$ be the list of all \emph{simple nonzero} poles of $dx$, and let $\alpha_1^{-1},\dots,\alpha_m^{-1}$ be the corresponding residues. Then, we set
\begin{equation}
\hat x(z)=x(z)+\sum_{i=1}^m\left(\frac{1}{ \cS(\alpha_i\hbar\partial_y)}-1\right)\frac{\log(z-a_i)}{\alpha_i}.
\end{equation}
Note that we allow $dx$ to have other poles of order greater than one with possibly nonzero residues as well as a simple pole at $z=0$; they do not contribute to $\hat x-x$.
As a corollary, we have:

\begin{proposition}
Let $s<0$ and $dx(z)$ be a rational $1$-form with a pole of order $r+1$ at $z=\infty$ for some $r\ge-s$. Assume also that all zeroes of $x(z)$ are simple and distinct from $z=0$. Then the differentials of CEO topological recursion for the spectral curve~\eqref{eq:y=z^s} are given explicitly by~\eqref{eq:dual-to-trivial} with the following modification of the dual extended differentials:
\begin{equation}
\bW^{\vee}_n(z_{\set n},v_{\set n})=
\prod_{i=1}^n\Bigl( e^{v_i(\cS(v_i\hbar\partial_{y_i})\hat x_i-x_i)}\sqrt{\tfrac{dz^+_i}{dz_i}\tfrac{dz^-_i}{dz_i}}\,dz_i\Bigr)
~(-1)^{n-1}\!\!\sum_{\sigma\in {\rm cycl(n)}}\prod_{i=1}^n\frac1{z_i^+-z_{\sigma(i)}^-}
\end{equation}
\end{proposition}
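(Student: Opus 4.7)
The plan is to reduce this statement to the earlier explicit formula for the ``trivial dual'' case by isolating the correction introduced by the logarithmic singularities of $dx$. First I would view the given data as the initial data $(\C P^1, dx, dy, B, \cP)$ of generalized topological recursion, with $\cP$ chosen to be the zeros of $dx$ and the complementary set $\cP^\vee = \{a_1, \ldots, a_m\}$ consisting of the simple nonzero poles of $dx$. By Theorem~\ref{th:xyswap}, the $x$--$y$ dual differentials $\{\omega^{\vee,(g)}_n\}$ solve generalized topological recursion for $(\C P^1, dy, dx, B, \cP^\vee)$ and so, as meromorphic multi-differentials on $(\C P^1)^n$, can have poles only at the points $a_i$.

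The key step is to apply the LogTR Lemma quoted in Section~\ref{sec:LogTR} (read with the roles of $x$ and $y$ exchanged) at each $a_i$: at a simple pole of $dx$ where $dy$ is holomorphic and nonvanishing, the only nonzero principal part of $\omega^{\vee,(g)}_{n+1}$ in the corresponding variable appears for $n=0$ and is given by the explicit $\cS(\alpha_i \hbar \partial_y)^{-1}$ formula. Consequently $\omega^{\vee,(g)}_n$ is globally holomorphic on $(\C P^1)^n$ for $n \geq 2$ with $(g,n) \neq (0,2)$, hence vanishes; $\omega^{\vee,(0)}_2 = B$; and summing the local contributions at the $a_i$ yields $\omega^\vee_1 = \hbar^{-1}(\hat x - x)\,dy$ with $\hat x$ as displayed in Section~\ref{sec:LogTR}.

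Substituting this collapsed dual system into the graph expansion~\eqref{eq:def-bW-vee} for $\bW^\vee_n$ produces the claimed formula. Multiedges of index $\geq 3$ drop out, those of index $2$ only carry $\omega^{\vee,(0)}_2 = B$ (with the usual diagonal subtraction for self-loops), and those of index $1$ are tadpoles at single vertices; summing any number of such tadpoles at vertex $i$ gives an exponential factor $e^{v_i \cS(v_i \hbar \partial_{y_i})(\hat x_i - x_i)}$, which combines with the explicit prefactor $e^{v_i(\cS(v_i\hbar\partial_{y_i})-1)x_i}$ in~\eqref{eq:def-bW-vee} to produce exactly $e^{v_i(\cS(v_i\hbar\partial_{y_i})\hat x_i - x_i)}$. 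The remaining graph sum, with only $B$-carrying index-$2$ multiedges, is combinatorially identical to the one treated in the trivial-dual case earlier in this section; manipulating the nested $\cS$-shifts on $B$ produces the cyclic product with the $z^\pm$ structure and the $\sqrt{dz^\pm_i/dz_i}$ factors, giving $\bW^\vee_n$ in the form claimed.

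The main substantive point is the second step, namely justifying the global vanishing of the dual differentials for $n \geq 2$ and extracting the precise form of $\omega^\vee_1$; once this is in hand, the remaining combinatorial collapse is a direct adaptation of the argument already used to derive the trivial-dual formula~\eqref{eq:dual-to-trivial}.
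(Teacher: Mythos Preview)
Your argument is correct and matches the paper's approach: the proposition is stated there ``as a corollary'' of the computation of the dual differentials $\omega^\vee_n$ given just before it, and your proposal spells out precisely that corollary argument (identify $\cP^\vee$ with the simple nonzero poles of $dx$, invoke the LogTR lemma with $x$ and $y$ exchanged to obtain $\omega^{\vee,(g)}_n=0$ for $n\ge2$ and $\omega^\vee_1=\hbar^{-1}(\hat x-x)\,dy$, then collapse the graph sum for $\bW^\vee_n$). One small omission worth making explicit: you should invoke Theorem~\ref{th:CEO-compatibility} at the outset to justify that the CEO differentials coincide with the generalized TR differentials for the chosen $\cP$, since the statement is about CEO topological recursion.
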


\subsection{Topological recursion for \texorpdfstring{$(r,s)$}{r,s} spectral curves}\label{S6.3}
In this section, we provide the results of computations of topological recursion with the spectral curve
\begin{equation}\label{eq:rs-spectralcurve}
dx=z^{r-1}dz,\qquad dy=z^{s-1}dz
\end{equation}
for different pairs of integers $(r,s)$. If $r\ne0,s\ne0$, then we have
\begin{equation}
x=z^r/r,\qquad y=z^s/s,
\end{equation}
respectively. The numerical factors for these monomials are not important, it is a matter of normalization. What is essential is that we include the cases $r=0$ or $s=0$ as well when the corresponding functions attain logarithmic singularities.

If $r+s=0$ or $(r,s)=(1,1)$, then the set of \nonregular{} points is empty and the topological recursion is trivial. The case $r+s<0$ can be reduced to the case $r+s>0$ by changing the coordinate $z$ to its inverse. Thus, we can assume without loss of generality that $r+s>0$. In this case $z=0$ is the unique \nonregular{} point of the spectral curve data. We have two options for a choice of the initial data of recursion: this point is either \xvital{} or \yvital. In the first case, the dual TR is trivial and we can make use of Eqs.~\eqref{eq:dual-to-trivial} for the computation of TR differentials. The second case is reduced to the first one by the swap of $x$ with~$y$ and $r$ with~$s$, respectively. Thus, by the topological recursion for the spectral curve~\eqref{eq:rs-spectralcurve} we mean a nontrivial one, with the \nonregular{} point $z=0$ considered as \xvital{} (independently of the signs of~$r$ and~$s$).

The differentials $\omega^{(g)}_n$ of this topological recursion are Laurent polynomial in $z$-variables and their expansion~\eqref{eq:omega-expansion} can be written as 
\begin{equation}
\omega^{(g)}_n=\sum_{k_1,\dots,k_n\ge1}f^{(g)}_{k_1,\dots,k_n}\prod_{i=1}^n d(z_i^{-k_i}),\quad 2g-2+n>0,
\end{equation}
with finitely many nonzero terms. It is convenient to collect the coefficients of these expansions into the corresponding potential
\begin{equation}
F(p_1,p_2,\dots;\hbar)=\sum_{2g-2+n>0}\frac{\hbar^{2g-2+n}}{n!}\sum_{k_1,\dots,k_n\ge1}f^{(g)}_{k_1,\dots,k_n}p_{k_1}\dots p_{k_n}. 
\end{equation}
This potential is a solution of the KP hierarchy for any $(r,s)$, as it follows from Theorem~\ref{thm:KP-integrability}, or, in this particular case, directly from~\cite{ABDKS-logTR-xy}. 
The result of its computation for some pairs $(r,s)$ is presented in the following table.
{\small
\renewcommand{\arraystretch}{1.5}
\begin{longtable}{|l|l|}
\hline
$\quad(x,y)$   &  \qquad\qquad\qquad $F$\\\hline
$\left(\dfrac{z^{-2}}{2},\dfrac{z^3}{3}\right)$ &$\frac{7}{24} p_1 \hbar -\frac{7}{16} p_1^2 \hbar ^2+(\frac{7 }{8}p_1^3+\frac{455 }{1152}p_3) \hbar ^3
  -(\frac{63}{32} p_1^4+\frac{455}{128} p_3 p_1) \hbar ^4
$\\&\qquad $
+(\frac{189}{40} p_1^5+\frac{1365}{64} p_3 p_1^2+\frac{19019}{3072} p_5) \hbar ^5
  +O(\hbar^6)$\\
$\left(\dfrac{z^{-2}}{2},\dfrac{z^4}{4}\right)$ & $\frac{3 }{8} p_2 \hbar+(\frac{1}{8}p_1^4-\frac{5 }{4}p_3 p_1-\frac{3 }{4}p_2^2) \hbar ^2
  +(+3 p_4 p_1^2+10 p_2 p_3 p_1+2 p_2^3+\frac{65}{8} p_6-p_2 p_1^4) \hbar ^3
$\\&\qquad $
  +(\frac{21}{8} p_3 p_1^5+6 p_2^2 p_1^4-\frac{375}{16} p_3^2 p_1^2-36 p_2 p_4 p_1^2-60 p_2^2 p_3 p_1-\frac{3861}{32} p_7 p_1-6 p_2^4
$\\&\qquad $
  -\frac{207}{8} p_4^2-\frac{2079}{32}p_3 p_5-\frac{195}{2} p_2 p_6) \hbar ^4+O(\hbar ^5)$\\
$\left(z^{-1},\dfrac{z^2}{2}\right)$ & $-\frac{1 }{8}p_1 \hbar-\frac{1}{8} p_1^2 \hbar ^2-(\frac{1}{6}p_1^3+\frac{5}{32} p_3) \hbar ^3-(\frac{1}{4}p_1^4+\frac{15 }{16}p_3 p_1+\frac{15 }{64}p_2^2) \hbar ^4
$\\&\qquad $
-(\frac{2 }{5}p_1^5+\frac{15}{4} p_3 p_1^2+\frac{15}{8} p_2^2 p_1+\frac{189}{64} p_5) \hbar ^5+O(\hbar ^6)$\\
$\left(z^{-1},\dfrac{z^3}{3}\right)$ & $-\frac{1}{6}p_2 \hbar+(\frac{1}{12}p_1^4-\frac{1}{3}p_3 p_1-\frac{1}{4}p_2^2) \hbar ^2
-(2 p_2 p_3 p_1-\frac{1}{2} p_2 p_1^4+\frac{1}{2}p_2^3+\frac{35}{18} p_6) \hbar ^3
+(p_3 p_1^5+\frac{9}{4} p_2^2 p_1^4
$\\&\qquad $
+\frac{14}{3} p_5 p_1^3-2 p_3^2 p_1^2-9 p_2^2 p_3 p_1-15 p_7 p_1-\frac{35 }{2}p_2 p_6-\frac{35}{3} p_3 p_5-\frac{9}{8} p_2^4-\frac{35}{8}p_4^2) \hbar ^4
+O(\hbar ^5)$\\
$\left(\log z,z\right)$ & $-\frac{1}{24}p_1 \hbar -\frac{1}{48} p_1^2 \hbar ^2-(\frac{1}{72}p_1^3+\frac{9 }{640}p_3) \hbar ^3-(\frac{1}{96}p_1^4+\frac{27 }{640}p_3 p_1+\frac{1}{90}p_2^2) \hbar ^4
$\\&\qquad $
-(\frac{1}{120}p_1^5+\frac{27}{320} p_3 p_1^2+\frac{2}{45} p_2^2 p_1+\frac{40625}{580608} p_5) \hbar ^5+O(\hbar ^6)$\\
$\left(\log z,\dfrac{z^2}{2}\right)$ & $-\frac{1}{24}p_2 \hbar +(\frac{1}{24}p_1^4-\frac{1}{24}p_2^2) \hbar ^2+(\frac{1}{6} p_2 p_1^4+\frac{1}{3} p_4 p_1^2-\frac{1}{18}p_2^3-\frac{9}{160} p_6) \hbar ^3 +(\frac{9}{40} p_3 p_1^5+\frac{1}{2} p_2^2 p_1^4
$\\&\qquad $
+\frac{125}{72} p_5 p_1^3+\frac{9}{16} p_3^2 p_1^2+2 p_2 p_4 p_1^2+\frac{343}{180} p_7 p_1-\frac{1}{12}p_2^4+\frac{29}{180} p_4^2-\frac{27}{80} p_2 p_6) \hbar ^4+O(\hbar ^5)$\\
%
%
$(z,\log z)$ & $-\frac{1}{24 }p_1\hbar +\frac{7 }{2880}\hbar^3p_3-\frac{31}{40320} \hbar^5p_5+\frac{127}{215040}\hbar^7p_7-\frac{511}{608256}\hbar ^9p_9+O(\hbar ^{11})$\\
$(z,z)$ & $0$\\
$\left(z,\dfrac{z^2}{2}\right)$ & $(\frac{1}{6}p_1^3+\frac{1}{24}p_3) \hbar +(\frac{1}{3} p_3 p_1^3+\frac{1}{4} p_2^2 p_1^2+\frac{1}{2}p_5 p_1+\frac{1}{24}p_3^2+\frac{1}{8}p_2 p_4) \hbar ^2+(\frac{1}{2} p_5 p_1^4+\frac{2}{3} p_3^2 p_1^3
$\\&\qquad $
+p_2 p_4 p_1^3+p_2^2 p_3 p_1^2+\frac{25}{8} p_7 p_1^2+\frac{1}{8} p_2^4 p_1+\frac{3}{4} p_4^2 p_1+2 p_3 p_5 p_1+\frac{5}{2} p_2 p_6 p_1
$\\&\qquad $
+\frac{1}{18}p_3^3+\frac{1}{2} p_2 p_3 p_4+\frac{1}{2} p_2^2 p_5+\frac{91}{48} p_9) \hbar ^3+O(\hbar ^4)$\\
$\left(z,\dfrac{z^3}{3}\right)$ & $(\frac{1}{2} p_2 p_1^2+\frac{1}{12}p_4) \hbar+(\frac{1}{24}p_2^4+p_1 p_3 p_2^2+\frac{3}{2} p_1^2 p_4 p_2+\frac{5}{6} p_6 p_2+\frac{1}{2} p_1^2 p_3^2+\frac{1}{8}p_4^2
$\\&\qquad $
+\frac{2}{3} p_1^3 p_5+\frac{1}{3}p_3 p_5+2 p_1 p_7) \hbar ^2+O(\hbar ^3)$\\
$\left(\dfrac{z^2}{2},z^{-1}\right)$ & $\frac{1}{8}p_1 \hbar+\frac{1}{16} p_1^2 \hbar ^2+(\frac{1}{24}p_1^3+\frac{3}{128} p_3) \hbar ^3+(\frac{1}{32}p_1^4+\frac{9}{128} p_3 p_1) \hbar ^4+(\frac{1}{40}p_1^5+\frac{9}{64} p_3 p_1^2
$\\&\qquad $
+\frac{45}{1024} p_5) \hbar ^5
+(\frac{p_1^6}{48}+\frac{15}{64} p_3 p_1^3+\frac{225 p_5 p_1}{1024}+\frac{63 p_3^2}{1024}) \hbar ^6+O(\hbar ^7)$\\
$\left(\dfrac{z^2}{2},\log z\right)$ & $-\frac{1}{24}p_2 \hbar-(\frac{1}{24}p_1^4+\frac{1}{12}p_3 p_1) \hbar ^2+\frac{7}{720}p_6 \hbar^3
$\\&\qquad $
+(\frac{1}{40} p_3 p_1^5+\frac{1}{12} p_5 p_1^3+\frac{1}{16} p_3^2 p_1^2+\frac{3}{32} p_7 p_1+\frac{23 }{480}p_3 p_5) \hbar ^4
-\frac{31}{2520} p_{10} \hbar^5 +O(\hbar ^6)$\\
$\left(\dfrac{z^2}{2},z\right)$ & $(\frac{1}{6}p_1^3+\frac{1}{24}p_3) \hbar+(\frac{1}{6} p_3 p_1^3+\frac{1}{8}p_5 p_1+\frac{1}{48}p_3^2) \hbar ^2
+(\frac{1}{8} p_5 p_1^4+\frac{1}{6} p_3^2 p_1^3+\frac{5}{16} p_7 p_1^2+\frac{1}{4} p_3 p_5 p_1
$\\&\qquad $
+\frac{1}{72}p_3^3+\frac{35}{384} p_9) \hbar ^3
+(\frac{1}{8} p_7 p_1^5+\frac{3}{8} p_3 p_5 p_1^4+\frac{1}{6} p_3^3 p_1^3+\frac{35}{48} p_9 p_1^3+\frac{3}{8} p_5^2 p_1^2+\frac{15}{16} p_3 p_7 p_1^2
$\\&\qquad $
+\frac{3}{8} p_3^2 p_5 p_1+\frac{105}{128} p_{11} p_1+\frac{1}{96}p_3^4+\frac{29}{128} p_5 p_7+\frac{35}{128} p_3 p_9) \hbar ^4+O(\hbar ^5)$\\
$\left(\dfrac{z^2}{2},\dfrac{z^2}{2}\right)$ & $(\frac{1}{2} p_2 p_1^2+\frac{1}{8}p_4) \hbar +(\frac{1}{2} p_5 p_1^3+\frac{1}{4} p_3^2 p_1^2+p_2 p_4 p_1^2+\frac{1}{2} p_2^2 p_3 p_1+\frac{5}{4} p_7 p_1+\frac{1}{8}p_4^2
$\\&\qquad $
+\frac{1}{4}p_3 p_5+\frac{1}{2}p_2 p_6) \hbar ^2+O(\hbar ^3)$\\
$\left(\dfrac{z^2}{2},\dfrac{z^3}{3}\right)$ & $(\frac{1}{2} p_3 p_1^2+\frac{1}{2} p_2^2 p_1+\frac{5}{24} p_5) \hbar+(\frac{5}{6} p_7 p_1^3+\frac{1}{2} p_4^2 p_1^2+\frac{3}{2} p_3 p_5 p_1^2+2 p_2 p_6 p_1^2+\frac{1}{3} p_3^3 p_1
$\\&\qquad $
+2 p_2 p_3 p_4 p_1+\frac{3}{2} p_2^2 p_5 p_1
+\frac{91}{24} p_9 p_1+\frac{1}{4} p_2^2 p_3^2+\frac{5}{16}p_5^2+\frac{1}{3} p_2^3 p_4+\frac{2}{3} p_4 p_6
$\\&\qquad $
+\frac{25}{24} p_3 p_7+2 p_2 p_8) \hbar ^2+O(\hbar ^3)$\\
$\left(\dfrac{z^3}{3},\dfrac{z^{-2}}{2}\right)$ & $\frac{7}{24} p_1 \hbar +\frac{7}{24} p_1^2 \hbar ^2+\frac{7}{18} p_1^3 \hbar ^3+(\frac{7}{12} p_1^4-\frac{385}{576} p_2^2) \hbar ^4+(\frac{14}{15} p_1^5-\frac{385}{72} p_2^2 p_1-\frac{30107}{10368} p_5) \hbar ^5+O(\hbar ^6)$\\
$\left(\dfrac{z^3}{3},z^{-1}\right)$ & $\frac{1}{6}p_2 \hbar +(\frac{1}{12}p_2^2-\frac{1}{12}p_1^4) \hbar ^2-(\frac{1}{6} p_2 p_1^4+\frac{1}{3} p_4 p_1^2-\frac{1}{18}p_2^3) \hbar ^3
$\\&\qquad $
+(-\frac{1}{4} p_2^2 p_1^4-\frac{5}{9} p_5 p_1^3-p_2 p_4 p_1^2-\frac{7}{9} p_7 p_1+\frac{1}{24}p_2^4-\frac{13}{72} p_4^2) \hbar ^4+O(\hbar ^5)$\\
$(\dfrac{z^3}{3},\log z)$ & $(\frac{1}{6}p_1^3-\frac{1}{24}p_3) \hbar +(-\frac{1}{4} p_1^2 p_2^2-\frac{1}{8}p_4 p_2-\frac{1}{4}p_1 p_5) \hbar ^2+(-\frac{1}{12} p_5 p_1^4-\frac{1}{3} p_2 p_4 p_1^3-\frac{5}{12} p_7 p_1^2
$\\&\qquad $
+\frac{1}{24} p_2^4 p_1-\frac{1}{6} p_4^2 p_1+\frac{1}{12} p_2^2 p_5+\frac{7}{320} p_9) \hbar ^3+O(\hbar ^4)$\\
$(\dfrac{z^3}{3},z)$ & $(\frac{1}{2} p_2 p_1^2+\frac{1}{12}p_4) \hbar +(-\frac{1}{24}p_2^4+\frac{1}{2} p_1^2 p_4 p_2+\frac{1}{24}p_4^2+\frac{1}{3} p_1^3 p_5+\frac{1}{3}p_1 p_7) \hbar ^2
$\\&\qquad $
+(\frac{5}{12} p_8 p_1^4+\frac{2}{3} p_4 p_5 p_1^3+\frac{2}{3} p_2 p_7 p_1^3+\frac{1}{2} p_2 p_4^2 p_1^2+\frac{7}{6} p_{10} p_1^2-\frac{1}{3} p_2^3 p_5 p_1
+\frac{2}{3} p_4 p_7 p_1+\frac{1}{36}p_4^3
$\\&\qquad $
-\frac{1}{6} p_2 p_5^2-\frac{1}{12} p_2^4 p_4-\frac{5}{12} p_2^2 p_8) \hbar ^3+O(\hbar ^4)$\\
$(\dfrac{z^3}{3},\dfrac{z^2}{2})$ & $(\frac{1}{2} p_3 p_1^2+\frac{1}{2} p_2^2 p_1+\frac{5}{24} p_5) \hbar +(\frac{2}{3} p_7 p_1^3+\frac{1}{4} p_4^2 p_1^2+p_3 p_5 p_1^2+\frac{3}{2} p_2 p_6 p_1^2+\frac{1}{6} p_3^3 p_1+p_2^2 p_5 p_1
$\\&\qquad $
+p_2 p_3 p_4 p_1+\frac{9}{4} p_9 p_1+\frac{5}{24} p_5^2+\frac{1}{6} p_2^3 p_4+\frac{3}{8} p_4 p_6+\frac{1}{2}p_3 p_7+\frac{25}{24} p_2 p_8) \hbar ^2+O(\hbar ^3)$\\
\hline
\end{longtable}}

The cases with $s=\pm1$ are well studied. They correspond to the Kontsevich--Witten (for $(r,s)=(2,1)$) and Br{\'e}sin--Gross--Witten (for $(r,s)=(2,-1)$) tau functions and their direct generalizations for the higher $r$'s.  These potentials have natural intersection theory interpretation and, in particular, govern the intersections of Witten and Norbury classes. Moreover, they are described by the Kontsevich-type matrix models and satisfy the KP inspired  W-constraints.  For more details see, e.g., \cite{W,K,MMS,EO-1st,Norbury,AD,chidambaram2023relationsoverlinemathcalmgnnegativerspin}.


In the case $r\geq 2$ and $-r<s\leq -1$ the corresponding expansions give intersection numbers with the so-called $\Theta^{(r,s)}$-classes, as it is shown in~\cite{CGS}. An interesting question is whether there is an interpretation for other valued of $s$, for instance $s=0$. 


If $r\geq 2$ and $-r<s\le1$ then the potential satisfies the $r$-reduction property $\frac{\partial F}{\partial p_{kr}}=0$, $k\ge 1$, except that for $s=0$ these partial derivatives are constants. However, for any $s>1$ the potential is still well defined and nontrivial (even when $s$ is a multiple of~$r$), satisfies KP equations, but does not
show up any reduction property.

Moreover, from the Baker--Akhiezer kernel description \eqref{BAK} it immediately follows that many of these tau functions can be described by the Kontsevich-type matrix models, and the construction \cite[Section 3.5]{ABDKS3} can be applied. Properties of these tau functions can be investigated by the standard matrix models and KP integrability tools. We are not aware of any enumerative meaning of these KP solutions neither for $r>1$ and $s>1$ nor for $r\le1$. 

\subsection{Topological recursion for CohFT with the \texorpdfstring{$\Omega$}{Omega} classes}

Consider the following spectral curve
\begin{equation}\label{eq:Chiodo-TR}
x=\log z-z^r,\quad y=z^s
\end{equation}
for some positive integers $r,s\in\Z_{>0}$. The differentials of the original CEO topological recursion for this spectral curve describe the intersection numbers with the so-called $\Omega$-classes, also called the Chiodo classes in the literature, that compute the total Chern classes of the derived pushforwards of the universal $r$-th roots of the twisted $s$-powers of the log-canonical sheaf~\cite{ABDKS-log-sympl}. 

The point $z=0$ is a unique \nonregular{} point of this spectral curve data which is not a zero of $dx$. We set the point $z=0$ being \yvital{} and all other special points to be \xvital. It follows that the dual differentials $\omega^{\vee,(g)}_n$ are not trivial but they are still much simpler than the differentials of the original topological recursion: they have poles at the origin only and thus they are Laurent polynomial in $z$-variables. We can compute the dual differentials $\omega^{\vee,(g)}_n$ by applying directly the definition of the generalized topological recursion and then obtain the original TR differentials $\omega^{(g)}_n$ by applying $x-y$ swap relations. This leads to the computation of the differentials for the spectral curve~\eqref{eq:Chiodo-TR} that does not require finding explicitly positions of critical points of the~$x$ function, the deck transformation at these points, etc. 
This example is further discussed in~\cite{CGS}. 


As a concluding remark, it is natural to expect that the generalized topological recursion has the same close connection to the intersection theory on the moduli spaces of curves as the original CEO one~\cite{Eynard-Intersections}, in particular, that is has a connection to some versions of cohomological field theories as in~\cite{DOSS}.

\section{Statements and declarations}

\begin{enumerate}
	\item The authors have no financial or non-financial interests that are directly or indirectly related to the work submitted for publication.
	\item There is no data collected, produced or analyzed within the research related to the paper.
\end{enumerate}

\printbibliography

@misc{ABDKS1,
	title={A universal formula for the $x-y$ swap in topological recursion}, 
	author={Alexander Alexandrov and Boris Bychkov and Petr Dunin-Barkowski and Maxim Kazarian and Sergey Shadrin},
	year={2022},
	eprint={2212.00320},
	archivePrefix={arXiv},
	primaryClass={math-ph},
	addendum   =  {(to appear in J. Eur. Math. Soc.)}
}

@misc{ABDKS3,
	Author = {Alexander Alexandrov and Boris Bychkov and Petr Dunin-Barkowski and Maxim Kazarian and Sergey Shadrin},
	Title = {KP integrability through the $x-y$ swap relation},
	Year = {2023},
	Eprint = {2309.12176},
	Eprinttype = {arXiv},
	primaryClass={math-ph},
	addendum   =  {(to appear in Sel. Math. New Ser.)}
}

@article {ABDKS-logTR-xy,
	AUTHOR = {Alexandrov, A. and Bychkov, B. and Dunin-Barkowski, P. and
	Kazarian, M. and Shadrin, S.},
	title = {Log topological recursion through the prism of x-y swap},
	JOURNAL = {Int. Math. Res. Not. IMRN},
	FJOURNAL = {International Mathematics Research Notices. IMRN},
	YEAR = {2024},
	NUMBER = {21},
	PAGES = {13461--13487},
	ISSN = {1073-7928,1687-0247},
	MRCLASS = {14H81},
	MRNUMBER = {4819863},
	DOI = {10.1093/imrn/rnae213},
	URL = {https://doi.org/10.1093/imrn/rnae213},
	
	xxarchiveprefix = {arXiv},
	xxeprint = {2312.16950},
	xxprimaryclass = {math-ph},
}

@Article{ABDKS-log-sympl,
	Author = {Alexandrov, Alexander and Bychkov, Boris and Dunin-Barkowski, Petr and Kazarian, Maxim and Shadrin, Sergey},
	Title = {Symplectic duality via log topological recursion},
	FJournal = {Communications in Number Theory and Physics},
	Journal = {Commun. Number Theory Phys.},
	ISSN = {1931-4523},
	Volume = {18},
	Number = {4},
	Pages = {795--841},
	Year = {2024},
	Language = {English},
	DOI = {10.4310/CNTP.241203001416},
	Keywords = {14H81,05A15,37K10,14H30,14N10},
	zbMATH = {7955389},
	
	xxeprint={2405.10720},
	xxarchivePrefix={arXiv},
	xxprimaryClass={math-ph},
}

@article {alexandrov2023topologicalrecursionsymplecticduality,
	AUTHOR = {Alexandrov, A. and Bychkov, B. and Dunin-Barkowski, P. and
	Kazarian, M. and Shadrin, S.},
	TITLE = {Topological recursion, symplectic duality, and generalized
	fully simple maps},
	JOURNAL = {J. Geom. Phys.},
	FJOURNAL = {Journal of Geometry and Physics},
	VOLUME = {206},
	YEAR = {2024},
	PAGES = {Paper No. 105329, 13},
	ISSN = {0393-0440},
	MRCLASS = {14H81 (30F99 37J06 53D99 81T45)},
	MRNUMBER = {4804150},
	DOI = {10.1016/j.geomphys.2024.105329},
	URL = {https://doi.org/10.1016/j.geomphys.2024.105329},
	
	xxeprint={2304.11687},
	xxarchivePrefix={arXiv},
	xxprimaryClass={math-ph},
}

@misc{alexandrov2024topologicalrecursionrationalspectral,
	title={Any topological recursion on a rational spectral curve is KP integrable}, 
	author={Alexander Alexandrov and Boris Bychkov and Petr Dunin-Barkowski and Maxim Kazarian and Sergey Shadrin},
	year={2024},
	eprint={2406.07391},
	archivePrefix={arXiv},
	primaryClass={math-ph},
	url={https://arxiv.org/abs/2406.07391}, 
}

@article {AD,
	AUTHOR = {Alexandrov, Alexander and Dhara, Saswati},
	TITLE = {On {H}igher {B}r\'ezin--{G}ross--{W}itten {T}au {F}unctions},
	JOURNAL = {Int. Math. Res. Not. IMRN},
	FJOURNAL = {International Mathematics Research Notices. IMRN},
	YEAR = {2025},
	NUMBER = {2},
	PAGES = {rnae286},
	ISSN = {1073-7928,1687-0247},
	MRCLASS = {99-06},
	MRNUMBER = {4852745},
	DOI = {10.1093/imrn/rnae286},
	URL = {https://doi.org/10.1093/imrn/rnae286},
	
	xxarchiveprefix = {arXiv},
	xxeprint = {2204.12273},
	xxprimaryclass = {math-ph},
}

@article {BBCCN,
	AUTHOR = {Borot, Ga\"{e}tan and Bouchard, Vincent and Chidambaram, Nitin K.
	and Creutzig, Thomas and Noshchenko, Dmitry},
	TITLE = {Higher {A}iry structures, {$\mathcal W$} algebras and topological
	recursion},
	JOURNAL = {Mem. Amer. Math. Soc.},
	FJOURNAL = {Memoirs of the American Mathematical Society},
	VOLUME = {296},
	YEAR = {2024},
	NUMBER = {1476},
	PAGES = {v+108},
	ISSN = {0065-9266},
	ISBN = {978-1-4704-6906-1; 978-1-4704-7813-1},
	MRCLASS = {81R10 (14H81)},
	MRNUMBER = {4744795},
	DOI = {10.1090/memo/1476},
	URL = {https://doi.org/10.1090/memo/1476},
}

@misc{limits,
	title={Taking limits in topological recursion},
	author={Gaëtan Borot and Vincent Bouchard and Nitin Kumar Chidambaram and Reinier Kramer and Sergey Shadrin},
	year={2023},
	eprint={2309.01654},
	archivePrefix={arXiv},
	primaryClass={math.AG}
}

@misc{borot2023functional,
	title={Functional relations for higher-order free cumulants},
	author={Gaëtan Borot and Séverin Charbonnier and Elba Garcia-Failde and Felix Leid and Sergey Shadrin},
	year={2023},
	eprint={2112.12184},
	archivePrefix={arXiv},
	primaryClass={math.OA}
}

@article {BEO-loop,
	AUTHOR = {Borot, Ga\"{e}tan and Eynard, Bertrand and Orantin, Nicolas},
	TITLE = {Abstract loop equations, topological recursion and new
	applications},
	JOURNAL = {Commun. Number Theory Phys.},
	FJOURNAL = {Communications in Number Theory and Physics},
	VOLUME = {9},
	YEAR = {2015},
	NUMBER = {1},
	PAGES = {51--187},
	ISSN = {1931-4523},
	MRCLASS = {81T45},
	MRNUMBER = {3339853},
	DOI = {10.4310/CNTP.2015.v9.n1.a2},
	URL = {https://doi.org/10.4310/CNTP.2015.v9.n1.a2},
}

@article {BS-blobbed,
	AUTHOR = {Borot, Ga\"{e}tan and Shadrin, Sergey},
	TITLE = {Blobbed topological recursion: properties and applications},
	JOURNAL = {Math. Proc. Cambridge Philos. Soc.},
	FJOURNAL = {Mathematical Proceedings of the Cambridge Philosophical
	Society},
	VOLUME = {162},
	YEAR = {2017},
	NUMBER = {1},
	PAGES = {39--87},
	ISSN = {0305-0041},
	MRCLASS = {14H81 (32G99 81T30)},
	MRNUMBER = {3581899},
	MRREVIEWER = {Xiaobin Li},
	DOI = {10.1017/S0305004116000323},
	URL = {https://doi.org/10.1017/S0305004116000323},
}

@article {BE,
    AUTHOR = {Bouchard, Vincent and Eynard, Bertrand},
     TITLE = {Think globally, compute locally},
   JOURNAL = {J. High Energy Phys.},
  FJOURNAL = {Journal of High Energy Physics},
      YEAR = {2013},
    NUMBER = {2},
     PAGES = {143, front matter + 34},
      ISSN = {1126-6708,1029-8479},
   MRCLASS = {81T45 (30F10 32G15)},
  MRNUMBER = {3046532},
MRREVIEWER = {Lee-Peng\ Teo},
       DOI = {10.1007/JHEP02(2013)143},
       URL = {https://doi.org/10.1007/JHEP02(2013)143},
}

@Article{BDKS1,
	Author = {Bychkov, Boris and Dunin-Barkowski, Petr and Kazarian, Maxim and Shadrin, Sergey},
	Title = {Explicit closed algebraic formulas for {Orlov}-{Scherbin} {{\(n\)}}-point functions},
	FJournal = {Journal de l'{\'E}cole Polytechnique -- Math{\'e}matiques},
	Journal = {J. {\'E}c. Polytech., Math.},
	ISSN = {2429-7100},
	Volume = {9},
	Pages = {1121--1158},
	Year = {2022},
	DOI = {10.5802/jep.202},
	Keywords = {37K20,37K30,37K10,14H30,14N10,05A15},
	zbMATH = {7559604},
	Zbl = {1504.37080}
}

@article{BDKS2,
	author = {Bychkov, Boris and Dunin-Barkowski, Petr and Kazarian, Maxim and Shadrin, Sergey},
	title = {Topological recursion for Kadomtsev–Petviashvili tau functions of hypergeometric type},
	journal = {J. Lond. Math. Soc.},
	volume = {109},
	number = {6},
	pages = {e12946},
	doi = {10.1112/jlms.12946},
	url = {https://doi.org/10.1112/jlms.12946},
	year = {2024}
}

@article {BDKS3,
	AUTHOR = {Bychkov, Boris and Dunin-Barkowski, Petr and Kazarian, Maxim
	and Shadrin, Sergey},
	TITLE = {Generalised ordinary vs fully simple duality for {$n$}-point
	functions and a proof of the {B}orot-{G}arcia-{F}ailde
	conjecture},
	JOURNAL = {Comm. Math. Phys.},
	FJOURNAL = {Communications in Mathematical Physics},
	VOLUME = {402},
	YEAR = {2023},
	NUMBER = {1},
	PAGES = {665--694},
	ISSN = {0010-3616,1432-0916},
	MRCLASS = {14H81 (46N50 57K20)},
	MRNUMBER = {4616685},
	MRREVIEWER = {Vida\ Milani},
	DOI = {10.1007/s00220-023-04732-7},
	URL = {https://doi.org/10.1007/s00220-023-04732-7},
}

@article {bychkov2023symplecticdualitytopologicalrecursion,
	AUTHOR = {Bychkov, Boris and Dunin-Barkowski, Petr and Kazarian, Maxim
	and Shadrin, Sergey},
	TITLE = {Symplectic duality for topological recursion},
	JOURNAL = {Trans. Amer. Math. Soc.},
	FJOURNAL = {Transactions of the American Mathematical Society},
	VOLUME = {378},
	YEAR = {2025},
	NUMBER = {2},
	PAGES = {1001--1054},
	ISSN = {0002-9947},
	MRCLASS = {05A15 (05E14 14N10 37K10 81T45)},
	MRNUMBER = {4850433},
	DOI = {10.1090/tran/9352},
	URL = {https://doi.org/10.1090/tran/9352},
	
	xxeprint={2206.14792},
	xxarchivePrefix={arXiv},
	xxprimaryClass={math-ph},
}

@article {BDS-BMSnumbers,
	AUTHOR = {Bychkov, B. and Dunin-Barkowski, P. and Shadrin, S.},
	TITLE = {Combinatorics of {B}ousquet-{M}\'{e}lou-{S}chaeffer numbers in the
	light of topological recursion},
	JOURNAL = {European J. Combin.},
	FJOURNAL = {European Journal of Combinatorics},
	VOLUME = {90},
	YEAR = {2020},
	PAGES = {103184, 35},
	ISSN = {0195-6698},
	MRCLASS = {05E14 (05A05 05A15 14H81)},
	MRNUMBER = {4125529},
	DOI = {10.1016/j.ejc.2020.103184},
	URL = {https://doi.org/10.1016/j.ejc.2020.103184},
}

@article {CEO,
	AUTHOR = {Chekhov, Leonid and Eynard, Bertrand and Orantin, Nicolas},
	TITLE = {Free energy topological expansion for the 2-matrix model},
	JOURNAL = {J. High Energy Phys.},
	FJOURNAL = {Journal of High Energy Physics. A SISSA Journal},
	YEAR = {2006},
	NUMBER = {12},
	PAGES = {053, 31},
	ISSN = {1126-6708},
	MRCLASS = {81T45 (15A52)},
	MRNUMBER = {2276699},
	MRREVIEWER = {Gernot Akemann},
	DOI = {10.1088/1126-6708/2006/12/053},
	URL = {https://doi.org/10.1088/1126-6708/2006/12/053},
}

@article {CN,
    AUTHOR = {Chekhov, Leonid and Norbury, Paul},
     TITLE = {Topological recursion with hard edges},
   JOURNAL = {Internat. J. Math.},
  FJOURNAL = {International Journal of Mathematics},
    VOLUME = {30},
      YEAR = {2019},
    NUMBER = {3},
     PAGES = {1950014, 29},
      ISSN = {0129-167X,1793-6519},
   MRCLASS = {14N10 (14H81 32G15)},
  MRNUMBER = {3941980},
MRREVIEWER = {Piotr\ Su\l kowski},
       DOI = {10.1142/S0129167X19500149},
       URL = {https://doi.org/10.1142/S0129167X19500149},
}

@misc{chidambaram2023relationsoverlinemathcalmgnnegativerspin,
	title={Relations on $\overline{\mathcal{M}}_{g,n}$ and the negative $r$-spin Witten conjecture}, 
	author={Nitin Kumar Chidambaram and Elba Garcia-Failde and Alessandro Giacchetto},
	year={2023},
	eprint={2205.15621},
	archivePrefix={arXiv},
	primaryClass={math.AG},
	url={https://arxiv.org/abs/2205.15621}, 
}

@misc{CGS,
	title={A paper in preparation.},
	author={Vincent Bouchard and Nitin Kumar Chidambaram and Alessandro Giacchetto and Sergey Shadrin},
	year={2025},
}

@article {DOSS,
	AUTHOR = {Dunin-Barkowski, P. and Orantin, N. and Shadrin, S. and Spitz,
	L.},
	TITLE = {Identification of the {G}ivental formula with the spectral
	curve topological recursion procedure},
	JOURNAL = {Comm. Math. Phys.},
	FJOURNAL = {Communications in Mathematical Physics},
	VOLUME = {328},
	YEAR = {2014},
	NUMBER = {2},
	PAGES = {669--700},
	ISSN = {0010-3616},
	MRCLASS = {81T45 (14N35 53D45)},
	MRNUMBER = {3199996},
	MRREVIEWER = {Wan Keng Cheong},
	DOI = {10.1007/s00220-014-1887-2},
	URL = {https://doi.org/10.1007/s00220-014-1887-2},
}

@article {Eynard-Intersections,
	AUTHOR = {Eynard, B.},
	TITLE = {Invariants of spectral curves and intersection theory of
	moduli spaces of complex curves},
	JOURNAL = {Commun. Number Theory Phys.},
	FJOURNAL = {Communications in Number Theory and Physics},
	VOLUME = {8},
	YEAR = {2014},
	NUMBER = {3},
	PAGES = {541--588},
	ISSN = {1931-4523},
	MRCLASS = {14H10 (11G05 14C17 32G15)},
	MRNUMBER = {3282995},
	MRREVIEWER = {Letterio Gatto},
	DOI = {10.4310/CNTP.2014.v8.n3.a4},
	URL = {https://doi.org/10.4310/CNTP.2014.v8.n3.a4},
}

@article {EO-1st,
	AUTHOR = {Eynard, B. and Orantin, N.},
	TITLE = {Invariants of algebraic curves and topological expansion},
	JOURNAL = {Commun. Number Theory Phys.},
	FJOURNAL = {Communications in Number Theory and Physics},
	VOLUME = {1},
	YEAR = {2007},
	NUMBER = {2},
	PAGES = {347--452},
	ISSN = {1931-4523},
	MRCLASS = {14H15 (14N35 32A27 37K10 37K20 81T45)},
	MRNUMBER = {2346575},
	MRREVIEWER = {Vincent Bouchard},
	DOI = {10.4310/CNTP.2007.v1.n2.a4},
	URL = {https://doi.org/10.4310/CNTP.2007.v1.n2.a4},
}

@article {EO-xy,
	AUTHOR = {Eynard, B. and Orantin, N.},
	TITLE = {Topological expansion of mixed correlations in the {H}ermitian
	2-matrix model and {$x$}-{$y$} symmetry of the {$F_g$}
	algebraic invariants},
	JOURNAL = {J. Phys. A},
	FJOURNAL = {Journal of Physics. A. Mathematical and Theoretical},
	VOLUME = {41},
	YEAR = {2008},
	NUMBER = {1},
	PAGES = {015203, 28},
	ISSN = {1751-8113,1751-8121},
	MRCLASS = {14H15 (14N10 14N35 82B41)},
	MRNUMBER = {2450700},
	MRREVIEWER = {Brad\ Safnuk},
	DOI = {10.1088/1751-8113/41/1/015203},
	URL = {https://doi.org/10.1088/1751-8113/41/1/015203},
}

@article {hock2022xy,
	AUTHOR = {Hock, Alexander},
	TITLE = {On the {$x$}-{$y$} {S}ymmetry of {C}orrelators in
	{T}opological {R}ecursion via {L}oop {I}nsertion {O}perator},
	JOURNAL = {Comm. Math. Phys.},
	FJOURNAL = {Communications in Mathematical Physics},
	VOLUME = {405},
	YEAR = {2024},
	NUMBER = {7},
	PAGES = {Paper No. 166},
	ISSN = {0010-3616,1432-0916},
	MRCLASS = {46L54 (05 15B57)},
	MRNUMBER = {4768536},
	DOI = {10.1007/s00220-024-05043-1},
	URL = {https://doi.org/10.1007/s00220-024-05043-1},
}

@article {hock2023laplace,
	AUTHOR = {Hock, Alexander},
	TITLE = {Laplace transform of the {$x - y$} symplectic transformation
	formula in topological recursion},
	JOURNAL = {Commun. Number Theory Phys.},
	FJOURNAL = {Communications in Number Theory and Physics},
	VOLUME = {17},
	YEAR = {2023},
	NUMBER = {4},
	PAGES = {821--845},
	ISSN = {1931-4523,1931-4531},
	MRCLASS = {14N10 (05A15 14H70 14H81 30F30)},
	MRNUMBER = {4704940},
	DOI = {10.4310/CNTP.2023.v17.n4.a1},
}

@article {hock2022simple,
	AUTHOR = {Hock, Alexander},
	TITLE = {A simple formula for the {$x$}-{$y$} symplectic transformation
	in topological recursion},
	JOURNAL = {J. Geom. Phys.},
	FJOURNAL = {Journal of Geometry and Physics},
	VOLUME = {194},
	YEAR = {2023},
	PAGES = {Paper No. 105027, 26},
	ISSN = {0393-0440,1879-1662},
	MRCLASS = {46L54 (15B52 16R60)},
	MRNUMBER = {4659813},
	DOI = {10.1016/j.geomphys.2023.105027},
	URL = {https://doi.org/10.1016/j.geomphys.2023.105027},
}

@article {hock2023xy,
	AUTHOR = {Hock, Alexander},
	TITLE = {{$x - y$} duality in topological recursion for exponential
	variables via quantum dilogarithm},
	JOURNAL = {SciPost Phys.},
	FJOURNAL = {SciPost Physics},
	VOLUME = {17},
	YEAR = {2024},
	NUMBER = {2},
	PAGES = {Paper No. 065, 36},
	ISSN = {2542-4653},
	MRCLASS = {14H81 (05E14 32G34 33B15 81T30)},
	MRNUMBER = {4794300},
	DOI = {10.21468/SciPostPhys.17.2.065},
}

@article {Kharchev-Marshakov,
	AUTHOR = {Kharchev, S. and Marshakov, A.},
	TITLE = {On {$p$}-{$q$} duality and explicit solutions in {$c\leq 1$}
	{$2$}D gravity models},
	JOURNAL = {Internat. J. Modern Phys. A},
	FJOURNAL = {International Journal of Modern Physics A. Particles and
	Fields. Gravitation. Cosmology},
	VOLUME = {10},
	YEAR = {1995},
	NUMBER = {8},
	PAGES = {1219--1236},
	ISSN = {0217-751X,1793-656X},
	MRCLASS = {81T30 (81T40)},
	MRNUMBER = {1321929},
	MRREVIEWER = {Igor\ Polyubin},
	DOI = {10.1142/S0217751X95000577},
	URL = {https://doi.org/10.1142/S0217751X95000577},

	xxeprint={hep-th/9303100},
	xxarchivePrefix={arXiv},
}

@article {K,
    AUTHOR = {Kontsevich, Maxim},
     TITLE = {Intersection theory on the moduli space of curves and the
              matrix {A}iry function},
   JOURNAL = {Comm. Math. Phys.},
  FJOURNAL = {Communications in Mathematical Physics},
    VOLUME = {147},
      YEAR = {1992},
    NUMBER = {1},
     PAGES = {1--23},
      ISSN = {0010-3616,1432-0916},
   MRCLASS = {32G15 (14H15 58F07 81T40)},
  MRNUMBER = {1171758},
MRREVIEWER = {Claude\ Itzykson},
       URL = {http://projecteuclid.org/euclid.cmp/1104250524},
       DOI = {10.1007/BF02099526},
}

@article {LPSZ,
	AUTHOR = {Lewanski, Danilo and Popolitov, Alexandr and Shadrin, Sergey
	and Zvonkine, Dimitri},
	TITLE = {Chiodo formulas for the {$r$}-th roots and topological
	recursion},
	JOURNAL = {Lett. Math. Phys.},
	FJOURNAL = {Letters in Mathematical Physics},
	VOLUME = {107},
	YEAR = {2017},
	NUMBER = {5},
	PAGES = {901--919},
	ISSN = {0377-9017},
	MRCLASS = {14H10 (14N10 14N35)},
	MRNUMBER = {3633029},
	MRREVIEWER = {Fabio Perroni},
	DOI = {10.1007/s11005-016-0928-5},
	URL = {https://doi.org/10.1007/s11005-016-0928-5},
}

@article {MMS,
    AUTHOR = {Mironov, A. and Morozov, A. and Semenoff, G. W.},
     TITLE = {Unitary matrix integrals in the framework of the generalized
              {K}ontsevich model},
   JOURNAL = {Internat. J. Modern Phys. A},
  FJOURNAL = {International Journal of Modern Physics A. Particles and
              Fields. Gravitation. Cosmology},
    VOLUME = {11},
      YEAR = {1996},
    NUMBER = {28},
     PAGES = {5031--5080},
      ISSN = {0217-751X,1793-656X},
   MRCLASS = {81T40 (58C35 58F07)},
  MRNUMBER = {1411777},
MRREVIEWER = {Igor\ Polyubin},
       DOI = {10.1142/S0217751X96002339},
       URL = {https://doi.org/10.1142/S0217751X96002339},
}

@article {Norbury,
	AUTHOR = {Norbury, Paul},
	TITLE = {A new cohomology class on the moduli space of curves},
	JOURNAL = {Geom. Topol.},
	FJOURNAL = {Geometry \& Topology},
	VOLUME = {27},
	YEAR = {2023},
	NUMBER = {7},
	PAGES = {2695--2761},
	ISSN = {1465-3060,1364-0380},
	MRCLASS = {14D23 (32G15 53D45)},
	MRNUMBER = {4645485},
	DOI = {10.2140/gt.2023.27.2695},
	URL = {https://doi.org/10.2140/gt.2023.27.2695},
}

@incollection {W,
    AUTHOR = {Witten, Edward},
     TITLE = {Two-dimensional gravity and intersection theory on moduli
              space},
 BOOKTITLE = {Surveys in differential geometry ({C}ambridge, {MA}, 1990)},
     PAGES = {243--310},
 PUBLISHER = {Lehigh Univ., Bethlehem, PA},
      YEAR = {1991},
      ISBN = {0-8218-0168-6},
   MRCLASS = {32G15 (14C17 14H15 32G81 58F07 81T40)},
  MRNUMBER = {1144529},
MRREVIEWER = {Steven\ Rosenberg},
}

\end{document}